\newcommand{\eat}[1]{}
\newcommand{\A}[1]{#1} %attribute
\newcommand{\E}[1]{\emph{#1}}
\newcommand{\R}[1]{\textbf{#1}}
\newcommand{\emptySet}{\brac{}}
\newcommand{\brac}[1]{\{#1\}}
\newcommand{\blue}[1]{\textcolor{black}{#1}}
\newcommand{\cyan}[1]{\textcolor{cyan}{#1}}
\newcommand{\red}[1]{\textcolor{red}{#1}}
\newcommand{\fei}[1]{\noindent \red{Comment(Fei): #1}}
\newcommand{\reviseOne}[1]{\noindent \textcolor{black}{#1}}
\newcommand{\reviseTwo}[1]{\noindent \textcolor{black}{#1}}
\newcommand{\reviseThree}[1]{\noindent \textcolor{black}{#1}}
\newcommand{\rev}[1]{\noindent \textcolor{purple}{#1}}
\newcommand{\lambdaphi}{$\Lambda({\phi})$}
\newcommand{\Omegalambda}{$\Omega(\lambda)$}
\newcommand{\rhoomega}{$\rho_{\Omega, \lambda}$}
\newcommand{\cost}{$\mathcal{C}$}
\newcommand{\eatTR}[1]{#1} %Line 3 (adds proofs)
\newcommand{\eatNTR}[1]{}  %Line 4 (eats reference to TR)
\newcommand{\set}[1]{#1}
\newcommand{\tbf}{\textbf{\textcolor{red}{X}}\xspace}
\newcommand{\lit}[1]{{\sl #1}}
\newtheorem{example}{Example}[section]
\newtheorem{definition}{Definition}
\newtheorem{theorem}{Theorem}[section]
\newtheorem{lemma}[theorem]{Lemma}
\newcommand{\ofdclean}{${\sf OFDClean}$\xspace}
\newcommand{\fastofd}{${\sf FastOFD}$\xspace}
  \providecommand\BibTeX{{%
    \normalfont B\kern-0.5em{\scshape i\kern-0.25em b}\kern-0.8em\TeX}}}
\begin{document}

%for TR 
\fancyhead{}
\fancyfoot{}

%%
%% The "title" command has an optional parameter,
%% allowing the author to define a "short title" to be used in page headers.
\title{Discovery and Contextual Data Cleaning with Ontology Functional Dependencies}

%%
%% The "author" command and its associated commands are used to define
%% the authors and their affiliations.
%% Of note is the shared affiliation of the first two authors, and the
%% "authornote" and "authornotemark" commands
%% used to denote shared contribution to the research.
\author{Zheng Zheng}
\email{zhengz13@mcmaster.ca}
\affiliation{%
  \institution{McMaster University}
    \country{Canada}
}
\author{Longtao Zheng}
\authornote{Work done as an intern at McMaster University.}
\email{zlt0116@mail.ustc.edu.cn}
\affiliation{%
  \institution{University of Science and Technology of China}
  \country{China}
}
\author{Morteza Alipourlangouri}
\email{alipoum@mcmaster.ca}
\affiliation{%
  \institution{McMaster University}
    \country{Canada}
}
\author{Fei Chiang}
\email{fchiang@mcmaster.ca}
\affiliation{%
  \institution{McMaster University}
      \country{Canada}
}
\author{Lukasz Golab}
\email{lgolab@uwaterloo.ca}
\affiliation{%
  \institution{University of Waterloo}
      \country{Canada}
}
\author{Jaroslaw~Szlichta}
\email{jarek@ontariotechu.ca}
\affiliation{%
  \institution{Ontario Tech University}
      \country{Canada}
}
%\author{Sridevi Baskaran}
%\email{baskas@mcmaster.ca}
%\affiliation{%
%  \institution{McMaster University}
%      \country{Canada}
%}
%\author{Alexander Keller}
%\email{alexander.keller@ontariotechu.ca}
%\affiliation{%
%  \institution{Ontario Tech University}
%      \country{Canada}
%}

\begin{abstract}
Functional Dependencies (FDs) define attribute relationships based on syntactic equality, and, when used in data cleaning, they erroneously label syntactically different but semantically equivalent values as errors. We explore dependency-based data cleaning with \emph{Ontology Functional Dependencies} (OFDs), which express semantic attribute relationships such as synonyms \eat{and is-a hierarchies}defined by an ontology.  
We study the theoretical foundations of OFDs, including sound and complete axioms and a linear-time inference procedure.  We then propose an algorithm for discovering OFDs (exact ones and ones that hold with some exceptions) from data that uses the axioms to prune the search space.
%exponential search space in the number of attributes.  
Towards enabling OFDs as data quality rules in practice,  we study the problem of finding minimal repairs to a relation and ontology with respect to a set of OFDs.
%, which require interpreting the data according to an ontological sense.  We propose algorithms that efficiently evaluate candidate senses, and select those leading to minimal repairs.  
We demonstrate the effectiveness of our techniques on real datasets, and show that OFDs can significantly reduce the number of false positive errors in data cleaning techniques that rely on traditional FDs.
\end{abstract}

%%
%% The code below is generated by the tool at http://dl.acm.org/ccs.cfm.
%% Please copy and paste the code instead of the example below.
%%
\begin{CCSXML}
<ccs2012>
 <concept>
  <concept_id>10010520.10010553.10010562</concept_id>
  <concept_desc>Computer systems organization~Embedded systems</concept_desc>
  <concept_significance>500</concept_significance>
 </concept>
 <concept>
  <concept_id>10010520.10010575.10010755</concept_id>
  <concept_desc>Computer systems organization~Redundancy</concept_desc>
  <concept_significance>300</concept_significance>
 </concept>
 <concept>
  <concept_id>10010520.10010553.10010554</concept_id>
  <concept_desc>Computer systems organization~Robotics</concept_desc>
  <concept_significance>100</concept_significance>
 </concept>
 <concept>
  <concept_id>10003033.10003083.10003095</concept_id>
  <concept_desc>Networks~Network reliability</concept_desc>
  <concept_significance>100</concept_significance>
 </concept>
</ccs2012>
\end{CCSXML}

%\ccsdesc[500]{Computer systems organization~Embedded systems}
%\ccsdesc[300]{Computer systems organization~Redundancy}
%\ccsdesc{Computer systems organization~Robotics}
%\ccsdesc[100]{Networks~Network reliability}

%%
%% Keywords. The author(s) should pick words that accurately describe
%% the work being presented. Separate the keywords with commas.
%\keywords{data cleaning, data quality, data dependencies, ontology}

%

%%
%% This command processes the author and affiliation and title
%% information and builds the first part of the formatted document.
\maketitle
%\IEEEraisesectionheading{\section{Introduction}\label{sec:introduction}}
\section{Introduction}\label{sec:introduction}
%\IEEEPARstart{O}{rganizations} are finding it increasingly difficult to reap value from their data due to poor data quality.  The increasing size and complexity of modern datasets exacerbate the fact that real data are dirty, containing inconsistent, duplicated, and missing values.  \eat{One of the primary challenges involves deciphering semantics behind syntactic inequalities.} 
In constraint-based data cleaning, dependencies are used to specify data quality requirements.  Data that are inconsistent with respect to the dependencies are identified as erroneous, and updates to the data are generated to re-align the data with the dependencies.  %Deciding which updates to apply often goes beyond simply identifying similar or equivalent string values, as user input is necessary to gather information, such as terminologies, concepts, and relationships that are relevant for a given application domain.
Existing approaches use Functional Dependencies (FDs) \cite{BFFR05,PSC15}, Inclusion Dependencies \cite{BFFR05}, Conditional Functional Dependencies \cite{CFGJM07}, Denial Constraints \cite{CIP13}, \reviseTwo{Order Dependencies~\cite{Szl2012}, and Matching Dependencies~\cite{FanLMTY11}} to define the attribute relationships that the data must satisfy.  However, these approaches are limited to identifying attribute relationships based on syntactic equivalence (or syntactic similarity for Metric FDs \cite{KSS09,PSC15}), and unable to convey semantic equivalence, which is often necessary in data cleaning.

\begin{table*}
\scriptsize
	\begin{minipage}{0.4\linewidth}
		\centering
		\caption{Sample clinical trials.}
        \begin{tabular}{ | l | l | l | l | l | l | l | l |}
        \hline
        \hspace{-2 mm} \textbf{id} \hspace{-2 mm} & \textbf{CC}    & \textbf{CTRY} & \hspace{-2 mm} \textbf{SYMP} & \hspace{-2 mm} \textbf{TEST}  & \hspace{-2 mm} \textbf{DIAG}  & \hspace{-2 mm} \textbf{MED}  \\
        \hline \hline
        \hspace{-2 mm} $t_1$ \hspace{-2 mm} &      US & USA &  joint pain & CT & osteoarthritis & ibuprofen \\
        \hspace{-2 mm} $t_2$ \hspace{-2 mm} &      IN & India & joint pain  & CT  & osteoarthritis & NSAID  \\
        \hspace{-2 mm} $t_3$ \hspace{-2 mm} &     CA & Canada & joint pain  & CT  & osteoarthritis & naproxen \\
        \hspace{-2 mm} $t_4$ \hspace{-2 mm} &      IN & Bharat & nausea  & EEG  & migrane &  analgesic  \\
        \hspace{-2 mm} $t_5$ \hspace{-2 mm} &       US & America & nausea  & EEG  & migrane &  tylenol \\
        \hspace{-2 mm} $t_6$ \hspace{-2 mm} &      US & USA &  nausea  &  EEG  & migrane &  acetaminophen \\
        \hspace{-2 mm} $t_7$ \hspace{-2 mm} & IN & India & chest pain  &  X-ray  & hypertension & morphine \\
        \hspace{-2 mm} $t_8$ \hspace{-2 mm} & US & USA & headache & CT & hypertension & cartia \\
		\hspace{-2 mm} $t_9$ \hspace{-2 mm} & US & USA & headache & MRI & hypertension & tiazac \textcolor{blue}{(ASA)} \\
		\hspace{-2 mm} $t_{10}$ \hspace{-2 mm} & US & America & headache & MRI & hypertension & tiazac \\
% 		\hspace{-2 mm} $t_{11}$ \hspace{-2 mm} & US & USA & headache & MRI & hypertension & tiazac \\
		\hspace{-2 mm} $t_{11}$ \hspace{-2 mm} & US & USA & headache & CT & hypertension & tiazac \textcolor{blue}{(adizem)}\\
        \hline 
        \end{tabular}
		\label{tab:cleanexample}
	\end{minipage}\hfill
	\begin{minipage}{0.4\linewidth}
	\centering
		\includegraphics[width=65mm]{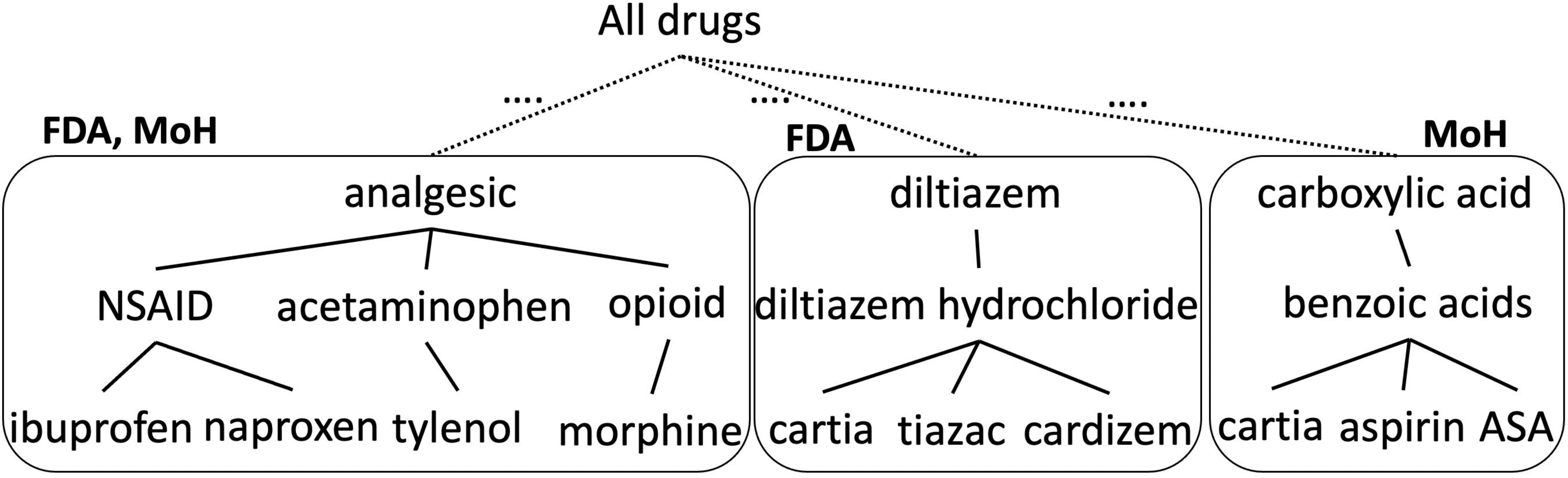}
	\vskip -0.3cm
		\captionof{figure}{Medical drug ontology.}  
		\label{fig:ontology}
	\end{minipage}
	\vspace{-0.5cm}
\end{table*}

\begin{example}
\label{ex:intro}
Table \ref{tab:cleanexample} shows a sample of clinical trial records containing patient country codes (CC), country (CTRY), symptoms (SYMP), diagnosis (DIAG), and the prescribed medication (MED). Consider two FDs: $F_{1}$:  [CC]  $\rightarrow$  [CTRY] and $F_{2}$: [SYMP, DIAG] $\rightarrow$ [MED]. \eat{and $F_{3}$: [MED] $\rightarrow$ [DIAG]. \fei{Unfortunately, this is not a very realistic FD F3.}} The tuples ($t_{1}, t_{5}$, $t_{6}$, $t_{8}-t_{11}$) do not satisfy $F_{1}$ as  \lit{America} and \lit{USA} are not syntactically equivalent (the same is true for ($t_{2}, t_{4}$, $t_{7}$)).   However, \lit{USA} is synonymous with \lit{America}, and ($t_{1}, t_{5}$, $t_{6}$, $t_{8}-t_{11}$) all refer to the same country.  Similarly, \lit{Bharat} in $t_{4}$ is synonymous with \lit{India} as it is the country's original Sanskrit name. For $F_{2}$, ($t_{1}-t_{3}$), ($t_{4}-t_{6}$) and ($t_{8}-t_{11}$) are violations as the consequent values all refer to different medications. However, as shown in Figure \ref{fig:ontology}, with domain knowledge from a medical ontology~\cite{Dron}, we see that the values participate in an inheritance relationship. Both \lit{ibuprofen} and \lit{naproxen} are non-steroidal anti-inflammatory drugs (\lit{NSAID}), \lit{tylenol} is an \lit{acetaminophen} drug, which in turn is an \lit{analgesic}, and both \lit{cartia} and \lit{tiazac} are \lit{diltiazem hydrochloride}. 
%participate in a synonym relationship and they are referring to the same disease. \fei{This text looks like it has not been updated to reflect the new tuples.} \cyan{Added new tuples in Table 1 and a new FD in this example, and we modified the text accordingly.}
\end{example}

\eat{\fei{What about morphine in $t_7$?} Similarly, tuples $t_{9}-t_{11}$ violate $F_{3}$ since \lit{hyperpiesia} is not equivalent to \lit{hypertension}. However, with additional information from a disease diagnosis  ontology~\cite{DOID}, \lit{hyperpiesia} and \lit{hypertension} are indeed the same disease. \fei{These two values are not in Figure 2 ontology.  The disease diagnosis ontology needs to be expanded to be more realistic.} }

The above example demonstrates that real data contain domain-specific relationships beyond syntactic equivalence or similarity.  It also highlights two common relationships that occur between two values $u$ and $v$: (1) $u$ and $v$ are \emph{synonyms}; and (2) $u$ is-a $v$ denoting \emph{inheritance}.  These relationships are often defined within domain-specific ontologies.
%that can be leveraged during the data cleaning process to identify and enforce domain specific data quality rules.  
Unfortunately, traditional FDs and their extensions are unable to capture these relationships, and existing data cleaning approaches flag tuples containing synonymous and inheritance values as errors.  This leads to an increased number of reported ``errors'', and a larger search space of data repairs to consider.

%\reviseTwo{
To address these shortcomings, our earlier work proposed a novel class of dependencies called \emph{Ontology Functional Dependencies} (OFDs) that capture synonyms and is-a relationships defined in an ontology~\cite{baskaran2017efficient}.  In this paper, we focus on synonyms, which are commonly used in practice, and we study cleaning a relation and an ontology with respect to (w.r.t.) a set of synonym OFDs.  %}
What makes OFDs interesting is the notion of \emph{senses}, which determine how a dependency should be interpreted for a given ontology; e.g., \lit{jaguar} can be interpreted as an \lit{animal} or as a \lit{vehicle}, \reviseTwo{country codes vary according to multiple standards (interpretations) such as the International Standards Organization (ISO) vs. United Nations (UN).} 
%Notably, interpretations have not been considered in prior work on database or ontology quality (see Section~\ref{sec:rw} for a detailed discussion of prior work).  
To make OFDs useful in practice, where data semantics are often poorly documented and change, we propose an algorithm to discover OFDs.  
%In addition, we study the problem of resolving inconsistencies with respect to (w.r.t.) a set of OFDs. State-of-the-art data cleaning solutions have taken a holistic approach to combine attribute relationships with external knowledge bases and statistical methods to determine the most likely repair \cite{holoclean,PSC15}, and probabilistic models that learn from clean distributions in the data to minimally repair dirty values~\cite{MNP10,YBE13}.  However, none of these techniques consider repairs to external sources, such as an ontology.
%the data cleaning where data semantics are often poorly documented and change over time, we propose an algorithm to discover OFDs from data.  

OFDs serve as contextual data quality rules that enforce the semantics modeled in an ontology.  However, application requirements change, data evolve, and as new knowledge is generated, inconsistencies arise between the data, OFDs, and ontologies.   For example, the US Food and Drug Administration (FDA) has a monthly approval cycle to certify new drugs. If downstream data applications do not update their data and ontologies, this leads to stale and missing values in the ontology, and inconsistencies w.r.t. the data and the OFDs~\cite{Dron}.  Similarly, changes to the data may be required to re-align the data to the OFDs and the ontology.  Consider the following example.

\eat{[Morteza, Zheng]: please put an example (extended from earlier one) that shows inconsistency with OFDs + data + ontology, and the need for repair.  Need to show the importance of senses to differentiate interpretations, show the possible repairs.}

% \begin{table}[t]
% \begin{center}
% 	\begin{tabular}{|c|c|c|c|c|c|c|}
% 		\hline
% 		id & \textbf{CC} & \textbf{CTRY} & \textbf{SYMP} & \textbf{AGE} & \textbf{DIAG} & \textbf{MED} \\
% 		\hline
% 		$t_1$ & US & USA & Headache & 30
% & Hyperpiesis & Cartia \\
% 		$t_2$ & IN & India & Headache & 30
% & Hyperpiesia & Tiazac \\
% 		$t_3$ & IN & India & Headache & 65
% & Hypertension & Bevyxxa \\
% 		$t_4$ & US & USA & Headache & 65
% & Hypertension & Bevyxxa \\
% 		$t_5$ & US & USA & Headache & 65
% & HHD & Betrixaban \\
% 		$t_6$ & IN & India & Headache & 30
% & HHD & Tiazac \\
% 		$t_7$ & US & USA & Headache & 30
% & Hypertension & Aspirin \\
%         $t_8$ & US & USA & Headache & 65
% & Hypertension & ASA \\
% 		\hline
% 	\end{tabular}
% \end{center}
        
% \caption{Another snapshot of the clinical trials data.  \eat{\fei{Why do we have two tables of different schema (AGE is new column) \cyan{Fixed.}, but same tuple IDs.  Confusing.  This looks lazy that we could not combine. Please merge into a single table, keep as succinct as possible.}\cyan{We need to make a scenario that we discover OFDs from a dataset and then another snapshot of data came that violates the current discovered OFDs. If we merge the two tables, then the desired OFDs can not be discovered since we want table 2 to violates the OFDs.}}}
% \label{table:repexp}
% \vskip -0.7cm
% \end{table}

\eat{In Example 1.2, I think we need an OFD whose RHS should be DIAG, otherwise, there is no violation for the red value $t_9[DIAG]$, then we cannot consider ontology repair for this value. This is why before we have $F_3$: [MED] $\rightarrow$ [DIAG], we want to make $t_9[DIAG]$ as a violation. There are two ways to solve this problem: (1) we use $F_3$: [SYMP, MED] $\rightarrow$ [DIAG], do you think this one is more make sense? (2) for tuples ($t_8, t_{10} - t_{11}$) that violate OFD: [SYMP, DIAG] $\rightarrow$ [MED], we can consider (i) data repair and also (ii) ontology repair by adding \lit{aspirin} to the ontology under the interpretation of FDA.}

\begin{example}\label{ex:motivating}
Consider Table~\ref{tab:cleanexample} with updated values (shown in blue) in $t_{9}$[MED] and $t_{11}$[MED] to reflect changes in a patient's prescribed medication.  Tuples ($t_{8} - t_{11}$) are now inconsistent w.r.t. the previously defined $F_2$.  If we augment $F_{2}$ with additional semantics, provided by a medication ontology as shown in Figure~\ref{fig:ontology}, tuple $t_{11}$ continues to be problematic since \lit{adizem} is not defined in the ontology, and is not equivalent to \lit{cartia} nor \lit{tiazac}. The updated value in $t_9$[MED] to \lit{ASA} leads to an inconsistency since \lit{ASA} is not semantically equivalent to \lit{cartia} nor \lit{tiazac}. We must find an interpretation of the ontology, called a \emph{sense} (denoted in bold in Figure~\ref{fig:ontology}), where the values \{\lit{ASA}, \lit{cartia}, \lit{tiazac}, \lit{adizem}\} are all equivalent.  Unfortunately, there is no sense in which all these values are semantically equivalent.  \eat{The ontology shows there is no sense in which all these We see that \lit{cartia} and \lit{tiazac} represent the same medication, but \lit{cartia} and \lit{ASA} are equivalent drugs by the Ministry of Health (MoH) in Israel.}  To resolve these violations, we can: (1) repair the ontology by adding the value \lit{adizem} and \lit{ASA} under the FDA sense; or (2)  update tuples ($t_8 - t_{11}$) to either \lit{cartia} or \lit{ASA} under the MoH sense.  In both cases, there now exists a sense where all MED values in ($t_8 - t_{11}$) are equivalent.
\end{example}

\eat{\fei{My issues: (1) we should change to another drug (not aspirin) that is more realistic to not be in an ontology.  Aspirin is very common and old drug.  (2) I don't understand why we have an ontology that has "entity" as root, should this not be a real drug entity? \cyan{I will updated it to `continuant drug', which is the real drug category in~\cite{Dron}.}}
}

\eat{
in which $t_{9}[\text{DIAG}]$, $t_{10}[\text{DIAG}]$ and $t_{11}[\text{MED}]$ are changed.  The MED ontology has two interpretations, FDA of USA and MoH (Ministry of Health) of Israel which are shown on the top right corner of the ontology. The drugs with the same name may be used for different treatments under different interpretations (e.g., \lit{cartia} is used for hypertension in FDA, while it is the treatment for blood clot in MoH). In Table\ref{tab:example}, if we consider the semantic dependency of Example\ref{ex:intro} which states [SYMP, DIAG] determine [MED] along with the domain knowledge in Figure\ref{fig:ontology}, then tuples $t_8, t_{10} - t_{11}$ are violations since there is no interpretation (neither FDA nor MoH) such that \lit{cartia}, \lit{tiazac} and \lit{aspirin} can be used for the same treatment. \fei{aspirin and cartia are both under MoH, so why is $t_{10}, t_{11}$ not wrong since tiazac is excluded from MoH? \cyan{fixed.} }
} 

\eat{ There is also no interpretation from the domain knowledge shows that all the values of [MED] are synonyms.     Under the interpretation of FDA, \lit{aspirin} is not synonym with \lit{cartia} and \lit{tiazac}. While under the interpretation of MoH, \lit{tiazac} is not synonym with \lit{cartia} and \lit{aspirin}.
To resolve the violations, we can (1) change the MED ontology by adding \lit{tiazac} as the sibling of \lit{cartia} under the interpretation of MoH, or adding \lit{aspirin} as the sibling of \lit{cartia} under the interpretation of FDA. Then the new ontology captures the relationship between \lit{cartia}, \lit{tiazac} and \lit{aspirin} (i.e., all these drugs belong to a same type of drug); (2) modify the data by changing $t_9[\text{MED}] - t_{11}[\text{MED}]$ to \lit{aspirin} or \lit{cartia} under the interpretation of MoH, or changing $t_{11}[\text{MED}]$ to \lit{cartia} or \lit{tiazac} under the interpretation of FDA. \fei{I see you are proposing adding ontology values as data repairs which is not well justified yet.  I would exclude this for now until there is good reason or the example demonstrates it clearly.  We are not demonstrating an example here for ontology repair, right?} 
%\end{example}
}

\eat{Are you talking about repair here of an FD or you mean OFD?  Again, this has to be described in terms of non-dependency specific at this point bc OFD is not defined, and it does not make sense to repair an FD in this paper. What does it mean to be a synonym in an ontology?  We have not told the reader how to find this in an ontology.  HOw is a reader supposed to read Figure 1 to understand senses, synonyms. We modified the text to avoid using undefined words 'OFD' , 'synonym' and 'sense' in this example.}

The above example demonstrates that there are multiple options to resolve violations, namely, modifying the ontology or the data. % Given an ontology that can be interpreted in multiple ways, how do we evaluate the space of possible repair values and their possible senses (interpretations) such that we make the fewest number of changes to the data to achieve consistency?  
\eat{Defining approximate OFDs, which relax the notion of satisfaction to hold over a subset of the relation, can be used to identify error values and candidate fixes.  
%We propose solutions that incorporate this notion of approximate OFDs and address these questions.
}  We study the repair problem that re-aligns the data, dependencies (OFDs), and an ontology via a minimal number of repairs to the data and to the ontology.  State-of-the-art data cleaning solutions have taken a holistic approach to combine attribute relationships with external knowledge bases and statistical methods to determine the most likely repair \cite{PSC15}, and probabilistic models that learn from clean distributions in the data to minimally repair dirty values~\cite{YBE13}.  \reviseTwo{However, none of these techniques consider repairs to ontologies.  We adopt a similar philosophy to prior work that consider repairs to the data and/or constraints~\cite{CM11,BIGG13}.}  We make the following contributions~\footnote{\reviseTwo{We note that the first three contributions are published in our earlier work~\cite{baskaran2017efficient}.}} :

\begin{enumerate}

\item We define OFDs based on synonym relationships.  In contrast to existing dependencies, OFDs include attribute relationships that go beyond syntactic equality or similarity, and consider the notion of \emph{senses} that provide the interpretations under which the dependencies are evaluated.  In contrast to FDs, OFDs are not amenable to tuple-to-tuple comparisons and instead they must be verified over equivalence classes of tuples. 

\item We introduce a \emph{sound and complete} set of axioms for OFDs.  While the inference complexity of other FD extensions is co-NP complete, we show that  inference for OFDs remains linear.

\item
We propose the \fastofd algorithm to discover a \emph{complete} and \emph{minimal} set of OFDs to alleviate the burden of specifying them manually.  We show that OFDs can be discovered efficiently by traversing a \emph{set-containment lattice} with \emph{exponential} worst-case complexity in the number of attributes, the same as for traditional FDs, and polynomial complexity in the number of tuples.  We develop pruning rules based on our axiomatization.

\item We present \ofdclean, an algorithm that computes a minimal number of modifications to an ontology and to the data to satisfy a given set of OFDs. To consider the possible interpretations of the data, \ofdclean selects the best interpretation (sense) for an equivalence class of tuples, such that the selected sense minimizes the number of required modifications.  

%\item We define the OFD repair problem, and define a space of minimal data and ontology repairs based on dominance w.r.t. the number of updates to the data and to the ontology.  To consider the possible interpretations of the data, we present an algorithm that selects the best interpretation (sense) for an equivalence class of tuples, such that the selected sense minimizes the number of total repairs.
%\item We present \ofdclean, an algorithm that computes a minimal number of updates to an ontology and to the data based on the notion of dominance.
% \item \blue{We define the OFD-repair problem w.r.t. an ontology $S$, and define a space of minimal data and ontology repairs based on dominance \textit{w.r.t.} the number of updates to the data and to $S$.
% Our solution returns multiple non-redundant suggestions for how to modify the data and/or ontologies to achieve consistency \textit{w.r.t.} the OFDs.}

% \item \blue{As data and ontologies evolve (i.e., new concepts are introduced, relationships between entities change, and values are updated), inconsistencies between the data and ontologies occur.  We propose a repair algorithm that identifies ...(minimal, approximate, ) modifications to the data and ontology such that no more than $\tau$ data modifications will be required to re-align the data w.r.t. the OFDs and the ontology.} 
%
\item We evaluate the performance and effectiveness of \fastofd and \ofdclean using real datasets over varying parameter values.  
%We show that \ofdclean provides improved precision and recall to discriminate between true positive vs. false positive errors.

%We compare our discovery algorithms against an existing lattice-based FD discovery algorithm, and achieve a 19.5\% performance improvement.  We evaluate the performance gains of each of our optimizations, and show that our discovery algorithms discover more interesting dependencies (those involving fewer attributes) early in the search, thereby reducing the overall discovery time.  Our experiments demonstrate that our algorithms scale well and are relevant in data cleaning. 
\end{enumerate}

We give preliminary definitions in Section~\ref{sec:prelim}.  We present our axiomatization and inference procedure in Section~\ref{sec:foundations}, and  Section~\ref{sec:fastofd} describes our discovery algorithm.  We introduce data and ontology repairs, and the \ofdclean framework in Section~\ref{sec:cleaning}.  In Section~\ref{sec:sense}, we describe our sense selection algorithm, and then present our ontology repair algorithm in Section~\ref{sec:ontrep}.   We present experimental results in Section~\ref{sec:eval}, related work in Section~\ref{sec:rw}, and conclude in Section~\ref{sec:conclusion}.
%\reviseTwo{We note that discovery of approximate OFDs serve as an alternative method to identify inconsistencies, and suggest repairs to the data.  We discuss how \fastofd is adapted to identify approximate OFDs in Section~\ref{sec:approx}, and empirically show their usefulness in data cleaning in Section~\ref{sec:eval}.}

%\blue{In Section~\ref{sec:cleaning}, we define the data cleaning problem with OFDs, present a repair algorithm to re-align a set of OFDs with the data and the ontology, and discuss its complexity.}  We present experimental results in Section \ref{sec:eval}, related work in Section \ref{sec:rw}, and directions for future work in Section \ref{sec:conclusion}.  

\section{Preliminaries and Definitions}
\label{sec:prelim}

%We define Ontology Functional Dependencies (OFDs), which were introduced in our earlier work~\cite{baskaran2017efficient}.  We discuss their relationship to other data dependencies, and present their foundations.

%\textbf{Definitions.}
\textbf{Functional Dependencies.} 
A functional dependency (FD) $F$ over a relation $R$ is represented as $X \rightarrow A$, where $X$ is a set of attributes and $A$ is a single attribute in $R$.  An instance $I$ of $R$ satisfies $F$ if for every pair of tuples $t_1, t_2 \in I$, if $t_1$[$X$] = $t_2$[$X$], then $t_1$[$A$] = $t_2$[$A$].  A partition of $X$, $\Pi_{X}$, is the set of equivalence classes containing tuples with equal values in $X$.  Let $x_i$ be an equivalence class with a representative $i$ that is equal to the smallest tuple id in the class, and $|x_i|$ be the size of the equivalence class (in some definitions, we drop the subscript and refer to an equivalence class simply as $x$).  For example, in Table \ref{tab:cleanexample}, $\Pi_{CC}$ = \{\{$t_1,t_5,t_6$\}\{$t_2,t_4,t_7$\}\{$t_3$\}\}.  

\textbf{Sense.} 
An ontology $S$ is an explicit specification of a domain that includes concepts, entities, properties, and relationships among them.  \reviseTwo{In this work, we consider tree-shaped ontologies for simplicity.} These constructs are often defined and applicable only under a given interpretation, called a \emph{sense}.   The meaning of these constructs for a given $S$ can be modeled according to different senses leading to different ontological interpretations.  As mentioned previously, the value \lit{jaguar} can be interpreted under two senses: (1) as an animal, and (2) as a vehicle.  As an animal, \lit{jaguar} is synonymous with \lit{panthera onca}, but not with the value \lit{jaguar land rover automotive}.
%, which is the full name of the company that produces jaguar vehicles.  

We define classes $E$ for the interpretations or senses defined in $S$.
%to contain all alternative representations (i.e., synonyms). 
Let \emph{synonyms}$(E)$ be the set of all synonyms for a given class $E$. For instance, $synonyms(E1)$ = \{\lit{car, auto, vehicle}\}, $synonyms(E2)$ = \{\lit{jaguar, jaguar land rover}\} and $synonyms(E3)$ = \{\lit{jaguar, panthera onca}\}. Let \emph{names}$(C)$ be the set of all classes, i.e., interpretations or senses, for a given value $C$. For example, \emph{names}$(jaguar)$ = \{$E1$, $E2$, $E3$\} as jaguar can be an animal or a vehicle. Let $descendants(E)$ be a set of all representations for the class $E$ or any of its descendants, i.e., $descendants(E)$ = \{$s$ $|$ $s$ $\in$ \emph{synonyms}$(E)$  or $s$ $\in$ \emph{synonyms}$(E_{i})$, where $E_{i}$ \emph{is-a} $E_{i-1}$, ..., $E_{1}$ \emph{is-a} $E$\}, e.g., $descendants(E3)$ = \{\lit{jaguar, peruvian jaguar,mexican jaguar}\}.  
%For instance, $descendants(E3)$ = \{\emph{'jaguar', 'panthera onca', 'feline', 'mammal', 'animal'}\}.

\textbf{Ontology Functional Dependencies}
We define OFDs w.r.t. a given ontology $S$.  We consider \reviseTwo{the synonym} relationship on the right-hand-side of a dependency leading to synonym OFDs. \eat{ and inheritance OFDs (Definition~\ref{def:inheritance_ofd}). }%First, we define synonym OFDs  $O_{syn}$: $X \rightarrow_{syn} A$.   An instance $I$ satisfies $O_{syn}$ if for every equivalence class $x  \in \Pi_{X}(I)$, there exists an interpretation under which all the values in $\Pi_{A}(g_{x}(X))$ are synonyms, where  $g_{x}[X]$ = \{$t| t \in I$ and $t[X] = x$\}. We formalize this in the following definition (see Example~\ref{example:syn}).

\begin{definition} \label{def:synonym_ofd}
A relation instance $I$ satisfies a \emph{synonym OFD} $X \rightarrow_{syn} A$, if for each equivalence class $x \in \Pi_{X}(I)$, there exists a sense (interpretation) under which all the $A$-values of tuples in $x$ are synonyms.  That is, $X \rightarrow_{syn} A$ holds, if for each equivalence class $x$, $$\bigcap_{names(a), a \in \{t[A] | t \in x \}} \neq \emptyset.$$
%there exists $E \in S$ such that for all $a \in \{t[A] | t \in I$ and $t[X] = x\}$, $a \subseteq synonyms(E)$.
\end{definition}

\begin{example}\label{example:syn}
Consider the OFD [CC]  $\rightarrow_{syn}$  [CTRY]  from Table \ref{tab:cleanexample}.    We have 
%$\Pi_{CTRY}$ = \{\{$t_1$\}\{$t_2,t_7$\}\{$t_3$\}\{$t_4$\}\{$t_5$\}\{$t_6\}$\} and 
$\Pi_{CC}$ = \{\{$t_1,t_5,t_6$\}\{$t_2,t_4,t_7$\}\{$t_3$\}\}.  The first equivalence class, \{$t_1,t_5,t_6$\}, representing the value \lit{US}, corresponds to three distinct values of CTRY.  According to a geographical ontology, names(\lit{United States}) $\cap$ names(\lit{America}) $\cap$ names(\lit{USA}) = \lit{United States of America}.  
%According to a given ontology $S$, the values resolve to a non-empty value indicating the three references are to the same country.  
Similarly, the second class \{$t_2,t_4,t_7$\} gives names(\lit{India}) $\cap$ names(\lit{Bharat}) = \lit{India}.  The last equivalence class \{$t_3$\} contains a single tuple, so there is no conflict.  Since all references to CTRY in each equivalence class resolve to some common interpretation,  
the synonym OFD holds.% over Table \ref{tab:cleanexample}.
%A variation that considers two senses, where the value 'India' applies under sense $s_{1}$ and 'Bharat' applies under sense $s_{2}$ would cause $F$ to not be satisfied, since the values in the equivalence class $x_{2}$ = \{2,4,7\} would not share a common sense.
\end{example}

Synonym OFDs subsume traditional FDs, where all values are assumed to have a single literal interpretation (for all classes $E$, $|$\emph{synonyms}$(E)|$ = 1). 
%We consider ontology relationships on the right-hand-side of a dependency. 
Similar to traditional FDs, we say that an instance $I$ satisfies a set of OFDs $\Sigma$, denoted $I \models \Sigma$ \emph{if and only if} $I$ satisfies each OFD $\phi \in \Sigma$. Without loss of generality, we assume that the consequent (right side) of each $\phi \in \Sigma$ consists of a single attribute $A$ (this follows from the axioms that will be presented in Section~\ref{sec:foundations}).    When we refer to an OFD $\phi$: $X \rightarrow Y$, we say $I \models \phi$ iff $I \models (X \rightarrow A)$ for all $A \in Y$.  That is, for each attribute $A \in Y$, all the $A$-values of tuples in $x$ share a common sense (interpretation).
Ontological relationships are semantically meaningful on both sides of a dependency to capture a greater number of true positive errors. However, due to the large search space of ontological relationships on the left-hand side, since we must also consider all tuples outside of the partition group when checking for violations, we focus on the right-hand-side. This is similar to other seminal work~\cite{KSS09,PSC15} that considers small variations only on the right-hand-side via metric FDs.

\textbf{Relationship to other dependencies.}
The notion of senses makes OFDs non-trivial and interesting.  For each equivalence class, there must exist a common interpretation of the values. 
Checking pairs of tuples, as in traditional FDs (and Metric FDs \cite{KSS09,PSC15}, which assert that two tuples whose left-hand side attribute values are equal must have \emph{syntactically} similar right-hand side attribute values according to some distance metric), is not sufficient, as illustrated next.
%Metric FDs studied in the prior work~\cite{KSS09,PSC15} are defined when two tuples agree on attributes $X$, then the attributes $Y$ values must have similar values w.r.t. some metric distance.  OFDs, however, are defined over the values in equivalence classes in $\Pi_{X}$, and there must exist a common class (i.e., sense) across these values over $Y$ (dealing with pairs of tuples is \emph{not} sufficient). 

%\begin{example} \label{ex:nopairs}
Consider Table \ref{tab:mVio}, where the synonym OFD $X \rightarrow_{syn} Y$ does not hold (for each $Y$ value, we list its possible interpretations in the last column).  Although all pairs of $t[Y]$ values share a common class 
\begin{wraptable}{l}{3.2cm}
\vspace{-4mm}
\begin{threeparttable}
% \fontsize{5}{5}
\scriptsize
\caption{Defining OFDs}
\label{tab:mVio}
\begin{tabular}{l|ccc}
\toprule
\textbf{id}	& \textbf{X} 	&  \textbf{Y}	&  \textbf{Classes for Y}  \cr
\midrule
$t_{1}$ &  \emph{u} &  \emph{v} &  \{\emph{C,D}\}	 \cr
$t_{2}$  &  \emph{u} & \emph{w} &   \{\emph{D,F}\} \cr 
$t_{3}$  &  \emph{u} &  \emph{z} &  \{\emph{C,F,G}\} \cr
\bottomrule
\end{tabular}
\end{threeparttable}
\vspace{-5mm}
\end{wraptable} 
(i.e., \{\emph{v,w}\}: \E{D}, \{\emph{v,z}\}: \E{C}, \{\emph{w,z}\}: \E{F}), the intersection of the classes is empty. 
%Hence, the OFD $X \rightarrow_{syn} Y$ is falsified. This can be fixed, for instance, by adding to the class \E{F} the value $b$ for the OFD to hold. 
%\end{example}

%Thus, OFDs are not a subclass of metric FDs (and vice versa). Furthermore, ontological similarity is not a \emph{metric distance} since it does not satisfy the identity of indiscernibles (e.g., for synonyms). 

Furthermore, OFDs \emph{cannot} be reduced to traditional FDs or Metric FDs.  Since values may have multiple senses,  
%(e.g., jaguar the animal, the car, or the Apple operating system), 
it is not possible to create a normalized relation instance by replacing each value with a unique canonical name.  Furthermore, ontological similarity is not a metric since it does not satisfy the identity of indiscernibles (e.g., for synonyms).
%By the similar argument as for metric FDs above, OFDs \emph{cannot} be reduced to traditional FDs by mining FDs over a \emph{normalized} instance, in which all values have been replaced by the representative for their class, i.e., \emph{canonical names}, since all string values in the database can have multiple senses over equivalence classes. (For instance, a canonical name for `jaguar' and `jaguar land rover' can be arbitrary `jaguar land rover'.) This makes it working with OFDs interesting and non-trivial.

%*************************************
%\subsection{Problem Statement}
%\textbf{OFD Discovery Problem:} Given a relational instance $I$ and thresholds $\theta$ over each attribute, we want to discover a complete, minimal (non-redundant) set of synonym and inheritance OFDs.  $X \rightarrow A$ is \emph{trivial} if $A \in X$, and $X \rightarrow A$ is minimal if there is no $Z \rightarrow A$ that holds such that $Z \subset X$. 

%\textbf{Note:} Example~\ref{ex:nopairs} shows that OFDs cannot be verified by pairwise tuple comparisons, which is a technique used by some existing dependency discovery algorithms such as FastFDs \cite{WGR01}.  Instead, in Section~\ref{sec:fastofd}, we will show that OFDs can be discovered using an Apriori-like approach \cite{AMW96} that traverses a lattice of attribute sets.

\eat{
\subsection{Foundations}
In our earlier work, we provide a formal framework for OFDs~\cite{baskaran2017efficient}.  Namely, we proposed a sound and complete axiomatization for OFDs, a set of optimizations to identify and prune non-minimal OFDs, an algorithm to check for implication, and the existence of a minimal cover.  We summarize some of those results here, and refer the reader to our past work for complete results~\cite{baskaran2017efficient}.

%which reveals how OFDs behave; notably, not all axioms that hold for traditional FDs carry over.  We then use the axioms to design pruning rules that will be used by our OFD discovery algorithm.
%We then provide a set of optimizations to reduce our algorithm running time by pruning the space of candidates so we do not consider redundant OFDs.  We also leverage the existence of keys and FDs that hold over the data to reduce the verification time.  
%Finally, we provide a linear time inference procedure that ensures a set of OFDs remains minimal.   

\subsubsection{OFD Axiomatization}
\begin{theorem}~\cite{baskaran2017efficient}\label{theorem:complete}
The axiomatization is sound and complete.
    \begin{enumerate}[nolistsep]
        \item Identity: $\forall X \subseteq R$, $X$ $\rightarrow$ $X$
        \item Decomposition: If $X$ $\rightarrow$ $Y$ and $Z$ $\subseteq$ $Y$
            then $X$ $\rightarrow$ $Z$
        \item Composition: If $X$ $\rightarrow$ $Y$ and $Z$ $\rightarrow$
        $W$ then $XZ$ $\rightarrow$ $YW$
    \end{enumerate}
\end{theorem}

\subsubsection{Optimizations}
We apply the axiomatization to identify redundant OFDs. 

%Lemma 1
\begin{lemma} (Reflexivity) \\
If $A$ $\in$ $X$, then $X \rightarrow A$.
\vspace{-0.2cm}
\end{lemma}
\eatTR{\begin{proof}
    It follows from Reflexivity (Lemma~\ref{lemma:reflexivity})
\end{proof}}
If $A \in X$, then $X \rightarrow A$ is a trivial dependency.

%Lemma 2
\begin{lemma} (Augmentation) \\
If  $X \rightarrow A$ is satisfied over $I$, then $XY \rightarrow A$ is satisfied for all $Y \subseteq R \setminus X$.
\vspace{-0.2cm}
\end{lemma}
\eatTR{\begin{proof}
    Assume $X \rightarrow A$. The OFD $X \rightarrow \{ \}$ follows from Reflexivity ((Lemma~\ref{lemma:reflexivity})). Hence, it can be inferred by Composition that $XY \rightarrow A$.
\end{proof}}
If $X \rightarrow A$ holds in $I$, then all OFDs containing supersets of $X$  also hold in $I$, and can be pruned.  

%Lemma 3
\begin{lemma} (Keys) \\
If $X$ is a key (or super-key) in $I$, then for any attribute $A$,  $X \rightarrow A$ is satisfied in $I$.

\end{lemma}
\eatTR{\begin{proof}
    Since $X$ is a super-key, partition $\Pi_{X}$ consists of singleton equivalence classes only. Hence, the OFD $X \rightarrow A$ is valid.
\end{proof}}

For a candidate OFD $\set{X} \rightarrow \A{A}$, if $X$ is a key, then for all $x \in \Pi_{X}$, $|x|$ = 1, and $X \rightarrow A$ always holds.  On the other hand, if $\set{X} \setminus \A{A}$ is a superkey but not a key, then clearly the OFD $\set{X} \rightarrow \A{A}$ is not minimal. This is because there exists $\A{B} \in \set{X}$, such that $\set{X} \setminus \A{B}$ is a super key and $\set{X} \setminus \A{B} \rightarrow  \A{A}$ holds.

%Lemma 4
\begin{lemma} (FD Reduction) \\
%If  $|\Pi_{A}(g_{x}[X])|$ = 1 for an $x \in \Pi_{X}$, then all values in $x$ map to a unique value in $\Pi_{A}(g_{x}[X])$.
If all tuples in an equivalence class $x \in \Pi_{X}$ have the same value of $A$, then a traditional FD holds, and therefore an OFD, is satisfied in $x$.

\end{lemma} 
\eatTR{\begin{proof}
Singleton equivalence classes over attribute set $X$ cannot falsify any OFD $X \rightarrow A$.
\end{proof}
}
%Intuitively, if all tuples in an equivalence class $x$ have same value of $A$, then it is not necessary to check for a synonym or inheritance relationship, since a traditional FD is satisfied in $x$.

%Furthermore, we replace partitions with a more compact version called \emph{stripped partitions}. 
A \emph{stripped partition} of $\Pi_{\set{X}}$, denoted   $\Pi^{*}_{\set{X}}$, removes all the equivalence classes of size one.  
%Single tuples cannot violate an OFD.
%Since singleton equivalence classes represent single tuples, which are guaranteed to not violate existing OFDs, we are guaranteed that the partitions remain correct.
%For example,
%\begin{example}
%in Table~\ref{tab:mVio}, $\Pi_{\A{CC}}$ $=$ $\{ \brac{t_{1}, t_{5},t_{6}}$, $\brac{t_{2}, t_{4},t_{7}},$ $\brac{t_{3}} \}$, whereas the stripped partition removes the singleton equivalence class \{$t_{3}$\}, so $\Pi^{*}_{\A{CC}}$ $=$ $\{ \brac{t_{1}, t_{5},t_{6}}, \brac{t_{2}, t_{4},t_{7}} \}$.
%\end{example}

\begin{lemma}\label{lemma:stripped}
Singleton equivalence classes over attribute set $\set{X}$ cannot violate any OFD $\set{X} \rightarrow \A{A}$.
\end{lemma}

\eatTR{
\begin{proof}
Follows directly from the definition of OFDs.  
\end{proof}
}
%If $\Pi^{*}_{\set{X}}$ = $\emptySet$, then $\set{X}$ is a superkey and Optimization 3 (Lemma 3.6) applies.

\subsubsection{Minimal Cover}
Similar to traditional FDs, we define the notion of a minimal cover for OFDs. 

\begin{definition}
\label{def:cover}%
A set $\set{M}$ of OFDs is minimal if
\begin{enumerate}%[nolistsep]
\item
    $\forall$ $\set{X} \rightarrow \set{Y} \in \set{M}$, $\set{Y}$ is a single attribute; \label{cond:one}
\item
     For no $\set{X}$ $\rightarrow$ $\A{A}$ and a proper subset $\set{Z}$ of $\set{X}$ is
    $\set{M}$ $\setminus$ $\{\set{X}$ $\rightarrow$ $\A{A}\}$ $\cup$ $\{\set{Z}$ $\rightarrow$ $\A{A}\}$
    equivalent to $\set{M}$; \label{cond:three}
\item
    For no $\set{X}$ $\rightarrow$ $\set{Y} \in \set{M}$ is $\set{M}$
    $\setminus$ $\{\set{X}$ $\rightarrow$ $\A{A}\}$ equivalent to $\set{M}$. \label{cond:two}
    \end{enumerate}
If $\set{M}$ is minimal and equivalent to a
set of OFDs $\set{N}$, then we say $\set{M}$ is a minimal cover of $\set{N}$.
\end{definition}

\begin{theorem}
\label{theorem:minimality}~\cite{baskaran2017efficient}
Every set of OFDs $\set{M}$ has a minimal cover.
\end{theorem}
}
%*********************************************************
\section{Foundations and Optimizations}
\label{sec:foundations}

In this section, we provide a formal framework for OFDs.  We give a sound and complete axiomatization for OFDs that reveals how OFDs behave; notably, not all axioms that hold for traditional FDs carry over.  We then use the axioms to design pruning rules that will be used by our OFD discovery algorithm.
%We then provide a set of optimizations to reduce our algorithm running time by pruning the space of candidates so we do not consider redundant OFDs.  We also leverage the existence of keys and FDs that hold over the data to reduce the verification time.  
Finally, we provide a linear time inference procedure that ensures a set of OFDs remains minimal.   
%Finally, we provide a complexity analysis of our techniques.
\eatNTR{Due to space constraints, we defer all proofs to the accompanying technical report~\cite{BKCS16}.}

%*********************************************************
\subsection{Axiomatization for OFDs}
\label{sec:axioms}

%In this section, we present a \emph{sound} and \emph{complete} axiomatization for OFDs.  This provides a formal framework for reasoning about OFDs.  The axioms provide insight into how OFDs behave and how dependencies logically. \eatNTR{Due to the space constraints, all proofs can be found in the technical report~\cite{BKCS16}.} 

We start with the \emph{closure} of a set of attributes $\set{X}$  over a set of OFDs $\Sigma$, which will allow us to determine whether additional OFDs follow from $\Sigma$ by axioms. We use the notation $\Sigma \vdash$ to state that $\set{X}$ $\rightarrow$ $\set{Y}$ is provable with axioms from $\Sigma$. 
%The important information about closure $\set{X}^{+}$ is that it can be used to determine whether an OFD follows from a set of OFDs $\Sigma$ by axioms.
\begin{definition} $(Closure)$ \label{def:closure}
The closure of $\set{X}$, denoted as $\set{X}^{+}$, with respect to the set of OFDs $\Sigma$ is defined as $\set{X}^{+}$ $=$ $\{\A{A}$ $|$ $\Sigma$ $\vdash$ $\set{X}$ $\rightarrow$ $\A{A} \}$.
\end{definition}
%lemma added because it is used in Proof of Theorem 3.2

\begin{lemma}\label{lemma:closure}
$\Sigma$ $\vdash$ $\set{X}$ $\rightarrow$ $\set{Y}$ iff $\set{Y}$ $\subseteq$ $\set{X}^{+}$.
\end{lemma}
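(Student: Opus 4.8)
The statement to prove is the standard closure lemma: $\Sigma \vdash X \rightarrow Y$ if and only if $Y \subseteq X^+$, where $X^+ = \{A \mid \Sigma \vdash X \rightarrow A\}$. This is analogous to the classical FD closure lemma, and the proof relies only on the three axioms from Theorem~\ref{theorem:complete} (Identity, Decomposition, Composition). Let me sketch both directions.

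For the "if" direction ($Y \subseteq X^+ \Rightarrow \Sigma \vdash X \rightarrow Y$): write $Y = \{A_1, \dots, A_k\}$. Since each $A_i \in X^+$, by definition $\Sigma \vdash X \rightarrow A_i$ for every $i$. The plan is to repeatedly apply Composition to combine these: from $X \rightarrow A_1$ and $X \rightarrow A_2$ we get $XX \rightarrow A_1 A_2$, i.e. $X \rightarrow A_1 A_2$ (using that $XX = X$ as sets). Iterating $k-1$ times yields $\Sigma \vdash X \rightarrow A_1 \cdots A_k = Y$. One edge case: if $Y = \emptyset$, then $\Sigma \vdash X \rightarrow \emptyset$ follows from Identity ($X \rightarrow X$) and Decomposition (with $\emptyset \subseteq X$).

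For the "only if" direction ($\Sigma \vdash X \rightarrow Y \Rightarrow Y \subseteq X^+$): suppose $\Sigma \vdash X \rightarrow Y$. For any $A \in Y$, we have $\{A\} \subseteq Y$, so by Decomposition $\Sigma \vdash X \rightarrow A$, hence $A \in X^+$ by definition of closure. Since this holds for every $A \in Y$, we conclude $Y \subseteq X^+$.

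The proof is essentially routine and I expect no serious obstacle; the only thing requiring minor care is the treatment of attribute sets as sets (so that $XX = X$ and Composition can be iterated cleanly), and the degenerate case $Y = \emptyset$. Everything goes through using only the stated axiomatization, with no appeal to the semantic definition of OFDs — which is exactly the point, since the lemma is purely about provability under $\vdash$.
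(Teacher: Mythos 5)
Your proof is correct and follows essentially the same route as the paper's: Composition (iterated) for the direction $Y \subseteq X^{+} \Rightarrow \Sigma \vdash X \rightarrow Y$, and Decomposition attribute-by-attribute for the converse. Your extra care about $XX = X$ and the degenerate case $Y = \emptyset$ only makes explicit what the paper leaves implicit.
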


%\vspace{-0.7cm}
%

\begin{proof}
Let $\set{Y}$ $=$ $\{\A{A}_{1}$, $...$, $\A{A}_{n}\}$. Assume $\set{Y}$ $\subseteq$ $\set{X}^{+}$. By definition of $\set{X}^{+}$, $\set{X}$ $\rightarrow$ $\set{A}_{i}$, for all $i$ $\in$ $\{1,..., n\}$. By the Composition inference rule, $\set{X}$ $\rightarrow$ $\set{Y}$ follows. For the other direction, suppose $\set{X}$ $\rightarrow$ $\set{Y}$ follows from the axioms. For each $i$ $\in$ $\{1,..., n\}$, $\set{X}$ $\rightarrow$ $\A{A}_{i}$ follows by Decomposition, and $\set{Y}$
$\subseteq$ $\set{X}^{+}$.
\end{proof}

A sound and complete axiomatization for traditional FDs consists of Transitivity (if $X$ $\rightarrow$ $Y$ and $Y$ $\rightarrow$ $Z$ then $X$ $\rightarrow$ $Z$), Reflexivity (if $Y \subseteq X$ then $X$ $\rightarrow$ $Y$) and Composition. \reviseTwo{Since OFDs subsume traditional FDs, all lemmas and theorems for OFDs apply to FDs, but not vice versa.  For example, Transitivity does not hold for OFDs.} Consider a relation $R(A,B,C)$ with three tuples $\{(a, b, d), (a, c, e), (a, b, d)\}$.  Assume that $b$ is a synonym of $c$ and $d$ is not a synonym of $e$. The synonym OFD $\A{A}$ $\rightarrow_{syn}$ $\A{B}$ holds since $b$ and $c$ are synonyms.  In addition, $\A{B}$ $\rightarrow_{syn}$ $\A{C}$ holds as $b$ and $c$ are not equal.  However, the synonym OFD: $\A{A}$ $\rightarrow_{syn}$ $\A{C}$ does not hold as $d$ and $e$ are not synonyms

Theorem~\ref{theorem:complete} below presents a sound and complete set of axioms (inference rules) for OFDs. The Identity axiom generates \emph{trivial} dependencies that are always true. 
%
%\reviseOne{The proof of completeness of axiomatization for traditional FDs consists of  tableaux with only two tuples based on Ullman's proof~\cite{Ull88}. Checking pairs of tuples, as in traditional FDs, to assert that two tuples whose left-hand-side attribute values are equal must have syntactically equal right-hand side attribute values, is not sufficient for OFDs, as illustrated in Section~\ref{sec:prelim} via Table~\ref{tab:mVio}. Even if all pairs of tuples share a common class, the intersection might be empty, which can be shown via three tuples.}

%Thus, the proof of completness of axioms for OFDs consists of tableaux with three tuples, instead of two tuples used for traditional FDs.
 %Note that the tableaux for traditional FDs are divided into the set consisting of attributes in the closure $\set{X}^{+}$ and all remaining attributes, and are additionally divided by splitting $\set{X}^{+}$ into $\set{X}$ and $\set{X}^{+} \setminus \set{X}$ for OFDs. 
%Hence, three tuples for ontological FDs (and two tuples for FDs) are needed for every possible set $\set{X}$.}

\begin{theorem}\label{theorem:complete}
These axioms are sound and complete for OFDs.
    \begin{enumerate}[nolistsep]
        \item[O1] Identity: $\forall X \subseteq R$, $X$ $\rightarrow$ $X$
        \item[O2] Decomposition: If $X$ $\rightarrow$ $Y$ and $Z$ $\subseteq$ $Y$
            then $X$ $\rightarrow$ $Z$
        \item[O3] Composition: If $X$ $\rightarrow$ $Y$ and $Z$ $\rightarrow$
        $W$ then $XZ$ $\rightarrow$ $YW$
    \end{enumerate}
\end{theorem}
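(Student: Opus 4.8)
The plan is to establish soundness and completeness separately. For \textbf{soundness}, I would check each axiom directly against Definition~\ref{def:synonym_ofd}. Axiom O1 (Identity) is immediate: for $X \to X$, each equivalence class $x \in \Pi_X(I)$ has constant $X$-values, so for any attribute $A \in X$ all $A$-values in $x$ are literally equal, hence trivially share a common sense (any sense in $names(a)$ for the single value $a = t[A]$), so the intersection of the $names$ sets is nonempty. For O2 (Decomposition), suppose $I \models X \to Y$ and $Z \subseteq Y$. Fix $x \in \Pi_X(I)$ and $A \in Z$; since $A \in Y$, satisfaction of $X \to Y$ gives a sense common to all $A$-values in $x$, which is exactly what $X \to Z$ requires. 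For O3 (Composition), suppose $I \models X \to Y$ and $I \models Z \to W$, and consider $x \in \Pi_{XZ}(I)$. The key observation is that $x$ refines both some class in $\Pi_X(I)$ and some class in $\Pi_Z(I)$, because equal values on $XZ$ imply equal values on $X$ and on $Z$. Hence for $A \in Y$, the tuples of $x$ all lie in one $\Pi_X$-class, so their $A$-values share a sense; similarly for $A \in W$ using $\Pi_Z$. Therefore every attribute of $YW$ is ``resolved'' on $x$, giving $I \models XZ \to YW$.

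For \textbf{completeness}, I would follow the classical Armstrong-style argument adapted to the closure machinery already set up in Lemma~\ref{lemma:closure}. Given $\Sigma$ and an OFD $X \to Y$ with $I \models \Sigma$-whenever-$I\models\Sigma$ (i.e., $X \to Y$ is a semantic consequence), I must show $\Sigma \vdash X \to Y$. By Lemma~\ref{lemma:closure} it suffices to show $Y \subseteq X^{+}$. The standard approach is the contrapositive: assuming some $A \in Y$ with $A \notin X^{+}$, construct a two-``block'' counterexample instance that satisfies $\Sigma$ but violates $X \to A$. Concretely, I would build an instance on two tuples $t_1, t_2$ where $t_1$ and $t_2$ agree exactly on the attributes in $X^{+}$ and are forced to differ semantically (non-synonymous, distinct values) on every attribute outside $X^{+}$. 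Because OFD violation on a two-tuple class reduces to the two values not being synonyms, this instance violates $X \to A$ (as $A \notin X^{+}$, the two $A$-values are non-synonyms). The crux is verifying $I \models \Sigma$: for any $\sigma: V \to W$ in $\Sigma$, if $t_1, t_2$ agree on $V$ then $V \subseteq X^{+}$, so by Decomposition/Composition applied to the defining OFDs of $X^{+}$ one gets $X \to V$ and then $X \to W$ — hence $W \subseteq X^{+}$, meaning $t_1, t_2$ agree (literally, hence synonymously) on all of $W$, so $\sigma$ is not violated.

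The \textbf{main obstacle} is the completeness direction, and specifically the counterexample construction: unlike ordinary FDs, I cannot use an arbitrary two-valued domain, because the semantics of OFDs depend on the synonym/sense structure of an ontology $S$. I need to exhibit an actual ontology (a tree) and an assignment of values to attributes such that (i) for attributes inside $X^{+}$ the two tuples carry the \emph{same} value, and (ii) for attributes outside $X^{+}$ the two tuples carry values that genuinely have \emph{disjoint} $names$-sets (no common sense) in $S$ — e.g., distinct leaves under separate single-child branches so that each value has a unique private sense. I would spell out such an $S$ explicitly and confirm that with it the two-tuple instance behaves exactly like the FD counterexample on the relevant classes, while singleton classes (over any attribute set not constant) cannot cause violations per Lemma~\ref{lemma:stripped}. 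Once that ontology is fixed, the rest is the routine closure argument above. I would also note that the proof simultaneously re-derives why Transitivity is \emph{absent}: it is never used, and the excerpt's three-tuple example shows it is in fact unsound, so completeness must go through without it.
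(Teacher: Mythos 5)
Your soundness argument is fine and matches the paper's. The completeness direction, however, has a genuine gap, and it is exactly at the step you flag as routine. Your counterexample instance makes the two tuples agree on \emph{all} of $\set{X}^{+}$ and be non-synonymous outside $\set{X}^{+}$, and your verification that this instance satisfies $\Sigma$ runs: if the tuples agree on $\set{V}$ then $\set{V} \subseteq \set{X}^{+}$, hence $\set{X} \rightarrow \set{V}$, ``and then $\set{X} \rightarrow \set{W}$.'' That last inference is Transitivity (from $\set{X} \rightarrow \set{V}$ and $\set{V} \rightarrow \set{W} \in \Sigma$), which is neither an axiom here nor derivable --- indeed the paper shows it is unsound for OFDs --- so the claim ``$\set{V} \subseteq \set{X}^{+}$ and $\set{V} \rightarrow \set{W} \in \Sigma$ imply $\set{W} \subseteq \set{X}^{+}$'' is false in general. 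Concretely, take $\Sigma = \{\A{A} \rightarrow \A{B},\ \A{B} \rightarrow \A{C}\}$ and $\set{X} = \{\A{A}\}$; then $\set{X}^{+} = \{\A{A},\A{B}\}$ (since $\A{A} \rightarrow \A{C}$ is not derivable), and your two tuples, equal on $\A{A},\A{B}$ and non-synonymous on $\A{C}$, violate $\A{B} \rightarrow \A{C} \in \Sigma$. So the constructed instance does not satisfy $\Sigma$ and cannot witness $\Sigma \not\models \set{X} \rightarrow \set{Y}$; this also undercuts your closing remark that transitivity ``is never used'' --- your argument silently uses it. Note the classical Armstrong two-block construction works for FDs precisely because FDs have Transitivity; it does not transfer to OFDs.

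The paper's construction repairs this with three value regimes rather than two: the tuples are syntactically \emph{equal} on $\set{X}$, are \emph{synonyms but not equal} on $\set{X}^{+} \setminus \set{X}$, and are non-synonymous (and unequal) on all remaining attributes. Because OFD antecedents are evaluated by syntactic equality, any $\set{V} \rightarrow \set{Z} \in \Sigma$ can only be threatened when $\set{V} \subseteq \set{X}$ (not merely $\set{V} \subseteq \set{X}^{+}$); then $\set{X}\set{V} = \set{X}$, and Identity, Decomposition and Composition alone yield $\set{X} \rightarrow \A{A}$ for any violated $\A{A} \in \set{Z} \setminus \set{X}^{+}$, giving the contradiction $\A{A} \in \set{X}^{+}$ without ever invoking Transitivity; attributes in $\set{X}^{+} \setminus \set{X}$ cause no violation because their values are synonyms, and they cause no spurious antecedent agreement because they are unequal. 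In the example above this instance satisfies $\A{B} \rightarrow \A{C}$ vacuously (the $\A{B}$-values differ, so the equivalence classes are singletons) while still violating $\A{A} \rightarrow \A{C}$. Your ontology-construction concern (exhibiting values with disjoint versus overlapping $names$ sets in a tree ontology) is legitimate but easily handled; the real missing idea is this intermediate ``synonymous-but-unequal'' band on $\set{X}^{+} \setminus \set{X}$.
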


%\eatTR{
%\rev{

\noindent{\scriptsize PROOF.}
First we prove that the axioms are sound. That is, if $\set{\Sigma}$ $\vdash$ $\set{X}$ $\rightarrow$ $\set{Y}$ then $\Sigma$ $\models$ $\set{X}$ $\rightarrow$ $\set{Y}$. The Identity axiom is clearly sound. We cannot have a relation with tuples that agree on $\set{X}$ yet are not in a synonym relationship. To prove Decomposition, suppose we have a relation that satisfies $\set{X}$ $\rightarrow$ $\set{Y}$ and $\set{Z} \subseteq \set{Y}$. Therefore, for all tuples that agree on $\set{X}$, they are in a synonym relationship on all attributes in $\set{Y}$ and hence, also on $\set{Z}$. Therefore, $\set{X}$ $\rightarrow$ $\set{Z}$. The soundness of Composition is an extension of the same argument.

Below we present the completeness proof, that is, if $\Sigma$ $\models$ $\set{X}$ $\rightarrow$ $\set{Y}$ then $\Sigma$ $\vdash$ $\set{X}$ $\rightarrow$ $\set{Y}$. Without loss of generality, we consider a table $I$ with three tuples shown in Table \ref{table:tablet}.  We divide the
\begin{wraptable}{l}{5.2cm}
\vspace{-3mm}
%\center

\begin{tabular}{|c|c|c|}
    \hline
    \multicolumn{2}{|c|}{$\set{X}^{+}$} & \multicolumn{1}{|c|}{}\\
    \hline
    $\set{X}$ & $\set{X}^{+} \setminus \set{X}$ & Other attributes\\
    \hline
    $v...v$ & $v...v$ & $v...v$\\
    \hline
    $v...v$ & $v'...v'$ & $w...w$\\
    %\hline
    %$v...v$ & $v''...v''$ & $u...u$\\
    \hline
\end{tabular}
\caption{Table template for OFDs.}\label{table:tablet}
\vskip -0.3cm

\vspace{-6mm}
\end{wraptable} 
attributes of $I$ into three subsets: $\set{X}$, the set consisting of attributes in the closure $\set{X}^{+}$ minus attributes in $\set{X}$, and all remaining attributes. Assume that the values $v$ and $v'$ are not equal ($v$ $\not =$ $v'$ and $v'$ $\not =$ $v''$), but they are in a synonym relationship. Also, $v$ and $w$ are not synonyms, and hence, they are also not equal.

We first show that all dependencies in the set of OFDs $\Sigma$ are satisfied by a table $I$ ($I$ $\models$ $F$). Since OFD axioms are sound, OFDs inferred from $\Sigma$ are true. Assume $\set{V}$ $\rightarrow$ $Z$ is in $\Sigma$, however, it is not satisfied by $I$. Therefore,  $\set{V}$ $\subseteq$ $\set{X}$ because otherwise the tuples of $I$ disagree on some attribute of $\set{V}$ since $v$ and $v'$  as well as $v$ and $w$ are not equal, and consequently an OFD $\set{V}$ $\rightarrow$ $\set{Z}$ would not be violated. Moreover, $\set{Z}$ cannot be a subset of $\set{X}^{r}$ ($\set{Z}$ $\not \subseteq$ $\set{X}^{+}$), or else $\set{V}$ $\rightarrow$ $\set{Z}$ would be satisfied by $I$. Let $\A{A}$ be an attribute of $\set{Z}$ not in $\set{X}^{+}$. Since, $\set{V}$ $\subseteq$ $\set{X}$, $\set{X}$ $\rightarrow$ $\set{V}$ by Reflexivity. Also a dependency $\set{V}$ $\rightarrow$ $\set{Z}$ is in $\Sigma$, hence, by Decomposition, $\set{V}$ $\rightarrow$ $\A{A}$. By Composition $\set{XV}$
$\rightarrow$ $\set{V}\A{A}$ can be inferred, therefore, $\set{X}$ $\rightarrow$ $\set{V}\A{A}$ as $\set{V}$ $\subseteq$ $\set{X}$. However, then Decomposition rule tells us that $\set{X}$ $\rightarrow$ $\A{A}$,
which would mean by the definition of the closure that $\A{A}$ is in $\set{X}^{+}$, which we assumed not to be the case. 
Contradiction. 
%An OFD $\set{V}$ $\rightarrow$ $\set{Z}$ which is in $\Sigma$ is satisfied by $I$.

Our remaining proof obligation is to show that any OFD not inferable from a set of OFDs $\Sigma$ with OFD axioms ($\Sigma$ $\not \vdash$ $\set{X}$ $\rightarrow$ $\set{Y}$) is not true ($\Sigma$ $\not \models$ $\set{X}$ $\rightarrow$ $\set{Y}$). Suppose it is satisfied
($\Sigma$ $\models$ $\set{X}$ $\rightarrow$ $\set{Y}$). By Reflexivity, $\set{X}$ $\rightarrow$ $\set{X}$, therefore, by Lemma \ref{lemma:closure}, $\set{X}$ $\subseteq$ $X^{+}$. Since $\set{X}$ $\subseteq$ $\set{X}^{+}$, it follows by the construction of Table $I$ that $\set{Y}$ $\subseteq$
$\set{X}^{+}$. Otherwise, the tuples of Table $I$ agree on $\set{X}$ but are not in a synonym relationship on some attribute $\A{A}$ from $\set{Y}$. Then, from Lemma \ref{lemma:closure} it can be inferred that $\set{X}$ $\rightarrow$ $\set{Y}$. Contradiction. Thus, whenever $\set{X}$ $\rightarrow$ $\set{Y}$ does not follow from $\Sigma$ by OFDs
axioms, $\Sigma$ does not logically imply $\set{X}$ $\rightarrow$ $\set{Y}$. That is, the axiom system is complete, and this ends the proof of Theorem \ref{theorem:complete}. \qedsymbol
%\end{proof}
%}

%\end{example}

\eatTR{
\begin{lemma}$(Reflexivity)$\label{lemma:reflexivity}
If $\set{Y}$ $\subseteq$ $\set{X}$, then $\set{X}$ $\rightarrow$ $\set{Y}$.
\end{lemma}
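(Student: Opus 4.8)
The plan is to derive Reflexivity as a consequence of the three primitive axioms O1--O3 from Theorem~\ref{theorem:complete}, rather than assuming it. First I would invoke the Identity axiom O1, which gives $\set{X} \rightarrow \set{X}$ for every $\set{X} \subseteq R$. Then, since by hypothesis $\set{Y} \subseteq \set{X}$, I would apply the Decomposition axiom O2 to $\set{X} \rightarrow \set{X}$ with the subset $\set{Y}$, yielding $\set{X} \rightarrow \set{Y}$. That is the entire derivation: $\Sigma \vdash \set{X} \rightarrow \set{Y}$ by two rule applications, for any $\Sigma$.

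The one point worth being careful about is circularity: the completeness argument in the proof of Theorem~\ref{theorem:complete} already appeals to ``Reflexivity,'' so this lemma is precisely what legitimizes that usage. Hence the derivation must be carried out using only O1, O2, O3 (the Identity, Decomposition, and Composition axioms) and must not itself cite Reflexivity. Since the two-step argument above uses only O1 and O2, this requirement is met, and there is no genuine obstacle — the ``hard part'' is merely noticing that the statement is a corollary of O1 and O2 and does not need Composition at all.

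I would also note, for completeness of exposition, that soundness of the derived rule is immediate from the soundness of O1 and O2 already established in the theorem's proof, so nothing further is needed on the semantic side. If desired, one could alternatively remark that Reflexivity can equally be seen as a special case of the semantic definition directly (tuples agreeing on $\set{X}$ trivially agree, hence are in a synonym relationship, on every $\set{Y} \subseteq \set{X}$), but presenting it as a syntactic consequence of the axioms is the more useful form here, since it will be reused as a derived inference rule in the subsequent optimization lemmas and in the closure computation.
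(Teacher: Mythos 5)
Your derivation is correct and matches the paper's own proof exactly: Identity (O1) gives $\set{X} \rightarrow \set{X}$, and Decomposition (O2) with $\set{Y} \subseteq \set{X}$ yields $\set{X} \rightarrow \set{Y}$. No further comment is needed.
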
}

\eatTR{\begin{proof}
$\set{X}$ $\rightarrow$ $\set{X}$ holds by Identity axiom. Therefore, it can be inferred by the Decomposition inference rule that $\set{X}$ $\rightarrow$ $\set{Y}$ holds.
\end{proof}}

\eatTR{Union inference rule shows what can be inferred from two or more dependencies which have the same sets on the left side.}

\eatTR{\begin{lemma}$(Union)$ \label{lemma:union}
If $\set{X}$ $\rightarrow$ $\set{Y}$ and $\set{X}$ $\rightarrow$ $\set{Z}$, then $\set{X}$ $\rightarrow$ $\set{YZ}$.
\end{lemma}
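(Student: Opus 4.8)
The plan is to derive Union directly from the Composition axiom (O3), exactly as one does for traditional FDs. We are given $\set{X} \rightarrow \set{Y}$ and $\set{X} \rightarrow \set{Z}$. Applying O3 with both the first and second dependency playing the roles of $X\rightarrow Y$ and $Z\rightarrow W$ respectively, we obtain $\set{X}\set{X} \rightarrow \set{Y}\set{Z}$.

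The only point to observe is that juxtaposition of attribute sets denotes set union, so $\set{X}\set{X} = \set{X} \cup \set{X} = \set{X}$. Hence the inferred dependency is literally $\set{X} \rightarrow \set{Y}\set{Z}$, which is the claim. If one prefers to be fully explicit, one can note that $\set{X} \rightarrow \set{X}\set{X}$ is available and use Decomposition to strip the duplicated copy, but this is unnecessary since $\set{X}\set{X}$ and $\set{X}$ name the same set of attributes.

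I would carry this out in a single short step: invoke O3 on the two hypotheses, then rewrite $\set{X}\set{X}$ as $\set{X}$. There is essentially no obstacle here — the lemma is a one-line consequence of Composition; the only thing to be careful about is not to appeal to Transitivity or Augmentation, since (as the discussion preceding Theorem~\ref{theorem:complete} stresses) those do not hold for OFDs, whereas Composition alone suffices.

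\begin{proof}
By Composition (O3) applied to $\set{X}$ $\rightarrow$ $\set{Y}$ and $\set{X}$ $\rightarrow$ $\set{Z}$, we infer $\set{X}\set{X}$ $\rightarrow$ $\set{Y}\set{Z}$. Since $\set{X}\set{X}$ denotes $\set{X} \cup \set{X} = \set{X}$, this is $\set{X}$ $\rightarrow$ $\set{Y}\set{Z}$.
\end{proof}
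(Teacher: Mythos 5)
Your proof is correct and is essentially the paper's own argument: the paper likewise derives Union in one step from Composition, and your explicit remark that $\set{X}\set{X}$ collapses to $\set{X}$ just spells out what the paper leaves implicit. One small aside: your caution against Augmentation is misplaced, since Augmentation \emph{does} hold for OFDs (it is Opt-2, itself a consequence of Reflexivity and Composition); only Transitivity fails, but this does not affect your proof.
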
}

\eatTR{\begin{proof}
We are given $\set{X}$ $\rightarrow$ $\set{Y}$ and $\set{X}$ $\rightarrow$ $\set{Z}$. Hence, the Composition axiom can be used to infer $\set{X}$ $\rightarrow$ $\set{YZ}$.
\end{proof}}

\begin{definition}\emph{~\cite{Lien82}}
A functional dependency $X \rightarrow Y$ with \emph{nulls} called Null Functional Dependency (NFD) states that whenever two tuples agree on non-null values in $X$, they agree on the values in $Y$, which may be partial.
\end{definition}

\begin{theorem}\emph{~\cite{Lien82}}\label{theorem:NFDcomplete}
These axioms are sound and complete for NFDs.
    \begin{enumerate}[nolistsep]
        \item[N1] Reflexivity: $\forall Y \subseteq X$, $X$ $\rightarrow$ $Y$
        \item[N2] Append: If $X$ $\rightarrow$ $Y$ and $Z$ $\subseteq$ $W$, then $XW$ $\rightarrow$ $YZ$
        \item[N3] Union: If $X$ $\rightarrow$ $Y$ and $Y$ $\rightarrow$ $Z$, then $X$ $\rightarrow$ $Z$
        \item[N4] Simplification: If $X$ $\rightarrow$ $YZ$, then $X$ $\rightarrow$ $Y$ and $X$ $\rightarrow$ $Z$
    \end{enumerate}
\end{theorem}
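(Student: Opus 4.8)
The plan is to establish the two halves separately — soundness (each of N1--N4 preserves logical implication) and completeness (every implied NFD is derivable) — and the first thing I would do is recast the semantics in a symmetric form that makes soundness transparent. Call two tuples $t_1,t_2$ \emph{$X$-compatible} if for every $A\in X$ the values $t_1[A],t_2[A]$ do not contradict one another (not both non-null and distinct); then $r\models X\rightarrow Y$ exactly when every $X$-compatible pair of tuples of $r$ is $Y$-compatible. The phrase ``agree on the values in $Y$, which may be partial'' is precisely ``$Y$-compatible,'' so the hypothesis and the conclusion of an NFD are the same kind of condition — which is what will make the transitivity rule N3 behave.

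Soundness is then a sequence of one-line checks against this definition, over arbitrary instances (nulls included). N1: if $Y\subseteq X$, an $X$-compatible pair is $Y$-compatible by restriction. N3: if $r\models X\rightarrow Y$ and $r\models Y\rightarrow Z$, an $X$-compatible pair is $Y$-compatible, hence $Z$-compatible. N2 (with $Z\subseteq W$): an $XW$-compatible pair is $X$-compatible (hence $Y$-compatible by $X\rightarrow Y$) and $W$-compatible (hence $Z$-compatible since $Z\subseteq W$), therefore $YZ$-compatible. N4: a $YZ$-compatible pair is $Y$-compatible and $Z$-compatible.

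For completeness I would use the standard closure construction. First derive the auxiliary union rule $\{X\rightarrow Y,\ X\rightarrow Z\}\vdash X\rightarrow YZ$ from the system: from $X\rightarrow Z$ and N2 get $X\rightarrow XZ$; from $X\rightarrow Y$ and N2 get $XZ\rightarrow YZ$; then N3 gives $X\rightarrow YZ$. Together with $X\rightarrow X$ from N1 and decomposition N4, this yields $X\subseteq X^{+}$ and $\Sigma\vdash X\rightarrow Y \Leftrightarrow Y\subseteq X^{+}$, where $X^{+}=\{A\mid \Sigma\vdash X\rightarrow A\}$. Now assume $\Sigma\models X\rightarrow Y$ but $\Sigma\not\vdash X\rightarrow Y$, pick $A_0\in Y\setminus X^{+}$, and form the null-free two-tuple instance $I=\{t_1,t_2\}$ with $t_1[A]=t_2[A]=0$ for $A\in X^{+}$ and $t_1[A]=0,\ t_2[A]=1$ for $A\notin X^{+}$. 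If $V\rightarrow W\in\Sigma$ and $t_1,t_2$ are $V$-compatible, then (no nulls, and the tuples agree exactly on $X^{+}$) $V\subseteq X^{+}$, so $\Sigma\vdash X\rightarrow V$; with $V\rightarrow W$ and N3, $\Sigma\vdash X\rightarrow W$, so $W\subseteq X^{+}$ and the pair is $W$-compatible — hence $I\models\Sigma$. But $t_1,t_2$ are $X$-compatible and differ on $A_0\in Y$, so $I\not\models X\rightarrow Y$, contradicting $\Sigma\models X\rightarrow Y$; therefore $\Sigma\vdash X\rightarrow Y$.

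The only real obstacle is handling the nulls correctly, and the proof is arranged so that this is confined to the soundness side: soundness quantifies over all instances, which is why the compatibility reformulation is worth isolating up front, whereas completeness merely needs one counterexample and the classical null-free Armstrong two-tuple relation suffices (adding nulls never creates new implications). The subtle point is that the whole argument collapses if ``partial agreement'' in the conclusion of an NFD is read asymmetrically from the ``non-null agreement'' in its hypothesis — under such a reading N3 is not even sound — so pinning both down as the same compatibility predicate is the crux.
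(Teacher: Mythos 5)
You have proved a different theorem from the one being cited. Note first that the paper offers no proof of this statement at all --- it is quoted from Lien~\cite{Lien82} --- so the question is whether your argument establishes Lien's result, and it does not, because you resolved the semantic ambiguity in the wrong direction. You adopt the symmetric ``compatibility'' semantics precisely so that N3 as printed (which is literally transitivity) becomes sound; but under that semantics nulls are inert, NFD implication collapses to ordinary FD implication, and your proof is essentially Armstrong's theorem in disguise. For Lien's NFDs the reading is asymmetric (hypothesis: the two tuples are non-null and equal on $X$; conclusion: agreement on $Y$, where the tuples may be partial), and under that semantics transitivity is famously \emph{unsound}: with tuples $(a,\mathrm{null},c_1)$ and $(a,\mathrm{null},c_2)$, both $A\rightarrow B$ and $B\rightarrow C$ hold while $A\rightarrow C$ fails. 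That failure is the entire reason the axiom list is Reflexivity, Append, Union, Simplification rather than Armstrong's system. The printed formula for N3 is a typo: its name is Union, and the paper's proof of Theorem~\ref{theorem:equiv} uses N3 exactly as ``from $X\rightarrow Y$ and $X\rightarrow Z$ infer $X\rightarrow YZ$.'' Indeed, if N3 really were transitivity, Theorem~\ref{theorem:equiv} would transfer transitivity to the OFD system, contradicting the paper's explicit counterexample showing transitivity fails for OFDs. So the ``subtle point'' you flag at the end is real, but the asymmetric reading is the correct one, and the statement you made sound by fiat is exactly the rule the theorem is designed to exclude.

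This choice also hollows out the completeness half. Under the intended semantics the null-free two-tuple Armstrong instance cannot serve as the counterexample: it satisfies every classical consequence of $\Sigma$, including transitive consequences such as $A\rightarrow C$ above that are \emph{not} NFD-consequences, so it would certify too much. Lien-style completeness has to construct counterexamples containing nulls --- e.g., two tuples equal and non-null on $X$, null on $X^{+}\setminus X$, and carrying distinct non-null values outside $X^{+}$ --- and correspondingly the closure is the one-step closure $X\cup\bigcup\{W \mid V\rightarrow W\in\Sigma,\ V\subseteq X\}$, since transitivity is unavailable to iterate it. Your parenthetical claim that ``adding nulls never creates new implications'' is true but beside the point: the content of the theorem is that nulls \emph{destroy} implications, and the proof must exhibit instances with nulls to witness that. (Your derivation of the union rule from N2 and transitivity is internally fine, but it is doing work that the intended N3 already provides.)
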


Interestingly, the definitions of OFDs and NFDs are semantically different, i.e., a satisfying OFD does not necessarily imply a corresponding NFD is true (e.g., in Table~\ref{tab:cleanexample} an OFD [CC]  $\rightarrow$  [CTRY] holds, but a corresponding NFD [CC]  $\rightarrow$  [CTRY] does \emph{not} hold), and vice versa. Also, while data verification for FDs and NFDs can be done on pairs of tuples, for OFDs it has to be performed on an entire equivalence class over the left-hand-side attributes. Following Table~\ref{tab:mVio}, although all pairs of $t[Y]$ values share a common class, (i.e., \{\emph{v,w}\}: \E{D}, \{\emph{v,z}\}: \E{C}, \{\emph{w,z}\}: \E{F}), the intersection of these three classes is empty. Hence, $ \{ t_{1}, t_{2} \} \models  \set{X} \rightarrow \set{Y}$, $ \{ t_{2}, t_{3} \} \models  \set{X} \rightarrow \set{Y}$ and $ \{ t_{1}, t_{3} \} \models \set{X} \rightarrow \set{Y}$, but $ \{ t_{1}, t_{2}, t_{3} \} \not \models \set{X} \rightarrow \set{Y}$. However, their logical inference is equivalent. Despite different bases for OFD and NFD axioms, one can show that the axiom systems are equivalent.

\begin{theorem}\label{theorem:equiv}
The axiom systems for OFDs in Theorem~\ref{theorem:complete} and NFDs in Theorem~\ref{theorem:NFDcomplete} are equivalent.
\end{theorem}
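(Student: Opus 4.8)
The assertion is that the OFD proof system (O1--O3) and the NFD proof system (N1--N4) have the same deductive strength: for every set $\Sigma$ and every candidate $\set{X}\rightarrow\set{Y}$, either both systems prove it from $\Sigma$ or neither does. The cleanest route is to establish mutual admissibility of the rules, that is, to show each of O1--O3 is a derived rule of the NFD system and each of N1--N4 is a derived rule of the OFD system. Since both directions then preserve the respective closure operators, the closures $\set{X}^{+}$ computed under the two systems coincide, and by Lemma~\ref{lemma:closure} so do the entailment relations. So the proof splits into direction (A), deriving O1, O2, O3 from N1--N4, and direction (B), deriving N1, N2, N3, N4 from O1--O3.

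Direction (A). Identity (O1) is the instance $\set{Y}:=\set{X}$ of Reflexivity (N1). Decomposition (O2) follows from Simplification (N4): given $\set{X}\rightarrow\set{Y}$ and $\set{Z}\subseteq\set{Y}$, split $\set{Y}=\set{Z}\cup(\set{Y}\setminus\set{Z})$ and read off $\set{X}\rightarrow\set{Z}$. The step that actually needs an idea is Composition (O3). From $\set{X}\rightarrow\set{Y}$ and $\set{Z}\rightarrow\set{W}$, apply Append (N2) to each premise to promote both to the common left-hand side $\set{XZ}$: from $\set{X}\rightarrow\set{Y}$ with $\emptyset\subseteq\set{Z}$ we get $\set{XZ}\rightarrow\set{Y}$, and from $\set{Z}\rightarrow\set{W}$ with $\emptyset\subseteq\set{X}$ we get $\set{XZ}\rightarrow\set{W}$; the Union rule (N3) then merges these into $\set{XZ}\rightarrow\set{YW}$, which is exactly O3. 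This ``align the antecedents, then take the union'' manoeuvre is the heart of the direction, since Composition is a genuinely binary rule with no one-line counterpart among N1--N4.

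Direction (B). Reflexivity (N1) is $\set{X}\rightarrow\set{X}$ by O1 followed by O2 (this is the Reflexivity lemma, Lemma~\ref{lemma:reflexivity}). Simplification (N4) is two applications of O2. The Union rule (N3) is Composition applied to $\set{X}\rightarrow\set{Y}$ and $\set{X}\rightarrow\set{Z}$, giving $\set{X}\set{X}\rightarrow\set{Y}\set{Z}$, i.e. $\set{X}\rightarrow\set{YZ}$ since $\set{X}\set{X}=\set{X}$ (this is Lemma~\ref{lemma:union}). Finally Append (N2): compose $\set{X}\rightarrow\set{Y}$ with the trivial $\set{W}\rightarrow\set{W}$ (O1) to obtain $\set{XW}\rightarrow\set{YW}$, then apply O2 to drop to $\set{XW}\rightarrow\set{YZ}$, using $\set{YZ}\subseteq\set{YW}$ which holds because $\set{Z}\subseteq\set{W}$. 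With both directions in hand the two systems induce the same closure operator, which is the theorem; equivalently, one checks that the lemmas and theorems already derived above (Reflexivity, Union, Lemma~\ref{lemma:closure}) give every cross-derivation directly.

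I expect the main obstacle to be conceptual rather than computational. Because transitivity is \emph{unsound} for OFDs (the three-tuple example over $R(A,B,C)$ above) while the NFD basis is phrased in a rather different shape, one must be sure that the NFD rules, used together, can never be chained into transitivity -- otherwise the equivalence would simply be false. The derivations above respect exactly this: N3 is invoked only with a shared antecedent, where it collapses to an instance of Composition, and N2/N4 never fuse two dependencies through an intermediate attribute set. Making this bookkeeping explicit -- so that every appeal to an NFD rule in direction (A) lands inside the OFD-sound fragment, and symmetrically every OFD step in direction (B) stays inside the NFD-provable fragment -- is where the real care in the proof lies; the individual rule-to-rule derivations themselves are short.
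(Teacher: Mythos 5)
Your proposal is correct and takes essentially the same route as the paper: both directions are proved by deriving each axiom of one system from the other (O1 from N1, O2 from N4, O3 by applying Append to both premises to align antecedents and then Union; conversely N1, N3, N4 from Identity/Decomposition/Composition, reading N3 as the shared-antecedent union rule, exactly as the paper's proof does). The only cosmetic difference is your derivation of Append, where you compose $X \rightarrow Y$ with the trivial $W \rightarrow W$ and then decompose $YW$ to $YZ$, whereas the paper first obtains $W \rightarrow Z$ by Identity and Decomposition and then composes; these are interchangeable one-step variants of the same argument.
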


\begin{proof}
To prove equivalency, we show that all OFD axioms can be proven from NFD axioms and vice versa. 
\begin{enumerate}
    \item O1.Identity: follows from N1.Reflexivity.
    \item O2.Decomposition: can be inferred from N4.Simplification. 
    \item O3.Composition: since  $X$ $\rightarrow$ $Y$ and $Z$ $\rightarrow$ $W$, it follows from N2.Append that $XZ$ $\rightarrow$ $Y$ and $XZ$ $\rightarrow$ $W$. Hence, by N3.Union, $XZ$ $\rightarrow$ $YZ$ is true.
\end{enumerate}
The other direction: 
\begin{enumerate}
    \item N1.Reflexivity: $X$ $\rightarrow$ $X$ follows by O1.Identity, thus, by O2.Decomposition it can be inferred that $X$ $\rightarrow$ $Y$.
    \item N2.Append: since $Z$ $\subseteq$ $W$, by O1.Identity and O2.Decomposition, it follows that $W$ $\rightarrow$ $Z$. Thus, by O3.Composition $XW$ $\rightarrow$ $YZ$.
    \item N3.Union: since $X$ $\rightarrow$ $Y$ and $X$ $\rightarrow$ $Z$, it follows by O3.Composition that $X$ $\rightarrow$ $YZ$.
    \item N4.Simplification: since $Y$ $\subseteq$ $YZ$, $Z$ $\subseteq$ $YZ$, by O2.Decomposition, we have $X$ $\rightarrow$ $Y$,  $X$ $\rightarrow$ $Z$.
\end{enumerate}
\end{proof}

Theorem~\ref{theorem:equiv} enables us to apply existing algorithms for NFDs to determine whether an OFD holds. Similar conclusions were reached for other classes of dependencies, such as FDs and \emph{pointwise order functional dependencies} (POFDs). While FDs and POFDs are semantically different, the introduced PODs axioms are equivalent to FD axioms,  leading to the same inference~\cite{Wi2001}.

%*******************************************************
%\subsection{An Inference Procedure for OFDs}

Algorithm~\ref{alg:inference} computes the closure of a set of attributes given a set of OFDs. The inference procedure for NFDs, due to equivalency of inference systems, can be applied to discovered and subsequently user-refined OFDs to ensure continued minimality.  
%The discovered dependencies can be manually verified by domain experts, thereby saving valuable time rather than manually defining dependencies from scratch.

\begin{theorem}\rev{\emph{~\cite{Lien82}}}
\label{thm:inference}
Algorithm~\ref{alg:inference} computes, in linear time, the closure $X^{+}$, $X^{+}$ $=$ $\{\A{A}$ $|$ $\Sigma$ $\vdash$ $X$ $\rightarrow$ $\A{A} \}$ , where $\Sigma$ denotes a set of OFDs.
%Algorithm \ref{alg:inference} correctly computes closure $\set{X}^{+}$.
\end{theorem}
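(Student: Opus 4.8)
The plan is to reduce the statement to the corresponding result for Null Functional Dependencies and then discharge its two obligations — semantic correctness and the linear running-time bound — by the same reasoning that establishes the NFD inference procedure in \cite{Lien82}. First I would invoke Theorem~\ref{theorem:equiv}: since the OFD axioms O1--O3 and the NFD axioms N1--N4 are mutually derivable, $\Sigma \vdash X \rightarrow A$ in the OFD system exactly when it holds in the NFD system, so the closure $X^{+}$ of Definition~\ref{def:closure} coincides with the NFD-closure of $X$; and by Lemma~\ref{lemma:closure}, computing $X^{+}$ already decides every consequence $X \rightarrow Y$. Hence it suffices to prove that Algorithm~\ref{alg:inference} computes this closure correctly and in linear time.

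For correctness I would argue soundness and completeness separately. Soundness is an induction on the iterations: the maintained set is initialized to $X$, for which $X \rightarrow X$ holds by O1; and whenever a dependency $V \rightarrow W \in \Sigma$ is applied because its left-hand side $V$ is already contained in the current set $\mathit{cl}$, we have $X \rightarrow B$ provable for each $B \in \mathit{cl}$, hence $X \rightarrow V$ by Composition and Decomposition, hence $X \rightarrow W$, so every attribute added to $\mathit{cl}$ is provably determined by $X$. For completeness I would reuse the two-tuple model underlying the completeness proof of Theorem~\ref{theorem:complete}: let $\mathit{cl}$ be the fixpoint returned by the algorithm, so that $V \subseteq \mathit{cl} \Rightarrow W \subseteq \mathit{cl}$ for every $V \rightarrow W \in \Sigma$, and build the instance $I$ of Table~\ref{table:tablet} whose two tuples are syntactically equal exactly on $\mathit{cl}$ and carry a genuine, non-equal synonym pair on the remaining attributes. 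Every $V \rightarrow W \in \Sigma$ is satisfied by $I$: if $V \not\subseteq \mathit{cl}$ the tuples disagree on $V$ so the dependency is vacuous there, and if $V \subseteq \mathit{cl}$ then $W \subseteq \mathit{cl}$ so the tuples are in a synonym relationship on $W$. Thus $I \models \Sigma$; so if $X \rightarrow A$ were provable, soundness would give $\Sigma \models X \rightarrow A$, whereas $X \subseteq \mathit{cl}$ and $A \notin \mathit{cl}$ make $I \not\models X \rightarrow A$, a contradiction. Hence $A \in \mathit{cl}$, and $\mathit{cl} = X^{+}$.

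For the linear-time claim I would describe the standard bookkeeping of \cite{Lien82}: for each attribute keep the list of dependencies whose left-hand side contains it, and for each dependency a counter of how many of its left-hand-side attributes are still outside the closure, together with a worklist of newly inserted attributes. Processing an attribute popped from the worklist costs time proportional to the number of dependencies containing it on the left, since each such counter is decremented once, and a counter reaching zero triggers insertion of that dependency's right-hand side (a single attribute, by the normalization noted after Definition~\ref{def:synonym_ofd}). Each attribute is processed at most once, and each dependency's counter is decremented at most $|V|$ times in total, so the work is $O(|R| + \sum_{V \rightarrow W \in \Sigma} |V|)$, i.e.\ linear in the size of $\Sigma$.

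The delicate point — and the step I expect to require the most care — is that OFD entailment is \emph{not} tuple-to-tuple and Transitivity fails for OFDs, so the argument must never implicitly close under Transitivity: this is exactly why the completeness half is routed through the explicit two-tuple model of Theorem~\ref{theorem:complete}, which is built from a real synonym pair that is not an equality, rather than through a naive chase that would behave like the ordinary FD closure. Once this is pinned down, the soundness induction, the fixpoint characterization for completeness, and the amortized counter analysis are all routine, and the theorem follows as the OFD instance of the NFD result of \cite{Lien82}.
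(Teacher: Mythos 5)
Your plan mirrors the paper's (soundness induction, a separating two-tuple instance, and the NFD result of~\cite{Lien82} via Theorem~\ref{theorem:equiv}), but the execution proves statements about the wrong algorithm. The guard in Algorithm~\ref{alg:inference} (line~\ref{lines:st}) fires $V \rightarrow Z$ only when $V \subseteq X$, the \emph{input} attribute set, not when $V$ is contained in the current set $X^{n}$. This restriction is deliberate: Transitivity is not sound for OFDs (the paper's three-tuple example in Section~\ref{sec:axioms}), hence not derivable from O1--O3. Your soundness step --- ``$X \rightarrow B$ is provable for each $B$ in the current set, hence $X \rightarrow V$, hence $X \rightarrow W$'' --- is exactly an appeal to Transitivity whenever $V$ contains previously added attributes, and it fails: for $\Sigma = \{A \rightarrow B,\ B \rightarrow C\}$ and $X = \{A\}$, the chaining procedure you describe (including the LINCLOSURE-style counter/worklist bookkeeping in your running-time paragraph) would place $C$ in the output, yet $A \rightarrow C$ is neither provable nor entailed. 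The paper's argument justifies each insertion by Reflexivity on $V \subseteq X$ followed by Composition and Decomposition, with no transitivity; and linearity of the actual algorithm is immediate because every dependency of $\Sigma$ is tested once against the fixed set $X$ --- no propagation through newly inserted attributes is performed, nor would it be correct.

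The completeness half has a second, independent error: the separating instance is built backwards. You make the two tuples syntactically equal exactly on the returned set $\mathit{cl}$ and put a non-equal \emph{synonym} pair on the remaining attributes. Synonymy is precisely what satisfies a synonym OFD, so for $A \notin \mathit{cl}$ the two tuples form one equivalence class on $X$ and their $A$-values share a sense; hence $I \models X \rightarrow A$ and no contradiction is reached. In addition, the fixpoint property you invoke ($V \subseteq \mathit{cl} \Rightarrow W \subseteq \mathit{cl}$ for every $V \rightarrow W \in \Sigma$) is not what Algorithm~\ref{alg:inference} guarantees --- it only guarantees this for $V \subseteq X$ --- so with tuples equal on all of $\mathit{cl}$ a dependency whose left side meets $\mathit{cl} \setminus X$ could itself be violated, breaking $I \models \Sigma$. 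The correct construction is that of Table~\ref{table:tablet}, as in the completeness proof of Theorem~\ref{theorem:complete}: the tuples agree only on $X$, carry non-equal synonyms on $X^{+} \setminus X$, and carry values that are neither equal nor synonyms elsewhere; the violation of $X \rightarrow A$ for $A \notin X^{+}$ comes from those non-synonym values. The reduction through Theorem~\ref{theorem:equiv} is fine in itself, but the NFD system likewise has no Transitivity, so it cannot rescue the chaining argument.
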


\eat{
\eatTR{\begin{proof}
First we show by induction on $k$ that if $\set{Z}$ is placed in $\set{X}^{k}$ in Algorithm \ref{alg:inference}, then $\set{Z}$ is in $\set{X}^{+}$.\\
\indent \emph{Basis}: $k$ = $0$. By Identity axiom $\set{X}$ $\rightarrow$ $\set{X}$.\\
\indent \emph{Induction}: $k$ $>$ $0$. Assume that $\set{X}^{k-1}$ consists only of attributes in $\set{X}^{+}$. Suppose $\set{Z}$ is placed in $\set{X}^{k}$ because $\set{V}$ $\rightarrow$ $\set{Z}$, and $\set{V}$ $\subseteq$ $\set{X}$. By Reflexivity $\set{X}$ $\rightarrow$ $\set{V}$, therefore, by Composition and Decomposition, $\set{X}$ $\rightarrow$ $\set{Z}$. Thus, $\set{Z}$ is in $\set{X}^{+}$.

Now we prove the opposite, if $\set{Z}$ is in $\set{X}^{+}$, then $\set{Z}$ is in the set returned by Algorithm \ref{alg:inference}. Suppose $\set{Z}$ is in $\set{X}^{+}$ but $\set{Z}$ is not in the set returned by Algorithm~\ref{alg:inference}.
Consider table $I$ similar to that in Table~\ref{table:tablet}. Table $I$ has three tuples that agree on attributes in $\set{X}$, are in a synonym or inheritance relationship, respectively, but not equal on \{$\set{X}^{n}$ $\setminus$ $\set{X}$\}, and are not in synonym nor inheritance relationship, respectively, on all other attributes (hence, also not equal).
We claim that $I$ satisfies $\Sigma$. If not, let $\set{P}$ $\rightarrow$ $\set{Q}$ be a dependency in $\Sigma$ that is violated by $I$. Then $\set{P}$ $\subseteq$ $\set{X}$ and $\set{Q}$ cannot be a subset of $\set{X}^{n}$, if the violation happens. Similar argument was
used in the proof of Theorem \ref{theorem:complete}. Thus, by Algorithm~\ref{alg:inference}, Lines~\ref{lines:st}--\ref{lines:end} there exists $\set{X}^{n+1}$, which is a contradiction.
\end{proof}}}

%\begin{example}
%Let $\Sigma$ be the set of inheritance OFDs from Table~\ref{tab:example}: $\A{CC}$ $\rightarrow$ $\A{CTRY}$ and $\{ \A{CC, DIAG} \}$ $\rightarrow$ $\A{MED}$. The inheritance OFD $\A{CC}$ $\rightarrow$ $\A{CTRY}$ holds since the synonym OFD $\A{CC}$ $\rightarrow_{syn}$ $\A{CTRY}$ holds and inheritance OFDs subsume synonym OFDs. The closure $\{ \A{CC, DIAG}\}^{+}$ computed with Algorithm~\ref{alg:inference} is $\{ \A{CC}$, \A{CTRY}, $\A{DIAG}$, $\A{MED} \}$.
%\end{example}

For a given set of OFDs $\Sigma$, we can find an equivalent \emph{minimal} set, as defined below.
%set with a number of useful properties. A \emph{minimal} set of OFDs has single attributes on the right-hand-side and contains no redundant dependencies. 
%that contain no redundant attributes in the antecedent and that contain no redundant dependencies.  
%To achieve this, we can apply Algorithm~\ref{alg:inference} to compute a \emph{minimal cover} of a set of OFDs.

\begin{definition}
\label{def:cover}%
A set $\Sigma$ of OFDs is minimal if
\begin{enumerate}%[nolistsep]
\item
    $\forall$ $\set{X} \rightarrow \set{Y} \in \Sigma$, $\set{Y}$ is a single attribute; \label{cond:one}
\item
     For no $\set{X}$ $\rightarrow$ $\A{A}$ and a proper subset $\set{Z}$ of $\set{X}$ is
    $\Sigma$ $\setminus$ $\{\set{X}$ $\rightarrow$ $\A{A}\}$ $\cup$ $\{\set{Z}$ $\rightarrow$ $\A{A}\}$
    equivalent to $\Sigma$; \label{cond:three}
\item
    For no $\set{X}$ $\rightarrow$ $\set{Y} \in \Sigma$ is $\Sigma$
    $\setminus$ $\{\set{X}$ $\rightarrow$ $\A{A}\}$ equivalent to $\Sigma$. \label{cond:two}
    \end{enumerate}
If $\Sigma$ is minimal and equivalent to a
set of OFDs $\Sigma'$, then we say $\Sigma$ is a minimal cover of $\Sigma'$.
\end{definition}

%\begin{theorem}
%\label{theorem:minimality}
%Every set of OFDs $\Sigma$ has a minimal cover.
%\end{theorem}

\eatTR{\begin{proof}
By the Union and Decomposition inference rules, it is possible to have $\Sigma$ with only a single attribute in the right hand side. We can achieve conditions two other conditions by repeatedly
deleting an attribute and then repeatedly removing a dependency. We can test whether
an attribute $\A{B}$ from $\set{X}$ is redundant for the OFD $\set{X}$ $\rightarrow$ $\A{A}$
by checking if $\A{A}$ is in $\{\set{X} \setminus \A{B}\}^{+}$. We can test whether $\set{X}$ $\rightarrow$ $\A{A}$ is redundant by computing closure $\set{X}^{+}$ with respect to $\Sigma$ $\setminus$ $\{\set{X}$
$\rightarrow$ $\A{A}\}$. Therefore, we eventually reach a set of OFDs which is
equivalent to $\Sigma$ and satisfies conditions \ref{cond:one},
\ref{cond:three} and \ref{cond:two}.
\end{proof}}

\begin{example}
Let $\Sigma$ = $\{ \Sigma_{1}: \A{CC}$ $\rightarrow$ $\A{CTRY}$, $\{ \Sigma_{2}: \A{CC}$, $\A{DIAG} \}$ $\rightarrow$ $\A{MED}$, $\{ \Sigma_{3}:$ $\A{CC}$, $\A{DIAG} \}$ $\rightarrow$ $\{ \A{MED}$, $\A{CTRY}\} \}$. This set is not a minimal cover as $\Sigma_{3}$ follows from $\Sigma_{1}$ and $\Sigma_{2}$ by Composition.
\end{example}

%*********************************************************
\subsection{Optimizations}
\label{sec:3_opt}

As we will show in Section~\ref{sec:fastofd}, the search space of potential OFDs is exponential in the number of attributes, as with traditional FDs.  To improve the efficiency of OFD discovery, we use axioms to prune redundant and non-minimal OFDs.

%In this section, we propose a set of optimizations that improve the performance of our algorithms by pruning redundant candidates from the search space, and by reducing the verification time to check whether a candidate is an OFD.  
%In Section \ref{sec:opt}, we show that these optimizations significantly improve the runtime of our OFD discovery algorithm.
%we evaluate the benefits of these optimizations and show that they achieve significant improvements in running times.  
%We introduce additional inference rules, which follow from axioms, as they will be used in the reminder of this section.

%Lemma 1
\begin{lemma} (Opt-1) If $A$ $\in$ $X$ then $X \rightarrow A$.
\vspace{-0.2cm}
\end{lemma}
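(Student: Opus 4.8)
The plan is to derive this directly from the axiomatization in Theorem~\ref{theorem:complete}, since it is precisely the Reflexivity property specialized to a single consequent attribute. First I would observe that the hypothesis $A \in X$ is equivalent to the set containment $\{A\} \subseteq X$. Then, applying the Identity axiom O1, I obtain the trivial OFD $X \rightarrow X$. Finally, since $\{A\} \subseteq X$, the Decomposition axiom O2 yields $X \rightarrow A$, which is exactly the claim. This chain is identical to the (deferred) Reflexivity lemma, so the proof is a one-line invocation of O1 followed by O2.

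As an alternative, I would note that one can also argue semantically, which reassures us that the syntactic derivation is faithful to the definition of a synonym OFD. Fix any instance $I$ and any equivalence class $x \in \Pi_{X}(I)$. By definition of $\Pi_X$, all tuples in $x$ agree on every attribute of $X$, and in particular on $A$ since $A \in X$. Hence $\{t[A] \mid t \in x\}$ is a singleton, so $\bigcap_{a \in \{t[A] \mid t \in x\}} \mathit{names}(a)$ is just $\mathit{names}$ of that single value, which is nonempty (every value has at least its literal interpretation). Thus the condition of Definition~\ref{def:synonym_ofd} is met for every $x$, and $X \rightarrow_{syn} A$ holds. Since this holds for all $I$, the OFD is valid.

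There is no real obstacle here: the statement is a routine corollary, and the only thing worth flagging is that it relies solely on the axioms that do carry over from traditional FDs (Identity/Decomposition), so no appeal to the OFD-specific semantics of senses is strictly needed. The practical upshot, which I would state as a remark, is that any candidate OFD $X \rightarrow A$ with $A \in X$ is trivial and can be immediately pruned by the discovery algorithm.
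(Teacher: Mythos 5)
Your derivation is exactly the paper's argument: the paper proves Opt-1 by appealing to Reflexivity, which it in turn obtains from the Identity axiom (O1) followed by Decomposition (O2), precisely the one-line chain you give. The additional semantic check via Definition~\ref{def:synonym_ofd} is correct but redundant relative to the paper's proof.
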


\eatTR{\begin{proof}
    It follows from Reflexivity (Lemma~\ref{lemma:reflexivity})
\end{proof}}
If $A \in X$ then $X \rightarrow A$ is a trivial dependency (Reflexivity).

%Lemma 2
\begin{lemma} (Opt-2) If  $X \rightarrow A$ is satisfied over $I$, then $XY \rightarrow A$ is satisfied for all $Y \subseteq R \setminus X$.
\vspace{-0.2cm}
\end{lemma}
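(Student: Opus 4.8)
The plan is to obtain Opt-2 as an immediate corollary of the soundness half of Theorem~\ref{theorem:complete}, exactly as one proves Augmentation for ordinary FDs. First I would observe that $Y \rightarrow Y$ holds by the Identity axiom O1. Applying Composition (O3) to the hypothesis $X \rightarrow A$ together with $Y \rightarrow Y$ yields $XY \rightarrow AY$. Since $A \subseteq AY$, Decomposition (O2) then gives $XY \rightarrow A$. Because the axioms are sound and $X \rightarrow A$ is assumed to hold over $I$, the derived OFD $XY \rightarrow A$ also holds over $I$. This is a two-line chain of axiom applications, so there is essentially no obstacle on this route.

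It is worth also giving the direct semantic argument, since it clarifies why the claim survives the move from the pairwise semantics of FDs to the equivalence-class semantics of OFDs. For any $Y \subseteq R \setminus X$, the partition $\Pi_{XY}(I)$ refines $\Pi_X(I)$: every equivalence class $x' \in \Pi_{XY}(I)$ is contained in some equivalence class $x \in \Pi_X(I)$. Fix such an $x'$ with enclosing class $x$. By hypothesis $X \rightarrow A$ holds over $I$, so by Definition~\ref{def:synonym_ofd} there is a sense $E \in \bigcap_{t \in x} names(t[A])$. Since $x' \subseteq x$, the set $\{t[A] \mid t \in x'\}$ is a subset of $\{t[A] \mid t \in x\}$, hence $E \in \bigcap_{t \in x'} names(t[A])$ as well, so this intersection is nonempty. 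As $x'$ was arbitrary, every class of $\Pi_{XY}(I)$ admits a common sense, i.e., $XY \rightarrow A$ holds over $I$.

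The only point requiring a little care is precisely the observation underlined above: passing to a subclass $x' \subseteq x$ can only shrink the relevant intersection of sense sets, never force the need for a fresh common sense. This is trivial from monotonicity of intersection under subsets, but it is the reason the lemma goes through despite OFD satisfaction being defined over whole equivalence classes rather than tuple pairs (recall the example of Table~\ref{tab:mVio}, where pairwise agreement does not imply class-wide agreement — here the direction is the easy one, from larger classes to smaller). I therefore expect the main write-up to be only a few lines, presenting the axiomatic derivation as the primary proof and optionally noting the refinement argument.
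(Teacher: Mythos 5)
Your proposal is correct and follows essentially the same route as the paper: the paper's proof also derives $XY \rightarrow A$ axiomatically by composing the hypothesis $X \rightarrow A$ with a trivial dependency on $Y$ obtained from Reflexivity (itself a consequence of Identity and Decomposition), and then appeals to soundness. Your additional semantic argument via refinement of $\Pi_{X}$ by $\Pi_{XY}$ and monotonicity of the sense intersection is a valid, slightly more explicit check, but it is not needed beyond what the paper does.
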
 
\eatTR{\begin{proof}
    Assume $X \rightarrow A$. The OFD $X \rightarrow \{ \}$ follows from Reflexivity ((Lemma~\ref{lemma:reflexivity})). Hence, it can be inferred by Composition that $XY \rightarrow A$.
\end{proof}}
If $X \rightarrow A$ holds in $I$, then all OFDs containing supersets of $X$  also hold in $I$ (Augmentation), and can be pruned.  When we identify a key during OFD search, we can apply additional optimizations.

%Lemma 3∂
\begin{lemma} (Opt-3) If $X$ is a key (or superkey) in $I$, then for any $A$,  $X \rightarrow A$ is satisfied in $I$.
\vspace{-0.2cm}
\end{lemma}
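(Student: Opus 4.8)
The plan is to reduce the claim directly to Definition~\ref{def:synonym_ofd} by examining the structure of the partition $\Pi_X(I)$ when $X$ is a superkey. First I would recall that, by the definition of a key (or superkey), no two distinct tuples of $I$ agree on all attributes of $X$; hence every equivalence class $x \in \Pi_X(I)$ is a singleton, i.e.\ $|x| = 1$. This is the only structural fact needed about $I$.

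Next I would invoke Definition~\ref{def:synonym_ofd}: for $X \rightarrow_{syn} A$ to hold, each equivalence class $x$ must admit a sense under which all the $A$-values of its tuples are synonyms, equivalently $\bigcap_{a \in \{t[A] \mid t \in x\}} names(a) \neq \emptyset$. When $x = \{t\}$ is a singleton, the index set $\{t[A] \mid t \in x\}$ is the single value $t[A]$, so this intersection equals $names(t[A])$, which is nonempty because every value occurring in $I$ has at least one interpretation in the ontology. Since this holds for every (necessarily singleton) class $x \in \Pi_X(I)$, we conclude $I \models X \rightarrow_{syn} A$ for an arbitrary attribute $A$, which is exactly the statement.

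There is essentially no obstacle here: the only point requiring care is the standing assumption that $names(v) \neq \emptyset$ for every value $v$ appearing in $I$, which is implicit in the OFD framework (a value with no sense cannot meaningfully participate in a synonym OFD). This also explains why the lemma can be viewed as an immediate corollary of the earlier observation that singleton equivalence classes cannot falsify any OFD --- when $X$ is a superkey the stripped partition $\Pi^{*}_{X}$ is empty, so there is nothing left to verify. I would therefore keep the proof to two or three sentences, emphasizing the reduction to singleton classes rather than re-deriving it from the axioms (though one could alternatively note that a superkey $X$ forces $\Pi_X$ to coincide with the identity partition and then appeal to Opt-1 / Reflexivity on the normalized instance).
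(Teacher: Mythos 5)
Your proposal is correct and follows essentially the same route as the paper: since $X$ is a superkey, every equivalence class in $\Pi_X$ is a singleton, and singleton classes trivially satisfy any OFD $X \rightarrow A$ (the paper's Lemma~\ref{lemma:stripped}). Your extra remark about $names(v) \neq \emptyset$ is a reasonable clarification of an implicit convention, but it does not change the argument.
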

\eatTR{\begin{proof}
    Since $X$ is a super-key, partition $\Pi_{X}$ consists of singleton equivalence classes only. Hence, the OFD $X \rightarrow A$ is valid.
\end{proof}}
For a candidate OFD $\set{X} \rightarrow \A{A}$, if $X$ is a key, then for all $x \in \Pi_{X}$, $|x|$ = 1, and $X \rightarrow A$ always holds.  On the other hand, if $\set{X} \setminus \A{A}$ is a superkey but not a key, then clearly the OFD $\set{X} \rightarrow \A{A}$ is not minimal. This is because there exists $\A{B} \in \set{X}$, such that $\set{X} \setminus \A{B}$ is a superkey and $\set{X} \setminus \A{B} \rightarrow  \A{A}$ holds.

%Lemma 4
\begin{lemma} (Opt-4)  
%If  $|\Pi_{A}(g_{x}[X])|$ = 1 for an $x \in \Pi_{X}$, then all values in $x$ map to a unique value in $\Pi_{A}(g_{x}[X])$.
If all tuples in an equivalence class $x \in \Pi_{X}$ have the same value of $A$, then a traditional FD, and therefore an OFD, is satisfied in $x$.
\vspace{-0.2cm}
\end{lemma}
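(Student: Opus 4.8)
The plan is to observe that this is essentially immediate from the definitions, and the only thing worth stating carefully is why satisfaction of a traditional FD over the class forces satisfaction of the synonym OFD over that class. First I would fix notation: let $x\in\Pi_X$ be an equivalence class such that $t[A]=a$ for every $t\in x$, for some single value $a$. Then for any pair $t_1,t_2\in x$ we have $t_1[X]=t_2[X]$ (they are in the same class) and $t_1[A]=t_2[A]=a$, so the FD $X\rightarrow A$ is not falsified by any pair drawn from $x$; hence the traditional FD holds "in $x$".

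Next I would derive the OFD part. By Definition~\ref{def:synonym_ofd}, the synonym OFD $X\rightarrow_{syn}A$ is satisfied in $x$ iff $\bigcap_{a'\in\{t[A]\mid t\in x\}} names(a')\neq\emptyset$. Since $\{t[A]\mid t\in x\}=\{a\}$, this intersection collapses to $names(a)$, so it suffices to note $names(a)\neq\emptyset$. This holds because every data value has at least one interpretation in the ontology (recall that synonym OFDs subsume traditional FDs precisely by treating a value as having its single literal interpretation when no richer sense is given, so $names(a)$ contains at least that class). Therefore the synonym OFD is satisfied in $x$ as well.

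I would close by remarking that this is exactly the observation already recorded in Lemma~\ref{lemma:stripped}: singleton $A$-projections over a class cannot violate $X\rightarrow A$, which is why stripped partitions suffice during discovery. The step that requires the mildest care — and the only place a reader might pause — is the nonemptiness of $names(a)$; everything else is a direct unfolding of the definitions, so I do not anticipate a real obstacle here, only the need to invoke the modeling convention that each value of $A$ occurring in $I$ is named by at least one class of the ontology.
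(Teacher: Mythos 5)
Your proof is correct and takes essentially the same route as the paper, which disposes of this lemma with a one-line observation that such equivalence classes cannot falsify $X \rightarrow A$ by the definition of synonym OFDs. You merely spell out the one implicit point --- that $names(a)$ is nonempty --- which the paper covers by its stated convention that every value has at least a single literal interpretation (this is how OFDs subsume FDs), so nothing further is needed.
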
 

\eatTR{\begin{proof}
Singleton equivalence classes over attribute set $X$ cannot falsify any OFD $X \rightarrow A$. \\
\end{proof}
}
%Intuitively, if all tuples in an equivalence class $x$ have same value of $A$, then it is not necessary to check for a synonym or inheritance relationship, since a traditional FD is satisfied in $x$.

%Furthermore, we replace partitions with a more compact version called \emph{stripped partitions}. 
A \emph{stripped partition} of $\Pi_{\set{X}}$, denoted   $\Pi^{*}_{\set{X}}$, removes all the equivalence classes of size one.  
%Single tuples cannot violate an OFD.
%Since singleton equivalence classes represent single tuples, which are guaranteed to not violate existing OFDs, we are guaranteed that the partitions remain correct.
For example,
%\begin{example}
in Table~\ref{tab:mVio}, $\Pi_{\A{CC}}$ $=$ $\{ \brac{t_{1}, t_{5},t_{6}}$, $\brac{t_{2}, t_{4},t_{7}},$ $\brac{t_{3}} \}$, whereas the stripped partition removes the singleton equivalence class \{$t_{3}$\}, so $\Pi^{*}_{\A{CC}}$ $=$ $\{ \brac{t_{1}, t_{5},t_{6}}, \brac{t_{2}, t_{4},t_{7}} \}$.  
If $\Pi^{*}_{\set{X}}$ = $\emptySet$, then $\set{X}$ is a superkey and Optimization 3 applies.
%\end{example}

\begin{lemma}\label{lemma:stripped}
Singleton equivalence classes over attribute set $\set{X}$ cannot violate any OFD $\set{X} \rightarrow \A{A}$.
\end{lemma}

\eatTR{
\begin{proof}
Follows directly from the definition of OFDs.  
\end{proof}
}

%*********************************************************
\section{OFD Discovery}
\label{sec:fastofd}

We now present an algorithm to discover a complete and minimal set of OFDs from data. Based on our axiomatization for OFDs, 
%(Section~\ref{sec:foundations}), 
we normalize all OFDs to a single attribute consequent, i.e., $X$ $\rightarrow$ $A$ for any attribute $A$.  An OFD $\set{X} \rightarrow \A{A}$ is \emph{trivial} if $\A{A} \in \set{X}$ by Reflexivity. An OFD $\set{X} \rightarrow \A{A}$ is \emph{minimal} if it is non-trivial and there is no set of attributes $\set{Y} \subset \set{X}$ such that $\set{Y} \rightarrow \A{A}$ holds by Augmentation.

%\reviseThree{
%\noindent \textbf{Problem:} \emph{For a data instance $I$, an ontology $S$, and a support threshold $\kappa$, compute a set of minimal OFDs $\Sigma$ in $I$ w.r.t. $S$ such that the support of each $\phi \in \Sigma$ is greater than $\kappa$.
%}}

%***********************
\begin{figure}[!t]
\noindent\begin{minipage}[t]{.5\textwidth}
\begin{algorithm}[H]
\centering
\caption{Inference procedure for OFDs}\label{alg:inference}
\raggedright{Input}: A set of OFDs $\Sigma$, and a set of attributes
    $\set{X}$.\\
{Output}: The closure of $\set{X}$ with respect to $\Sigma$. \\
  \footnotesize
\begin{algorithmic}[1]
    \STATE $\Sigma_{unused}$ $\leftarrow$ $\Sigma$%
    \STATE $n$ $\leftarrow$ $0$
    \STATE $\set{X}^{n}$ $\leftarrow$ $\set{X}$
    \LOOP
    \IF{$\exists$ $\set{V}$ $\rightarrow$ $\set{Z}$ $\in$ \label{lines:st}
                    $\Sigma_{unused}$ and $\set{V}$ $\subseteq$ $\set{X}$}
            \STATE $\set{X}^{n+1}$ $\leftarrow$ $\set{X}^{n}$ $\cup$ $\set{Z}$
            \STATE $\Sigma_{unused}$ $\leftarrow$
            $\Sigma_{unused}$ $\setminus$ \{$\set{V}$ $\rightarrow$ $\set{Z}$\}
            \STATE $n$ $\leftarrow$ $n+1$ \label{lines:end}
        \ELSE
            \RETURN \emph{X}$^{n}$
        \ENDIF
    \ENDLOOP \STATE \textbf{end loop}
\end{algorithmic}
%  \end{small}
\end{algorithm}
\end{minipage}%
\hfill
\begin{minipage}[t]{0.46\textwidth}
\begin{algorithm}[H]
    \centering
     \caption{\fastofd} \label{pc:OFDdiscov}
     \raggedright{Input:} Relation $\R{r}$ over schema $\R{R}$\\
   {Output:} Minimal set of OFDs $\Sigma$, s.t. $\R{r}$ $\models$ $\Sigma$ \\
    \footnotesize
   \begin{algorithmic}[1]
        \STATE $\set{L}_{0}$ $=$ $\emptySet{}$, $\Sigma$ $=$ $\emptySet{}$
        \STATE $\set{C}^{+}(\emptySet{}) = \R{R}$
        \STATE $l$ $=$ $1$
        \STATE $\set{L}_{1}$ $=$ $\{ \A{A}$ $|$ $\A{A} \in \R{R} \}$
        \WHILE{$\set{L}_{l} \not= \emptySet$} \label{line:mainLoop}
            \STATE \emph{computeOFDs}($\set{L}_{l}$)
            \STATE $\set{L}_{l+1}$ $=$ \emph{calculateNextLevel}($\set{L}_{l}$)
            \STATE $l$ $=$ $l+1$
        \ENDWHILE \label{line:endMainLoop}
        \RETURN $\Sigma$
    \end{algorithmic}
\end{algorithm}
\end{minipage}
\vspace{-0.3cm}
\end{figure}

The set of possible antecedent (left side) and consequent (right side) values considered by our algorithm can be modeled as a set containment lattice.
\eat{For example, Figure \ref{fig:lattice} shows the search lattice for four of the five attributes in Table \ref{tab:example}.}  Each node in the lattice represents an attribute set and an edge exists between sets $X$ and $Y$ if $X$ $\subset$ $Y$ and $Y$ has exactly one more attribute than $X$.  
%For a node $X$, we consider candidate OFDs of the form $(X \setminus A) \rightarrow A$, where $A \in X$.
Let $k$ be the number of levels in the lattice.  A relation with $n$ attributes generates a $k = n$ level lattice, with $k = 0$ representing the top (root node) level.  

\eat{ %no space to include
\begin{figure}[t]
\centering
       \includegraphics[width=2.6in]{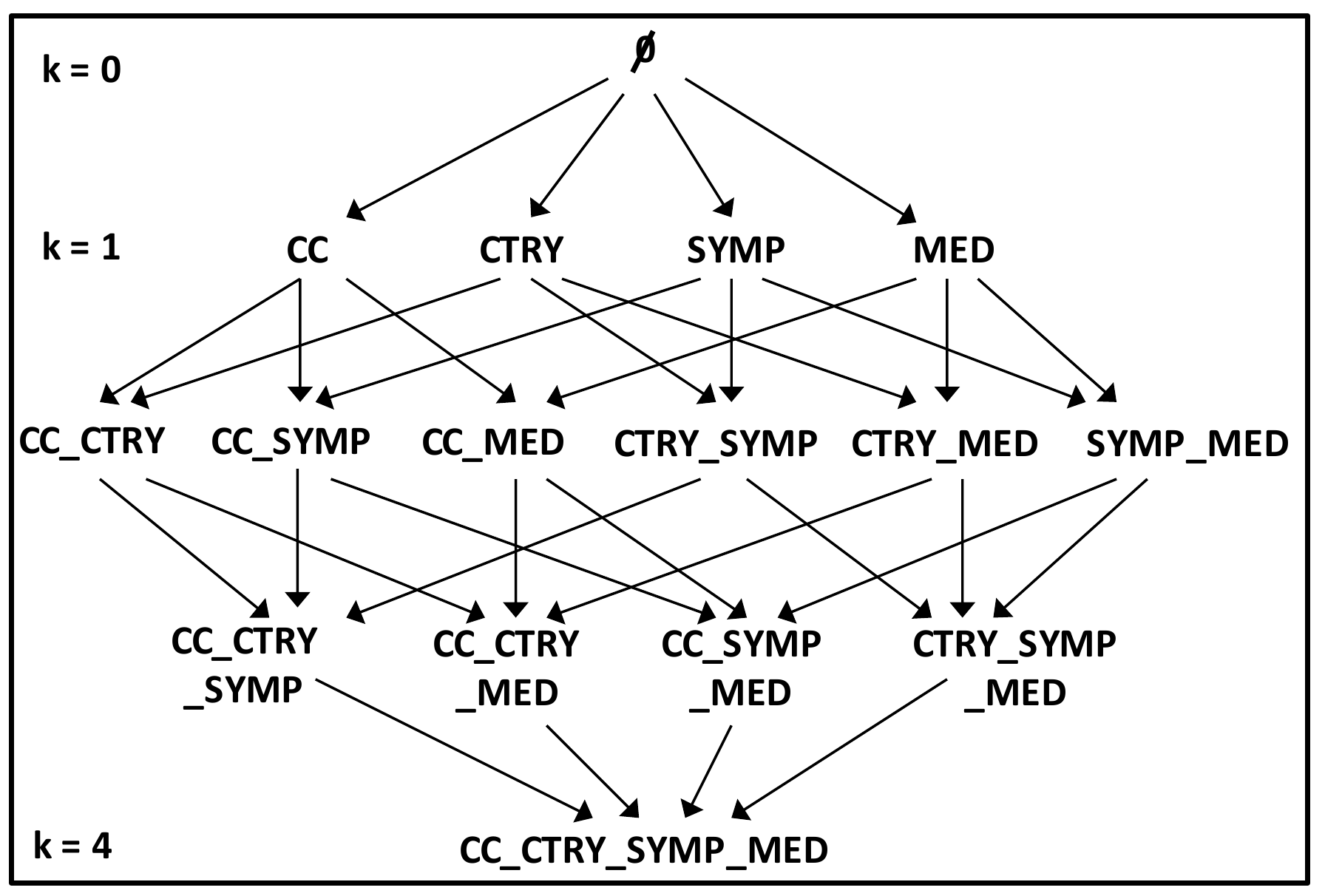}
     \vspace{-0.2cm}
    \caption{Example of an attribute lattice. \label{fig:lattice}}
%    	\vspace{-0.3cm}
\end{figure}
}

After computing the stripped partitions $\Pi^*_{A}, \Pi^*_{B} \dots$ for single attributes at level $k = 1$, we compute the stripped partitions for subsequent levels in linear time by taking the product, i.e., $\Pi^*_{AB}$ = $\Pi^*_{A} \cdot \Pi^*_{B}$.  OFD candidates are considered by traversing the lattice in a breadth-first search manner.  We consider all $X$ consisting of single attribute sets, followed by all 2-attribute sets, and continue level by level until (potentially) level $k$ = $n$, similarly as for other use cases of Apriori~\cite{AMW96}.  %The algorithm efficiency is based on leveraging the work done in previous levels to prune descendant nodes that are supersets of discovered OFDs.   \eatTR{We reduce the search space by avoiding the evaluation of these descendant nodes, thereby saving considerable computation time.}

Algorithm~\ref{pc:OFDdiscov} outlines the OFD discovery process.  In level  $\set{L}_{l}$ of the lattice, we generate candidate OFDs with $l$ attributes using  \emph{computeOFDs}($\set{L}_{l})$. \fastofd starts from singleton sets of attributes and works its way to larger attribute sets through the lattice, level by level. When the algorithm processes an attribute set $\set{X}$, it verifies candidate OFDs of the form ($\set{X} \setminus \A{A}) \rightarrow \A{A}$, where $\A{A} \in \set{X}$.  This guarantees that only non-trivial OFDs are considered.  For each candidate, we check if it is a valid synonym OFD.
%we invoke \emph{checkSynonymFD()} (Algorithm \ref{alg:syn}), and \emph{checkInheritanceFD()} (Algorithm \ref{alg:inherit}) to verify whether a synonym or inheritance OFD is found.  
The small-to-large search strategy guarantees that only minimal OFDs are added to the output set $\Sigma$, and is used to prune the search space. The OFD candidates generated in a given level are checked for \emph{minimality} based on the previous levels and are added to a \emph{valid} set of OFDs $\Sigma$ if applicable. The algorithm \emph{calculateNextLevel}($\set{L}_{l}$) forms the next level from the current level. Next, we explain how we check for minimality, and routines \emph{calculateNextLevel} and \emph{computeOFDs}.
%calculate, and then describe both algorithms called in the main loop of \fastofd.}

%*********************************************************
\subsection{Finding Minimal OFDs}\label{sec:optimizations}

\fastofd traverses the lattice until all minimal OFDs are found. \eat{We deal with OFDs of the form $\set{X} \setminus \A{A} \rightarrow \A{A}$, where $\A{A} \in \set{X}$.}  To check if an OFD is minimal, we need to know if $\set{X} \setminus \A{A} \rightarrow \A{A}$ is valid for any $\set{Y} \subset \set{X}$. If $\set{Y} \setminus \A{A} \rightarrow \A{A}$, then by Augmentation $\set{X} \setminus \A{A} \rightarrow \A{A}$ holds. An OFD $\set{X} \rightarrow \A{A}$ holds for any relational instance by Reflexivity, therefore, considering only $\set{X} \setminus \A{A} \rightarrow \A{A}$ guarantees that only non-trivial OFDs are taken into account.

We maintain information about minimal OFDs, in the form of $\set{X} \setminus \A{A} \rightarrow \A{A}$, in the \emph{candidate} set $\set{C}^{+}(\set{X})$. If $\A{A} \in \set{C}^{+}(\set{X})$ for a given set $\set{X}$, then $\A{A}$ has not been found to depend on any proper subset of $\set{X}$. Therefore, to find minimal OFDs, it suffices to verify OFDs $\set{X} \setminus \A{A} \rightarrow \A{A}$, where $\A{A} \in \set{X}$ and $\A{A} \in \set{C}^{+}(\set{X} \setminus \A{B})$ for all $\A{B} \in \set{X}$.

\begin{example}
Assume that $\A{B} \rightarrow \A{A}$ and that we consider the set $\set{X}$ = $\brac{\A{A}, \A{B}, \A{C}}$. As $\A{B} \rightarrow \A{A}$ holds, $\A{A} \not \in$ $\set{C}^{+}(\set{X} \setminus \A{C})$. Hence, the OFD $\brac{\A{B}, \A{C}} \rightarrow \A{A}$ is not minimal.
\end{example}

%We define the candidate set $\set{C}^{+}(\set{X})$ as follows.%
\begin{definition}\label{def:minConst}
$\set{C}^{+}(\set{X})$ $=$ $\{ \A{A} \in \R{R}$ $|$ $\forall_{\A{A} \in \set{X}}$  $\set{X} \setminus \A{A} \rightarrow \A{A}$ does not hold$\}$.
\end{definition}

\reviseTwo{Some of our techniques are similar to \emph{TANE}~\cite{HKPT98} for FD discovery and \emph{FASTOD}~\cite{ODTR16} for Order Dependency (OD) discovery since OFDs subsume FDs and ODs subsume FDs. However, \fastofd differs from \emph{TANE} and \emph{FASTOD} in the  optimizations, how nodes are removed from the lattice, and applying keys to prune candidates.  \emph{\fastofd} includes OFD-specific rules. For instance, for FDs if $\brac{\A{B}, \A{C}} \rightarrow \A{A}$ and $\A{B} \rightarrow \A{C}$, then $\A{B} \rightarrow \A{A}$ holds; hence, $\brac{\A{B}, \A{C}} \rightarrow \A{A}$ is non-minimal. However, this rule does not hold for OFDs, and therefore our definition of candidate set $\set{C}^{+}(\set{X})$ differs from \emph{TANE}.}

%*********************************************************
\begin{figure}[!t]
\noindent\begin{minipage}[t]{.5\textwidth}
\setlength{\textfloatsep}{0.1cm}
\begin{algorithm}[H]
\centering
  \caption{calculateNextLevel($\mathcal{L}_{l}$)} \label{pc:nextLevel}
  \footnotesize
%    \caption{Initial$\_$Assignment($x$, $S$, $\Lambda$)}\label{pc:greedyassignment}
 \begin{algorithmic}[1]
        \STATE $\set{L}_{l+1} = \emptySet{}$
        \FORALL{$\brac{\set{Y}\A{B}, \set{Y}\A{C}} \in \emph{singleAttrDiffBlocks}(\set{L}_{l})$} \label{line:blocks}
            \STATE $\set{X} = \set{Y} \bigcup \brac{\A{B}, \A{C}}$
            \label{line:prevLevel}
            \STATE Add $\set{X}$ to $\set{L}_{l+1}$
        \ENDFOR
        \RETURN $\set{L}_{l+1}$
    \end{algorithmic}
%  \end{small}
\end{algorithm}
\end{minipage}%
\hfill
\begin{minipage}[t]{0.46\textwidth}
\begin{algorithm}[H]
    \centering
    \caption{computeOFDs($\mathcal{L}_{l}$)} \label{pc:computeODs}
    \footnotesize
   \begin{algorithmic}[1]
        \FORALL{$\set{X} \in \set{L}_{l}$} \label{line:loopLevel}
                \STATE $\set{C}^{+}(\set{X}) = \bigcap_{\A{A} \in \set{X}} \set{C}^{+}(\set{X} \setminus \A{A})$ \label{line:interConst}
        \ENDFOR

        \FORALL{$\set{X} \in \set{L}_{l}$}
            \FORALL{$\A{A} \in \set{X} \cap \set{C}^{+}(\set{X})$}
                    \label{line:candConst}
                \IF{$\set{X} \setminus \A{A} \rightarrow \A{A}$}
                        \label{line:checkConst}
                    \STATE Add $\set{X} \setminus \A{A} \rightarrow \A{A}$
                            to $\Sigma$ \label{line:addConst}
                    \STATE Remove $\A{A}$ from $\set{C}^{+}(\set{X})$
                    \label{line:removeAttr}
                \ENDIF
            \ENDFOR
        \ENDFOR
    \end{algorithmic}
\end{algorithm}
\setlength{\floatsep}{0.1cm}
\end{minipage}
\vspace{-0.35cm}
\end{figure}

%*********************************************************
\subsection{Computing Levels}\label{sec:calcLev}

Algorithm~\ref{pc:nextLevel} explains \emph{calculateNextLevel}($\set{L}_{l}$), which computes $\set{L}_{l+1}$ from $\set{L}_{l}$. It uses the subroutine \emph{singleAttrDifferBlocks}($\set{L}_{l}$) that partitions $\set{L}_{l}$ into blocks (Line~\ref{line:blocks}). Two sets belong to the same block if they have a common subset $\set{Y}$ of length $l-1$ and differ in only one attribute, $\A{A}$ and $\A{B}$, respectively. Therefore, the blocks are not difficult to calculate as sets $\set{Y}\A{A}$ and $\set{Y}\A{B}$ can be expressed as sorted sets of attributes. Other usual use cases of Apriori~\cite{AMW96} such as \emph{TANE}~\cite{HKPT98} 
%and \emph{FASTOD}~\cite{ODTR16} 
use a similar approach.

The level  $\set{L}_{l+1}$ contains those sets of attributes of size $l + 1$ which have their subsets of size $l$ in $\set{L}_{l}$.

%*********************************************************
\subsection{Computing Dependencies \& Completeness}
\label{sec:dependencies}

Algorithm~\ref{pc:computeODs} adds minimal OFDs from level $\set{L}_l$ to $\Sigma$, in the form of $\set{X} \setminus \A{A} \rightarrow \A{A}$, where $\A{A} \in \set{X}$. The following lemma shows that we can use candidates $\set{C}^{+}(\set{X})$ to test whether  $\set{X} \setminus \A{A} \rightarrow \A{A}$ is minimal.

\begin{lemma}\label{minConst}
An OFD $\set{X} \setminus \A{A} \rightarrow \A{A}$, where $\A{A} \in \set{X}$, is minimal iff $\forall_{\A{B} \in \set{X}} \A{A} \in \set{C}^{+}(\set{X} \setminus \A{B})$.
\end{lemma}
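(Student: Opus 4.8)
The plan is to prove Lemma~\ref{minConst} by unfolding the definition of $\set{C}^{+}$ from Definition~\ref{def:minConst} together with the notion of minimality for OFDs used throughout Section~\ref{sec:fastofd}: an OFD $\set{X} \setminus \A{A} \rightarrow \A{A}$ is minimal iff it is non-trivial (guaranteed here since $\A{A} \in \set{X}$, so $\A{A} \notin \set{X} \setminus \A{A}$) and no proper subset of $\set{X} \setminus \A{A}$ determines $\A{A}$. I would first observe that proper subsets of $\set{X} \setminus \A{A}$ that are candidates to violate minimality can, by Opt-2 (Augmentation), be reduced to considering only the maximal ones, namely the sets $\set{X} \setminus \brac{\A{A}, \A{B}}$ for $\A{B} \in \set{X} \setminus \A{A}$: if some smaller $\set{Z} \subsetneq \set{X} \setminus \A{A}$ had $\set{Z} \rightarrow \A{A}$, then picking $\A{B} \in (\set{X} \setminus \A{A}) \setminus \set{Z}$, Augmentation gives $(\set{X} \setminus \brac{\A{A},\A{B}}) \rightarrow \A{A}$. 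So $\set{X} \setminus \A{A} \rightarrow \A{A}$ is minimal iff for every $\A{B} \in \set{X} \setminus \A{A}$, $(\set{X} \setminus \brac{\A{A},\A{B}}) \rightarrow \A{A}$ does \emph{not} hold.

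Next I would translate ``$(\set{X} \setminus \brac{\A{A},\A{B}}) \rightarrow \A{A}$ does not hold'' into the candidate-set language. By Definition~\ref{def:minConst} applied to the set $\set{X} \setminus \A{B}$, we have $\A{A} \in \set{C}^{+}(\set{X} \setminus \A{B})$ iff for every $\A{A}' \in \set{X} \setminus \A{B}$ the OFD $(\set{X} \setminus \brac{\A{B}, \A{A}'}) \rightarrow \A{A}$ fails — but wait, I must be careful: the definition quantifies the removed attribute over $\set{X} \setminus \A{B}$ but the \emph{consequent} is fixed at $\A{A}$, so membership $\A{A} \in \set{C}^{+}(\set{X} \setminus \A{B})$ requires in particular (taking $\A{A}' = \A{A}$) that $(\set{X} \setminus \brac{\A{B},\A{A}}) \rightarrow \A{A}$ fails, and also requires $\A{A} \in \set{X} \setminus \A{B}$, i.e. $\A{B} \neq \A{A}$. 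The subtlety here, and what I expect to be the main obstacle, is reconciling the ``$\forall$'' inside Definition~\ref{def:minConst} (which ranges over all ways of removing one attribute from $\set{X} \setminus \A{B}$) with the single condition we actually need. I would resolve this by an induction on the lattice level, matching the inductive computation in Algorithm~\ref{pc:computeODs} Line~\ref{line:interConst}: the loop maintains the invariant that $\A{A} \in \set{C}^{+}(\set{W})$ precisely when no proper subset of $\set{W}$ determines $\A{A}$ via an already-discovered minimal OFD, equivalently (by completeness of the level-wise search established in Section~\ref{sec:dependencies}) when no proper subset of $\set{W}$ determines $\A{A}$ at all. With that invariant, ``$\A{A} \in \set{C}^{+}(\set{X} \setminus \A{B})$ for all $\A{B} \in \set{X}$'' says exactly that no proper subset of any $\set{X} \setminus \A{B}$ — and these sets, ranging over $\A{B} \in \set{X}$, together with their subsets exhaust precisely the proper subsets of $\set{X}$ containing... — hmm, more cleanly: the sets $\set{X} \setminus \brac{\A{A},\A{B}}$ for $\A{B} \in \set{X} \setminus \A{A}$ are exactly the $(|\set{X}|-2)$-element subsets of $\set{X} \setminus \A{A}$, and every proper subset of $\set{X} \setminus \A{A}$ is contained in one of them.

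So the two directions I would write are: ($\Leftarrow$) Suppose $\A{A} \in \set{C}^{+}(\set{X} \setminus \A{B})$ for all $\A{B} \in \set{X}$; in particular this holds for each $\A{B} \in \set{X} \setminus \A{A}$, and by the invariant (or directly by Definition~\ref{def:minConst} with removed attribute $\A{A}$) this forces $(\set{X} \setminus \brac{\A{A},\A{B}}) \rightarrow \A{A}$ to fail for every such $\A{B}$; combined with the Augmentation reduction from the first paragraph, no proper subset of $\set{X} \setminus \A{A}$ determines $\A{A}$, and since the OFD is non-trivial it is minimal. ($\Rightarrow$) Conversely, if $\set{X} \setminus \A{A} \rightarrow \A{A}$ is minimal, then for each $\A{B} \in \set{X} \setminus \A{A}$, $(\set{X} \setminus \brac{\A{A},\A{B}}) \rightarrow \A{A}$ fails and moreover no subset of $\set{X} \setminus \brac{\A{B}}$ at all determines $\A{A}$ (any such would, by Augmentation, contradict minimality of $\set{X} \setminus \A{A} \rightarrow \A{A}$), which by the invariant gives $\A{A} \in \set{C}^{+}(\set{X} \setminus \A{B})$; the case $\A{B} = \A{A}$ is immediate since then $\set{X} \setminus \A{B} = \set{X} \setminus \A{A}$ and $\A{A} \notin \set{X} \setminus \A{A}$ — here I would note that Definition~\ref{def:minConst} only asserts membership for attributes $\A{A} \in \R{R}$, and that for the purposes of Algorithm~\ref{pc:computeODs} only $\A{A} \in \set{X} \cap \set{C}^{+}(\set{X})$ are tested, so this boundary case needs a remark rather than derailing the equivalence. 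The crux throughout is establishing that Augmentation lets us reduce ``some proper subset determines $\A{A}$'' to ``some codimension-two subset determines $\A{A}$,'' and that the intersection in Line~\ref{line:interConst} faithfully propagates this up the lattice — that bookkeeping, not any deep argument, is where the care is needed.
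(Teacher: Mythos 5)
Your core argument is the paper's own: in both directions you reduce, via Augmentation (Opt-2), the statement ``some proper subset of $\set{X} \setminus \A{A}$ determines $\A{A}$'' to ``$\set{X} \setminus \brac{\A{A}, \A{B}} \rightarrow \A{A}$ holds for some $\A{B} \in \set{X}$,'' and then translate that condition, by Definition~\ref{def:minConst}, into $\A{A} \notin \set{C}^{+}(\set{X} \setminus \A{B})$. The paper's proof is precisely this short contrapositive: non-minimality yields some $\A{B} \in \set{X}$ with $\set{X} \setminus \brac{\A{A},\A{B}} \rightarrow \A{A}$, hence $\A{A} \notin \set{C}^{+}(\set{X} \setminus \A{B})$; conversely, $\A{A} \notin \set{C}^{+}(\set{X} \setminus \A{B})$ gives $\set{X} \setminus \brac{\A{A},\A{B}} \rightarrow \A{A}$ with $\A{A} \neq \A{B}$, so $\set{X} \setminus \A{A} \rightarrow \A{A}$ is not minimal. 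Your treatment of the boundary case $\A{B} = \A{A}$ as a side remark is consistent with the paper's noting $\A{A} \neq \A{B}$ in the converse direction.

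Where you go astray is the scaffolding you wrap around this: the induction on lattice levels and the ``invariant'' that $\A{A} \in \set{C}^{+}(\set{W})$ as maintained by Algorithm~\ref{pc:computeODs} (Line~\ref{line:interConst}) holds exactly when no proper subset of $\set{W}$ determines $\A{A}$, justified ``by completeness of the level-wise search established in Section~\ref{sec:dependencies}.'' Lemma~\ref{minConst} is a statement about $\set{C}^{+}$ as \emph{defined} in Definition~\ref{def:minConst}, not as computed by the algorithm; the correctness of the computation is a separate claim (Lemma~\ref{lemma:levelCorrSplit}), and the completeness result (Theorem~\ref{theorem:completeness}) is itself proved \emph{from} Lemma~\ref{minConst}, so appealing to it here is circular. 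None of that machinery is needed: reconciling the quantifier in Definition~\ref{def:minConst} with the single condition you need is exactly one application of Augmentation, which you already carry out in your first paragraph and which you flag parenthetically as the ``direct'' route. Strip out the algorithmic detour, keep the purely definitional argument, and your proof coincides with the paper's.
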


\eatTR{
\begin{proof}
Assume first that the dependency $\set{X} \setminus \A{A} \rightarrow \A{A}$ is not minimal. Therefore, there exists $\A{B} \in \set{X}$ for which $\set{X} \setminus \brac{\A{A}, \A{B}} \rightarrow \A{A}$ holds. Then, $\A{A} \not \in \set{C}^{+}(\set{X} \setminus \A{B})$.

To prove the other direction assume that there exists $\A{B} \in \set{X}$, such that $\A{A} \not \in \set{C}^{+}(\set{X} \setminus \A{B})$. Therefore, $\set{X} \setminus \brac{\A{A}, \A{B}} \rightarrow \A{A}$ holds, where $\A{A} \not = \A{B}$. Hence, by Reflexivity the dependency $\set{X} \setminus \A{A} \rightarrow \A{A}$ is not minimal.
\end{proof}
}

By Lemma~\ref{minConst}, the steps in Lines~\ref{line:interConst}, \ref{line:candConst}, \ref{line:checkConst} and \ref{line:addConst} guarantee that the algorithm adds to $\Sigma$ only the minimal OFDs of the form $\set{X} \setminus \A{A} \rightarrow \A{A}$, where $\set{X} \in \set{L}_{l}$ and $\A{A} \in \set{X}$.   In Line \ref{line:checkConst}, to verify whether  $\set{X} \setminus \A{A} \rightarrow \A{A}$ is a synonym OFD, we apply Definition~\ref{def:synonym_ofd}.
%invoke  \emph{checkSynonymFD()} (Algorithm \ref{alg:syn}), and \emph{checkInheritanceFD()} (Algorithm \ref{alg:inherit}), respectively.  

%*********************************************************

%\subsection{Complexity Analysis}
%\label{sec:complexity}
%The algorithm complexity depends on the number of candidates in the lattice.  

The worst case complexity of our algorithm is \emph{exponential} in the number of attributes as there are $2^{|n|}$ nodes in the lattice.  The worst-case output size is also exponential in the number of attributes, and occurs when the minimal OFDs are in the widest middle level of the lattice.  This means that a polynomial-time discovery algorithm in the number of attributes cannot exist.  These results are in line with previous FD \cite{HKPT98}, inclusion dependency \cite{PKQJN15}, and order dependency~\cite{ODTR16} discovery algorithms.  However, the complexity is polynomial in the number of tuples, although the ontological relationships (synonyms) influence the complexity of verifying whether a candidate OFD holds. We assume that values in the ontology are indexed and can be accessed in constant time.

%\reviseTwo{
Ontology FD candidate verification differs from traditional FDs. Following Definition~\ref{def:synonym_ofd} to verify whether a candidate synonym OFD holds over $I$, for each equivalence class $x \in \Pi_{X}(I)$, we need to check whether the intersection of the corresponding senses is not empty. This can be done in linear time in the number of tuples by scanning the stripped partitions and maintaining a hash table with the frequency counts of all the senses for each equivalence class.  Returning to the example in Table~2, the synonym OFD $X \rightarrow_{syn} Y$ does not hold because for the single equivalence class in this example, of size three, there are no senses (classes) for $Y$ that appear three times.  
%For inheritance OFDs, the complexity grows to $N$log$N$ in the number of tuples because of the additional task of locating the LCA for each equivalence class, which takes $N$log$N$  time.

\begin{lemma}\label{lemma:levelCorrSplit}
Let candidates $\set{C}^{+}(\set{Y})$ be correctly computed $\forall \set{Y} \in \set{L}_{l-1}$.  $computeOFDs$($\set{L}_{l}$) calculates correctly $\set{C}^{+}(\set{X})$, $\forall \set{X} \in \set{L}_{l}$.
\end{lemma}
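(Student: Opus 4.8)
The plan is to fix an arbitrary $\set{X}\in\set{L}_l$ and trace the two loops of \emph{computeOFDs} line by line, comparing the result with the specification of $\set{C}^{+}(\set{X})$ in Definition~\ref{def:minConst}, which, unfolded attribute by attribute, says $\set{C}^{+}(\set{X}) = (\R{R}\setminus\set{X})\cup\{\A{A}\in\set{X}\mid \set{X}\setminus\A{A}\rightarrow\A{A}\text{ does not hold in }I\}$. I would first record the structural prerequisite that every $\set{X}\setminus\A{A}$ with $\A{A}\in\set{X}$ has size $l-1$ and lies in $\set{L}_{l-1}$ — this is the level invariant maintained by \emph{calculateNextLevel} (a set enters $\set{L}_l$ only if all of its $(l-1)$-element subsets already appear in $\set{L}_{l-1}$) — so that every value $\set{C}^{+}(\set{X}\setminus\A{A})$ read in Line~\ref{line:interConst} is a level-$(l-1)$ candidate set, hence correct by the lemma's hypothesis. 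Plugging those specifications into $\bigcap_{\A{A}\in\set{X}}\set{C}^{+}(\set{X}\setminus\A{A})$ and doing a short case split — an attribute outside $\set{X}$ survives every factor; an attribute $\A{A}\in\set{X}$ survives the factor $\set{C}^{+}(\set{X}\setminus\A{A})$ automatically and survives $\set{C}^{+}(\set{X}\setminus\A{B})$ for $\A{B}\neq\A{A}$ exactly when $\set{X}\setminus\{\A{A},\A{B}\}\rightarrow\A{A}$ does not hold — I would conclude that immediately after Line~\ref{line:interConst},
$$\set{C}^{+}(\set{X}) = (\R{R}\setminus\set{X})\cup\{\A{A}\in\set{X}\mid \forall\A{B}\in\set{X}\setminus\{\A{A}\},\ \set{X}\setminus\{\A{A},\A{B}\}\not\rightarrow\A{A}\}.$$

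Next I would analyze the second loop (Lines~\ref{line:candConst}--\ref{line:removeAttr}): for each $\A{A}\in\set{X}\cap\set{C}^{+}(\set{X})$ it tests whether $\set{X}\setminus\A{A}\rightarrow\A{A}$ holds — verified exactly as in Definition~\ref{def:synonym_ofd} — and if so deletes $\A{A}$ from $\set{C}^{+}(\set{X})$. Since the loop only ever deletes members of $\set{X}$, the part $\R{R}\setminus\set{X}$ is untouched, and an attribute $\A{A}\in\set{X}$ survives all of \emph{computeOFDs} iff it survived Line~\ref{line:interConst} \emph{and} $\set{X}\setminus\A{A}\rightarrow\A{A}$ does not hold. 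The final step is to collapse these two conditions into the single condition ``$\set{X}\setminus\A{A}\rightarrow\A{A}$ does not hold'', and this is where I invoke Augmentation (Opt-2): each $\set{X}\setminus\{\A{A},\A{B}\}$ is a subset of $\set{X}\setminus\A{A}$, so if it determined $\A{A}$ then $\set{X}\setminus\A{A}\rightarrow\A{A}$ would hold by Augmentation; contrapositively, once $\set{X}\setminus\A{A}\rightarrow\A{A}$ fails, so does $\set{X}\setminus\{\A{A},\A{B}\}\rightarrow\A{A}$ for every $\A{B}$. Hence the surviving set is exactly $(\R{R}\setminus\set{X})\cup\{\A{A}\in\set{X}\mid \set{X}\setminus\A{A}\rightarrow\A{A}\text{ does not hold}\}$, which matches Definition~\ref{def:minConst}; as $\set{X}$ was arbitrary, the lemma follows.

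The step I expect to be the crux is this last collapse, together with being precise about what ``correctly computed'' means. The intersection loop can only eliminate an attribute witnessed by a \emph{strictly smaller} left-hand side (of size $\le l-2$), whereas the validity-check loop is what eliminates the single largest candidate left-hand side $\set{X}\setminus\A{A}$; it is precisely Augmentation that makes the second subsume the first and thereby recovers the non-recursive specification of Definition~\ref{def:minConst}. It is worth noting that this reconciliation genuinely relies on Augmentation being sound for OFDs: the transitivity-based pruning of right-hand-side candidates used by \emph{TANE} is \emph{not} sound for OFDs, which is exactly why $\set{C}^{+}$ must be defined more conservatively here. A minor point still to verify is the level invariant itself, so that each $\set{C}^{+}(\set{X}\setminus\A{A})$ referenced in Line~\ref{line:interConst} has indeed already been computed (and found correct) at level $l-1$.
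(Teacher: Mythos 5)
Your proposal is correct and matches the paper's own argument in substance: both rest on the inductive hypothesis that the level-$(l-1)$ candidate sets are correct, analyze the two exclusion points (Line~\ref{line:interConst} and Line~\ref{line:removeAttr}) of \emph{computeOFDs}, and use Augmentation (Opt-2) to reconcile exclusion witnessed by a strictly smaller antecedent with the non-recursive condition of Definition~\ref{def:minConst}. The only differences are presentational — the paper argues both directions of the exclusion as a case analysis while you compute the resulting set directly, and you make explicit the Augmentation step and the level prerequisite that the paper leaves implicit — so no further changes are needed.
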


\eatTR{
\begin{proof}
An attribute $\A{A}$ is in $\set{C}^{+}(\set{X})$ after the execution of the algorithm $computeOFDs$($\set{L}_{l})$  unless it is excluded from $\set{C}^{+}(\set{X})$ on Line \ref{line:interConst} or \ref{line:removeAttr}. First we show that if $\A{A}$ is excluded from $\set{C}^{+}(\set{X})$ by $computeOFDs$($\set{L}_{l})$, then $\A{A} \not \in$ $\set{C}^{+}(\set{X})$ by the definition of $\set{C}^{+}(\set{X})$.
    \begin{itemize}[nolistsep]
        \item[-] If $\A{A}$ is excluded from $\set{C}^{+}(\set{X})$ on Line~\ref{line:interConst}, there exists $\A{B} \in \set{X}$ with $\A{A} \not \in \set{C}^{+}(\set{X} \setminus \A{B})$. Therefore, $\set{X} \setminus \brac{\A{A}, \A{B}} \rightarrow \A{A}$ holds, where $\A{A} \not = \A{B}$. Hence, $\A{A} \not \in$ $\set{C}^{+}(\set{X})$ by the definition of $\set{C}^{+}(\set{X})$. 
        \item[-] If $\A{A}$ is excluded on Line~\ref{line:removeAttr}, then $\A{A} \in \set{X}$ and $\set{X} \setminus \A{A} \rightarrow \A{A}$ holds. Hence, $\A{A} \not \in$ $\set{C}^{+}(\set{X})$ by the definition of $\set{C}^{+}(\set{X})$.
    \end{itemize}
Next, we show the other direction, that if $\A{A} \not \in$ $\set{C}^{+}(\set{X})$ by the definition of $\set{C}^{+}(\set{X})$, then $\A{A}$ is excluded from $\set{C}^{+}(\set{X})$ by the algorithm $computeOFDs$($\set{L}_{l}$). Assume $\A{A} \not \in$ $\set{C}^{+}(\set{X})$ by the definition of $\set{C}^{+}(\set{X})$. Therefore, there exists $\A{B} \in \set{X}$, such that $\set{X} \setminus \brac{\A{A}, \A{B}} \rightarrow \A{A}$ holds. We have following two cases.
    \begin{itemize}[nolistsep]
        \item[-] $\A{A} = \A{B}$. Thus, $\set{X} \setminus \A{A} \rightarrow \A{A}$ holds and $\A{A}$ is removed on Line~\ref{line:removeAttr}, if $\set{X} \setminus \A{A} \rightarrow \A{A}$ is minimal; and on Line~\ref{line:interConst} otherwise. 
        \item[-] $\A{A} \not = \A{B}$. Hence, $\A{A} \not \in$ $\set{C}^{+}(\set{X} \setminus \A{B})$ and $\A{A}$ is removed on Line~\ref{line:interConst}.
    \end{itemize}
This ends the proof of correctness of computing the candidate set $\set{C}^{+}(\set{X})$, $\forall \set{X} \in \set{L}_{l}$.
\end{proof}
}

%Next, we show that our OFD discovery algorithm produces a complete and minimal set of OFDs.

\begin{theorem}\label{theorem:completeness}
The \fastofd algorithm computes a complete and minimal set of OFDs $\Sigma$.
\end{theorem}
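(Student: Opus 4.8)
\textbf{Proof plan for Theorem~\ref{theorem:completeness}.}
The plan is to discharge three obligations: \textbf{(S)} \emph{soundness}, $\R{r}\models\Sigma$; \textbf{(M)} \emph{minimality}, $\Sigma$ satisfies Definition~\ref{def:cover}; and \textbf{(C)} \emph{completeness}, every OFD $\phi$ with $\R{r}\models\phi$ is implied by $\Sigma$ (equivalently, by Theorem~\ref{theorem:complete}, derivable from $\Sigma$), i.e.\ $\Sigma$ is a cover of the OFDs holding over $\R{r}$. The two workhorses are Lemma~\ref{lemma:levelCorrSplit}, which propagates correctness of the candidate sets $\set{C}^{+}(\cdot)$ from one lattice level to the next, and Lemma~\ref{minConst}, which equates minimality of a candidate $\set{X}\setminus\A{A}\rightarrow\A{A}$ with $\A{A}$ belonging to $\set{C}^{+}(\set{X}\setminus\A{B})$ for every $\A{B}\in\set{X}$; the derived rules Reflexivity and Augmentation (Opt-1 and Opt-2, which follow from Theorem~\ref{theorem:complete}) are used to normalize arbitrary OFDs to single-attribute consequents.

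First I would prove, by induction on the level $l$, that after $computeOFDs(\set{L}_l)$ finishes the stored value of $\set{C}^{+}(\set{X})$ equals the set of Definition~\ref{def:minConst} for every $\set{X}\in\set{L}_l$. The base case $l=1$ is immediate: $\set{C}^{+}(\emptySet)=\R{R}$ makes Definition~\ref{def:minConst} hold vacuously, and for a singleton $\brac{\A{A}}$ the only candidate examined on Line~\ref{line:checkConst} of Algorithm~\ref{pc:computeODs} is $\emptySet\rightarrow\A{A}$, so the residual $\set{C}^{+}(\brac{\A{A}})$ coincides with Definition~\ref{def:minConst}. The inductive step is precisely Lemma~\ref{lemma:levelCorrSplit}. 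I would also observe that, for a schema with $n$ attributes, $calculateNextLevel$ (Algorithm~\ref{pc:nextLevel}) produces $\set{L}_{l+1}$ as the collection of all $(l{+}1)$-attribute sets whose $l$-attribute subsets all lie in $\set{L}_l$; hence, starting from $\set{L}_1$ = all singletons, the main loop of Lines~\ref{line:mainLoop}--\ref{line:endMainLoop} of Algorithm~\ref{pc:OFDdiscov} visits, level by level, every attribute set of size $1,\dots,n$ and then halts because $\set{L}_{n+1}=\emptySet$.

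For \textbf{(S)} and \textbf{(M)}: an OFD enters $\Sigma$ only on Line~\ref{line:addConst}, guarded by the test on Line~\ref{line:checkConst} that verifies $\set{X}\setminus\A{A}\rightarrow\A{A}$ directly against Definition~\ref{def:synonym_ofd}; hence $\R{r}\models\Sigma$. Every such OFD has the form $\set{X}\setminus\A{A}\rightarrow\A{A}$ with $\A{A}\in\set{X}$, so its right-hand side is a single attribute and it is non-trivial (Condition~\ref{cond:one} of Definition~\ref{def:cover}). Moreover it is reached on Line~\ref{line:candConst} only when $\A{A}\in\set{X}\cap\set{C}^{+}(\set{X})$, which by Line~\ref{line:interConst} and the correctness of $\set{C}^{+}$ just established means $\A{A}\in\set{C}^{+}(\set{X}\setminus\A{B})$ for all $\A{B}\in\set{X}$; by Lemma~\ref{minConst} this says no proper subset of $\set{X}\setminus\A{A}$ determines $\A{A}$ (Condition~\ref{cond:three}). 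Finally, if some $\set{X}\setminus\A{A}\rightarrow\A{A}\in\Sigma$ were still implied by $\Sigma$ with that OFD removed, then running Algorithm~\ref{alg:inference} and using that all right-hand sides in $\Sigma$ are single attributes (and that OFD inference does not propagate across attributes) would force the existence of $\set{V}\rightarrow\A{A}\in\Sigma$ with $\set{V}\subsetneq\set{X}\setminus\A{A}$, contradicting the minimality just shown (Condition~\ref{cond:two}). Thus $\Sigma$ meets all three conditions of Definition~\ref{def:cover}.

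For \textbf{(C)}: let $\phi$ hold over $\R{r}$. By O2 (Decomposition) it is enough to treat the single-attribute case and then recombine by O3 (Composition), so take $\phi=\set{W}\rightarrow\A{A}$. If $\A{A}\in\set{W}$, $\phi$ is trivial. Otherwise choose $\set{W}'\subseteq\set{W}$ minimal with $\R{r}\models\set{W}'\rightarrow\A{A}$; this OFD is minimal in the absolute sense (non-trivial, no proper subset of its left side determines $\A{A}$), and by Augmentation (Opt-2) $\phi$ follows from it, so it suffices to show $\set{W}'\rightarrow\A{A}\in\Sigma$. Put $\set{X}=\set{W}'\cup\brac{\A{A}}$; as established above the node $\set{X}$ is processed. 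Lemma~\ref{minConst} applied to $\set{W}'\rightarrow\A{A}$ gives $\A{A}\in\set{C}^{+}(\set{X}\setminus\A{B})$ for every $\A{B}\in\set{X}$, so $\A{A}$ survives the intersection on Line~\ref{line:interConst} and is examined on Line~\ref{line:candConst}; the test on Line~\ref{line:checkConst} succeeds because $\set{W}'\rightarrow\A{A}$ holds over $\R{r}$, and the OFD is inserted on Line~\ref{line:addConst}. Hence every OFD holding over $\R{r}$ is implied by $\Sigma$, which together with \textbf{(S)} and \textbf{(M)} yields the theorem. I expect the only delicate point to be the claim that the node $\set{X}$ is actually visited: for the skeleton in Algorithm~\ref{pc:OFDdiscov} this is automatic since no node is ever deleted, but once the key-based and empty-candidate pruning of Section~\ref{sec:3_opt} is enabled one must argue that a pruned node never lies on the generation path of a minimal OFD---e.g.\ if a proper subset $\set{Z}\subsetneq\set{X}$ were dropped as a superkey, then Opt-3 would give $\set{Z}\rightarrow\A{A}$, contradicting minimality of $\set{W}'\rightarrow\A{A}$---and this case analysis, again resting on Lemma~\ref{minConst} and the downward closure of the lattice, is the main obstacle.
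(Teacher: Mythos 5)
Your proposal is correct and follows essentially the same route as the paper's proof: establish by induction over the lattice levels that the candidate sets $\set{C}^{+}(\set{X})$ are computed correctly (Lemma~\ref{lemma:levelCorrSplit}) and then invoke Lemma~\ref{minConst} to conclude that exactly the minimal, valid OFDs of the form $\set{X}\setminus\A{A}\rightarrow\A{A}$ are added on Lines~\ref{line:interConst}, \ref{line:candConst}, \ref{line:checkConst} and \ref{line:addConst}. You merely spell out more explicitly the soundness, completeness, and pruning details that the paper's terse argument leaves implicit.
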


\eatTR{
\begin{proof}
The algorithm $computeOFDs$($\set{L}_{l})$ adds to set of OFDs $\Sigma$ only the minimal OFDs. The steps in Lines~\ref{line:interConst}, \ref{line:candConst}, \ref{line:checkConst} and \ref{line:addConst} guarantee that the algorithm adds to $\Sigma$ only the minimal OFDs of the form $\set{X} \setminus \A{A} \rightarrow \A{A}$, where $\set{X} \in \set{L}_{l}$ and $\A{A} \in \set{X}$ by Lemma~\ref{minConst}. It follows by induction that $computeOFDs$($\set{L}_{l})$ calculates correctly $\set{C}^{+}(\set{X})$ for all levels $l$ of the lattice since Lemma~\ref{lemma:levelCorrSplit} holds. Therefore, the \fastofd algorithm computes a complete set of minimal OFDs $\Sigma$.
\end{proof}
}

\section{Data Cleaning with OFDs} \label{sec:cleaning}

%As application requirements, domain semantics change, the data evolve and ontologies may become incomplete.  
Over time, misalignment may arise between a data instance $I$, an ontology $S$, and a set of OFDs $\Sigma$.  In this section, we study the problem of how to compute repairs to re-align $I, S$ with $\Sigma$.  Data naturally evolve due to updates and changes in domain semantics.  These changes in semantics also lead to ontology incompleteness, as new concepts and ontological relationships are introduced.  For example, new uses of medical drugs lead to new prescriptions to treat new and different illnesses.  While OFDs may also evolve, we argue that repairs to the data $I$ and ontology $S$ are more common in practice, and hence, our current focus.  When stale OFDs do occur, \fastofd can be used to discover the latest set of OFDs that hold over the data.  \reviseTwo{For approximate OFDs defined over a dirty instance $I$, violating values in $I$ can be repaired, thereby transforming approximate OFDs to OFDs that are satisfied over all tuples in $I$.}  Formally, we modify $I$ and $S$ to produce $I'$ and $S'$, such that a repaired version of $I$, $I' \models \Sigma$, w.r.t. a modified version $S'$ of $S$.  

%************************************************
\subsection{Scope of Repairs}
\label{sec:repairscope}

%We limit the scope of our OFD cleaning work in two ways.  First, we focus on repairing synonym OFDs.   Handling ontology incompleteness w.r.t. an inheritance OFD can lead to structural changes in the ontology to accommodate newly inserted LCAs.  These changes affect the validity of information in dependent applications, according to the update frequency.  We argue that such changes are less likely in practice than missing synonyms, as applications prefer \emph{minimal} repairs, to minimize downstream impact.   For example, medical ontologies are large with complex relationships such that topology changes significantly hinder manageability and maintenance.  

OFDs interact when two or more dependencies share a common set of attributes, which may invalidate already performed repairs.  For simplicity, we assume that there does not exist an attribute that occurs on the left side of one OFD and the right side of another.  \reviseTwo{This simplifies the interaction among repairs to only consequent attributes without having to consider changes to antecedent (left-side) attributes of the dependency, i.e., the equivalence classes for an OFD remain fixed.}  \blue{In many domains, attributes serve as either an independent or dependent role.  For example, in health care, demographic and prognosis characteristics are typically independent and dependent attributes, respectively.}  However, our techniques can handle OFDs that share the same consequent attribute.  

\noindent \textbf{Data Repair.}  Consider a synonym OFD $\phi$: $X \rightarrow_{syn} A$, and an equivalence class $x \in \Pi_{X}(I)$ where there exists at least two tuples $t_1, t_2 \in x$, such that $t_1[A]$ and $t_2[A]$ are not synonyms under some sense $\lambda$. \eat{We consider updates to $t_1[A]$ where the domain of repair values are values in $S$ under sense $\lambda$ to ensure all $t[A]$ values in $x$ are synonyms.}  We consider repairs to values in the consequent attributes where the domain of repair values are values in $S$. Let $\mathcal{P}(I)$ denote the set of all possible data repairs of $I$. For a repair $I'$ of $I$, we use $dist(I, I')$ to denote the difference between $I$ and $I'$, which is measured by the number of cells whose values in $I$ are different from  the values of the corresponding cells in $I'$.
%, i.e.,  $dist(I, I') = |\Delta(I,I')|$.

%
\noindent\textbf{Ontology Repair.} 
\reviseTwo{A repair to an ontology $S$ is the insertion of new value(s) to a node in $S$ w.r.t. a sense $\lambda$.}  Let $\mathcal{P}(S)$ denote the set of possible ontology repairs to $S$.  For a repair $S'$ of $S$, we define distance $dist(S, S')$ 
%= |\Delta(S,S')|$, i.e., 
as the number of new values (\emph{concepts}) in $S'$ that are not in $S$.  
%We seek repairs to $S$ that minimize $dist(S, S')$.

\noindent \textbf{Minimal Repair. }
The universe of repairs, $U$, represents all possible pairs ($S', I'$) such that $S' \in \mathcal{P}(S)$ and $I' \in \mathcal{P}(I)$ and $I' \models \Sigma$ w.r.t. $S'$.  
We want to find minimal repairs in $U$ that do not make unnecessary changes to $S$ or $I$ in a Pareto-optimal sense.
%, keeping in mind that making more changes to $S$ can allow us to make fewer changes to $I$ and vice versa.

\begin{definition}\label{def:minimal} (\emph{Minimal Repair}). 
We are given an instance $I$, OFDs $\Sigma$, and an ontology $S$, where $I \not\models \Sigma$ \textit{w.r.t.} $S$.   A repair ($S'$, $I'$) $\in U$ is minimal \eat{under a sense $\lambda$} if $\nexists (S'', I'') \in U$ such that $dist(S,S'') < dist(S,S')$ and $dist(I,I'') < dist(I,I')$.
%$(S'', I'') \prec (S', I')$.
\end{definition}

For example, suppose $U$ contains three repairs: the first one makes two changes each to $S$ and $I$, the second one makes two changes to $S$ and three changes to $I$, and the third one makes one change to $S$ and five changes to $I$.  Here, the first and the third repair are minimal.

\begin{figure}[!t]
\noindent\begin{minipage}{.5\textwidth}
  \centering
  \rule{0.3\textwidth}{0pt}
    \includegraphics[width=6.9cm]{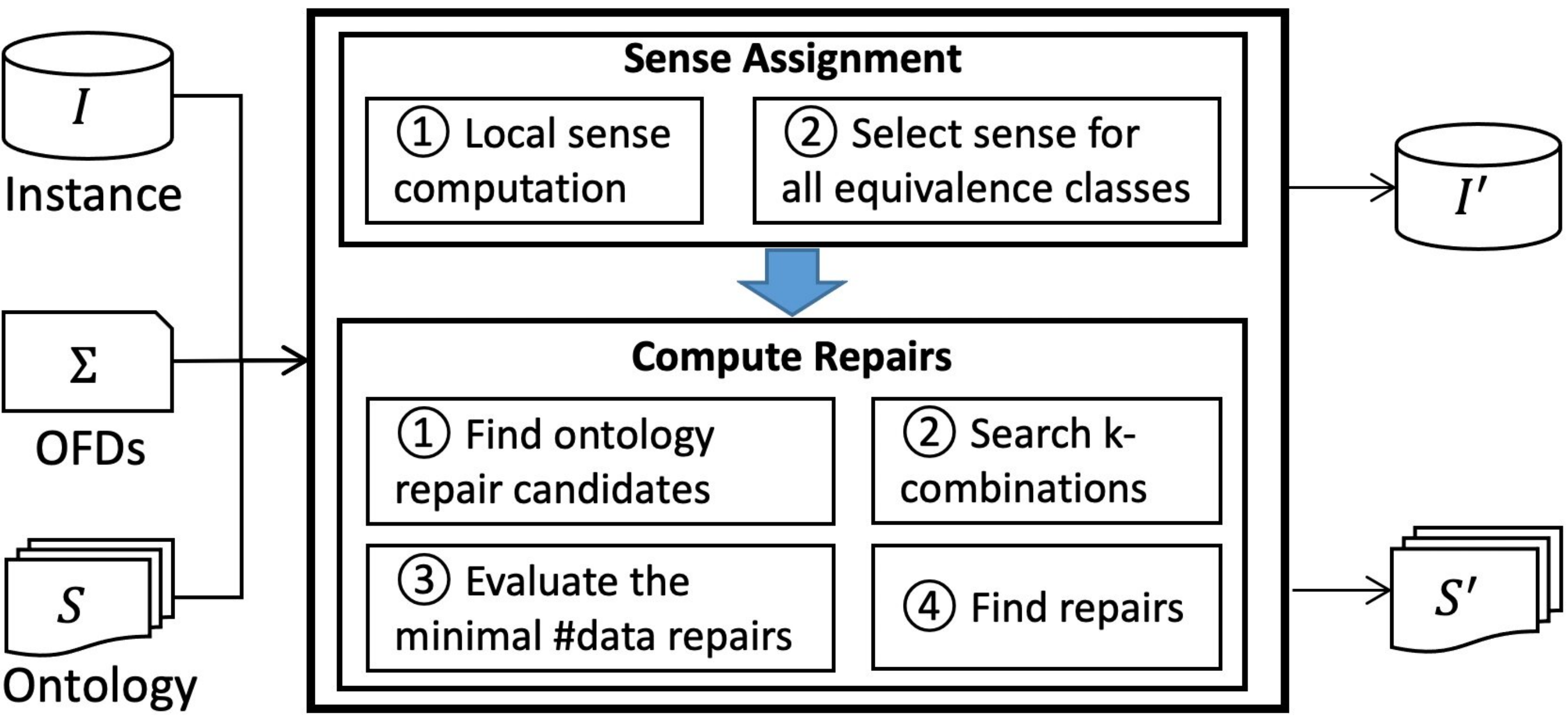}
  \captionof{figure}{OFDClean framework.} \label{fig:repairframework}
\end{minipage}%
\begin{minipage}{.5\textwidth}
  \centering
  \rule{0.3\textwidth}{0pt}
    \includegraphics[width=6.9cm]{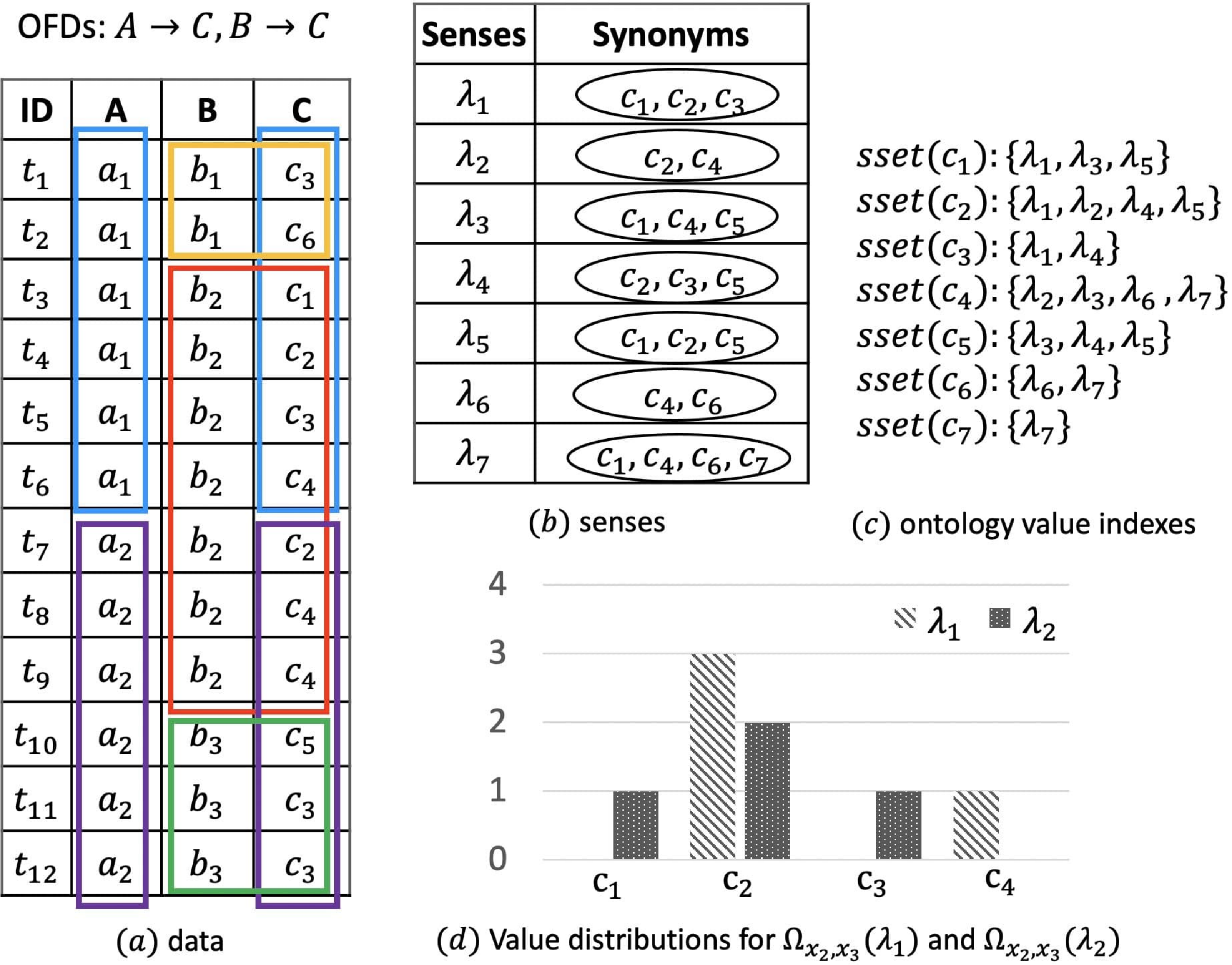}
  \captionof{figure}{Sense assignment.} \label{fig:ecg}
\end{minipage}
\vspace{-0.6cm}
\end{figure}

\reviseThree{
\noindent \textbf{Problem:} \emph{
For a data instance $I$, an ontology $S$, a set of OFDs $\Sigma$, where $I \not \models \Sigma$, compute a repaired instance $I'$, and ontology $S'$, such that $I' \models \Sigma$ w.r.t. $S'$, and $dist(I, I')$ and $dist(S, S')$ are minimal.
}}

%************************************************
\subsection{\ofdclean Overview}
\label{sec:framework}

OFDs enrich the semantic expressiveness of FDs via ontological senses, which allow for multiple interpretations of the data.  In traditional FDs, senses do not exist, and there is only one interpretation.  For a set of OFDs $\Sigma$, the problem of finding a minimal repair to $S$ and to $I$ requires evaluating candidate senses for each equivalence class w.r.t. each $\phi \in \Sigma$, and assigning a sense to each class.   %\blue{Interactions between repairs occur due to overlapping attributes among OFDs.  Updating a data value or adding a new concept to an ontology w.r.t. $\phi_1$ can increase or decrease the number of repairs w.r.t. another OFD $\phi_2$.  Exhaustive enumeration of all exponentially possible sense assignments is not feasible.}  

Existing constraint-based repair algorithms are insufficient as they do not consider senses~\cite{CM11, GMPS13}.  We therefore propose \ofdclean, a framework that includes ontology and data repair with sense assignment to minimize the number of changes to $I$, and $S$.  \reviseTwo{Intuitively, we consider ontology repairs of size $k$, where $k$ ranges from one to the total number of ontology repair values.  For each candidate ontology repair of size $k$, we select the ontology repair $S'$ such that $dist(S, S')$ is minimal.  For each value $k$, and the selected $S'$, we compute a repair $I'$ of $I$ such that the number of updates is minimal, and less than a threshold $\tau$, i.e., $dist(I, I') \leq \tau$ to achieve consistency with $\Sigma$. We call such repairs $\tau$-\emph{constrained repairs}.  To generate a Pareto-optimal set of repairs, we select $(S', I')$ that are minimal for each $k$ value.}  

\ofdclean takes a greedy, two-step approach. First, we compute an initial sense assignment for each equivalence class that minimizes the number of repairs to the ontology and to the data (locally) w.r.t. each OFD and each equivalence class. Secondly, we go beyond a single OFD to consider \reviseTwo{pairwise} interactions among $\phi \in \Sigma$ where attribute overlap occurs, such that repairing values w.r.t. $\phi_1$ may increase or decrease the number of errors in $\phi_2$.  We model the overlapping tuples between $\phi_1$ and $\phi_2$ as two distributions w.r.t. their assigned senses.  We refine the sense assignment such that the number of necessary updates to transform one distribution to the other is minimized.  
%We evaluate these OFD interactions according to the (decreasing) distance between the equivalence class distributions.  
Lastly, with a computed sense assignment, our problem reduces to finding a minimal repair of $S$ and to $I$ such that it satisfies $\Sigma$.  Figure~\ref{fig:repairframework} provides an overview of \ofdclean.  Given $I, S$ and $\Sigma$, \ofdclean identifies inconsistencies in $I$ w.r.t. $\Sigma$ and $S$, and returns a set of minimal repairs ($I', S'$) such that $I' \models \Sigma$ w.r.t. $S'$.  \ofdclean execution proceeds along two main modules.

%************************************************
%\subsection{Framework Overview}

\noindent \textbf{Sense Assignment.} Let $\phi: X \rightarrow_{syn} A$ be a synonym OFD, where for each 
equivalence class $x \in \Pi_{X}$, $x$ is interpreted w.r.t. an assigned sense $\lambda_x$.  \eat{This requires us to select one interpretation during error detection and repair  Finding an optimal data repair, under one interpretation, for a set of FDs is already NP-hard~\cite{BFFR05}, hence, considering all sense assignments for all $x$ is not practically feasible.} Given $I, S$, and $\Sigma$, for each $\phi \in \Sigma$,  we compute $\lambda_x$ for each $x \in \Pi_{X}$ such that $(I', S')$ is minimal and $I' \models \phi$. We denote the set of all senses across all $x$ for $\phi$ as \lambdaphi $ = \cup_{x \in \Pi(X)} {\lambda_x}$.    Intuitively, we compute $\lambda_x$ for every $x \in \Pi_{X}$ to obtain an interpretation that can achieve a minimal repair.  We take a local greedy approach to consider only $\phi$ when computing \lambdaphi.  We then model the interactions among the OFDs in $\Sigma$, via their overlapping equivalence classes, and refine the initial sense assignment.  We quantify interactions and conflicts by comparing the difference between the distributions represented via a pair of equivalence classes.  \eat{For each $x \in \Pi_{X}$, for every $\phi \in \Sigma$, we assign a sense $\lambda$ to x, where $\lambda_x \in \Lambda_{\phi}$. }  The final assignment of senses for all $\phi \in \Sigma$ is denoted as $\Lambda(\Sigma) = \cup_{\phi}$ \lambdaphi.  \reviseTwo{We present our sense assignment algorithm in Section~\ref{sec:sense}.}

\noindent \textbf{Compute Repairs.}   \eat{An ontology $S$ may become outdated as new domain concepts are introduced (but not added to $S$).  For example, new medications are introduced after research and clinical trials, used by patients, but not updated in ontologies.  In such cases, we seek to repair $S$ by adding these concepts to $S$ to re-align with the data.}  
Given $\Lambda(\Sigma)$,  {\sf Compute Repairs} computes the set of minimal repairs by first generating the set of ontology repair candidates that appear in an equivalence class $x$, but not in $S$.  It then considers combinations of these candidates. 
%, up to a given value $k$, by traversing a set containment lattice.  
We optimize lattice traversal using a beam-search strategy that selects the top-$b$ candidates to expand at each level of the lattice~\cite{lowerre}. We consider data updates to the consequent attribute $A$ from the domain of $A$ and synonyms of $A$ in $S$ under the selected sense.  We then select candidate repairs that make at most $\tau$ updates.  \reviseTwo{We present details of how ontology and data repairs are computed in Section~\ref{sec:ontrep}.}

%**********************************************
\section{Sense Assignment}\label{sec:sense}
Given senses for an ontology $S$, we compute $\lambda_x$ for each $x$ and $\phi \in \Sigma$ such that each repair $(I',S')$ is minimal and $\tau$-constrained.  A naive approach  evaluates all candidate senses against all $x \in \Pi_X$, leading to an exponential number of solutions, which is not feasible in practice.  

In this section, we introduce a two-step greedy approach.
%that will be shown to be experimentally effective.  
\reviseTwo{First, we build an initial solution by considering all senses against each equivalence class $x$, and  selecting the sense that minimizes the number of repairs for each $x \in \Pi_X$ and for each $\phi: X \rightarrow_{syn} A$, leading to a local minimum.  This evaluation greedily selects the sense that leads to the fewest repairs for a given $x$, without considering interactions between the equivalence classes and their senses.}  In the second step, we consider the interactions among pairs of equivalence classes $x, x'$, where $x \in \Pi_X, x' \in \Pi_{X'}$ for $\phi, \phi' \in \Sigma$, respectively, by performing a local search to improve the initial solution w.r.t. minimizing the number of repairs.   \reviseTwo{Our greedy search selects pairs of equivalence classes where the modeled data distributions share the greatest distance, indicating a larger amount of conflict.}   We model the interactions among all $\phi, \phi' \in \Sigma$ that share a common consequent attribute by constructing a \emph{dependency graph}, where nodes represent equivalence classes and edges represent a shared consequent attribute. We traverse this dependency graph to refine the initial solution.  We evaluate the tradeoff between updating the initial sense assignment and a new assignment to $x$ by estimating the number of necessary data and ontology repairs to achieve alignment between $x$ and $x'$.  We visit nodes $x'$ within a neighbourhood of the current node $x$ as defined via a distance function.  We use the \emph{Earth Mover's Distance (EMD)} to quantify the amount of work needed to transform the data distributions modeled in $x$ to the distribution represented in $x'$~\cite{Rubner00}.  We visit nodes $x'$ where the EMD values between $x$ and $x'$ are largest, indicating regions with the greatest amount of conflict, while avoiding exhaustive enumeration of all candidate assignments.  
We first describe our greedy algorithm to generate an initial solution, followed by local neighbourhood refinement.

\subsection{Computing an Initial Assignment} \label{sec:greedy}
For $\phi: X \rightarrow_{syn} A$,  $x \in \Pi_X$, we compute $\lambda_x$, an initial sense assignment for each $x$.  To compute a minimal repair, we seek senses $\lambda_x$ containing as many values as possible (maximal overlap) with values in $x$.  Let $t_u, v_u$ represent the number of distinct tuples and values, respectively, that are not covered by $\lambda_x$.   Let $\vec{r}_{\lambda_{x}} =  (t_u, v_u)$ be a vector representing the number of necessary tuples and values that must be updated in $x$ under sense $\lambda$ to satisfy $\phi$.  We seek a sense $\lambda'$ such that $\vec{r}_{\lambda_{x}'} \prec  \vec{r}_{\lambda_{x}}, \forall \lambda_x$. 
\eat{i.e., the repair $\vec{r}_{\lambda_{x}'}$ dominates $\vec{r}_{\lambda_{x}}$, for all senses. }

The enumeration of all candidate senses to $x$ is practically inefficient.  We take a greedy approach that finds a $\lambda_x$ with maximal overlap with values in $x$.  We first construct an index of all senses containing each value $v \in x$, denoted by $sset(v)$.  For example, Figure~\ref{fig:ecg}(c) shows $sset(c_3) = \{\lambda_1, \lambda_4\}$.  To maximize the coverage of values, we naturally seek values with maximal frequency.  However, given that values in $x$ may contain errors, we seek an ordering of the values in $x$ that is robust to outliers.  The \emph{Median Absolute Deviation (MAD)} measure quantifies the variability of a sample, being more resilient to outliers than the standard deviation~\cite{rc1993}.  \eat{Since we assume that errors are normally small in number, the deviations from a small number of outliers are less relevant.}  Let $f(v)$ denote the frequency of value $v$. For a univariate dataset $\{f(v_1), f(v_2), \ldots f(v_n)\}$, MAD is defined as the median of the absolute deviations from the median, i.e., MAD($x$) = median$(|f(v_i) - f(\bar{(v)})|)$, where $f(\bar{(v)})$ is the median of all values in $x$.   We sort the values in $x$ according to their decreasing MAD scores.  We iteratively search for a sense $\lambda_x$ containing as many values as possible from $x$ with the highest MAD scores.  \reviseTwo{We use $k'$ as a (decreasing) counter to find such a sense covering $k'$ values in $x$, where $k'$ is initialized to $n$, the number of unique values in $x$.}  \eat{, i.e., starting with $k = |MAD($x$)|$.}  For each value $v_i$, we compute the intersection of their respective $sset(v_i)$.  If there exists a non-empty intersection of senses, the algorithm ends, and returns these senses as an assignment for $x$, denoted as $potential\_set(x)$. We select the sense $\lambda_x$ covering the largest number of tuples in $x$.   Algorithm~\ref{pc:greedyassignment} provides the details.

%***********************
\begin{figure}[!t]
\noindent\begin{minipage}[t]{.5\textwidth}
\begin{algorithm}[H]
\centering
\caption{Initial$\_$Assignment($x$, $S$, $\Lambda$)}\label{pc:greedyassignment}
  \footnotesize
%    \caption{Initial$\_$Assignment($x$, $S$, $\Lambda$)}\label{pc:greedyassignment}
    \begin{algorithmic}[1]
 %       \STATE $\Lambda$ be the set of all the senses of $S$ 
%        \STATE let $\lambda_x \in \Lambda$ be the initial sense is selected for $x$
        \STATE let $sset(v) \leftarrow \emptyset$ %be the sense set s.t. each sense $\lambda \in sset(v)$ contains $v$
        \FOR{each $\lambda \in \Lambda$} \label{line:indexstart}
            \FOR{each value $v \in \lambda$}
                \STATE $sset(v) \leftarrow sset(v) \bigcup \lambda$
            \ENDFOR
        \ENDFOR \label{line:indexend}
        % \STATE let $f(u)$ be the occurrence frequency of each value $u \in x$
%        \STATE compute MAD for values in $x$ w.r.t. their occurrence frequencies \label{line:mad}
%        \STATE let $MAD(x)$ be the set saves the values in $x$ w.r.t. their MAD in descending order \label{line:candidate}
        \STATE $k' \leftarrow MAD(x)$ \label{line:m}
        \STATE $potential\_set(x) \leftarrow \emptyset$ %be the set saves all the potential selected senses
        \WHILE{$potential\_set(x)$ is $\emptyset$}
            \FOR{each top-$k'$ $\{f(v_1), ..., f(v_n)\} \in MAD(x)$}
                \STATE $\Lambda' \leftarrow sset(v_1) \bigcap ... \bigcap sset(v_n) $
                \IF{$\Lambda' \neq \emptyset$}
                    \STATE $potential\_set(x) \leftarrow potential\_set(x) \bigcup \Lambda'$
                \ENDIF
            \ENDFOR
            \STATE $k' \leftarrow k'-1$ \label{line:endwhile}
        \ENDWHILE
        \FOR{each sense $\lambda \in potential\_set(x)$} \label{line:ppset}
            \STATE $cover(\lambda) \leftarrow$ the tuple coverage of $\lambda$ over $x$
        \ENDFOR 
        \STATE $\lambda_x \leftarrow \lambda \in cover(\lambda)$ with maximal tuple coverage \label{line:lambdax}
        \RETURN $\lambda_x$ \label{line:initialresult}
    \end{algorithmic}
%  \end{small}
\end{algorithm}
\end{minipage}%
\hfill
\begin{minipage}[t]{0.46\textwidth}
\begin{algorithm}[H]
    \centering
     \caption{Local$\_$Refinement($G$, $x$, $\Lambda$)}\label{pc:refine}
    \footnotesize
   \begin{algorithmic}[1]
%        \STATE let $G$ be the dependency graph
%        \STATE let $\{(x_1, x_2), w(x_1,x_2)\} \in EMD$, where $e_{w(x_1,x_2)}$ be EMD value between interacted equivalence classes $x_1$ and $x_2$
%        \STATE let $\{x, \sum e_{w(x)}\} \in \sum e_{w}$, where $\sum e_{w(x)}$ be the sum of EMD value between $x$ and its neighbors \fei{I don't understand the point of summing EMD values per node, and I think much of these two lines 3, 4 can be simplified using our notation in text, it is currently very hard to read and understand.  This is supposed to be pseudocode.}
%        \STATE let $\{x, \lambda_x\} \in \Lambda(I)$
        \STATE $u_1 \leftarrow $ BFS($G$) %be the start equivalence class to use BFS manner search graph $G$
        \FOR{each vertex $u_2$ connected to $u_1$}
            \IF{$w(u_1,u_2) > \theta$} \label{line:theta}
                \STATE compute $min\{$\cost $\}$ repair \label{line:mincost}
                \IF{sense reassignment for $u_2$ to $\lambda'$} 
                    \IF{$w'(u_1,u_2) < w(u_1,u_2)$}  %and $\sum e'_{w(v)} \leq \sum e_{w(v)}$}
                        \STATE $\Lambda \leftarrow \Lambda(I) \setminus \{u_2, \lambda\}$
                        \STATE $\Lambda \leftarrow \Lambda(I) \bigcup \{u_2, \lambda'\}$ \label{line:reassign}
                    \ENDIF
                \ENDIF
            \ENDIF
        \ENDFOR
        \STATE continue until all all $u_1$ visited \label{line:stop}  %\fei{Can this not be defined with a forall nodes in G?}
        \RETURN $\Lambda$ \label{line:refineresult}
    \end{algorithmic}
\end{algorithm}
\end{minipage}
\vspace{-0.25cm}
\end{figure}

\begin{example}\label{ex:ecg}
Figure~\ref{fig:ecg}(a) shows a data instance $I$ for two synonym OFDs $\phi_1: A \rightarrow C$ and $\phi_2: B \rightarrow C$.  Figure~\ref{fig:ecg}(b) shows seven senses and the corresponding synonym values for each sense.  Let $\lambda_{x_3}$ represent equivalence class $x_3 = \Pi_{B=b_2}$. Figure~\ref{fig:ecg}(c)) shows the index of senses for each value for all $x$.  We compute the MAD for values in $x_3$, i.e.,  $c_1 - c_4$, and determine that the ranked ordering is $\{c_4, c_2, c_1, c_3\}$.  We intersect the senses in $sset(c_2)$ and $sset(c_4)$, resulting in $\lambda_2$ as the initial sense for $x_3$. 
%(1) we index each distinct value exists in the data of attribute $C$ with a set of senses that contain the value (as shown in ; (2) based on the computation of MAD w.r.t. the occurrence frequency of each value in $x_3$ (i.e., $c_1 - c_4$), the frequencies of $c_2$ and $c_4$ are higher than the median.  \fei{You need to show this computation, what values are considered, and the final ranking of the values. } We apply set intersection between $sset(c_2)$ and $sset(c_4)$, and the result is $\lambda_2$. Then we consider $\lambda_2$ as initial sense for $x_3$. 
\end{example}

\subsection{Local Refinement} \label{sec:refinement}
We now discuss how we model interactions between $x$ and $x'$, and address conflicting sense assignments that can lead to an increased number of repairs.

\subsubsection{Modeling Interactions} \label{sec:interactions}

Let $\Omega_{x,x'}(\lambda) = \{t | t \in (x \cap x')\}$ represent the set of overlapping tuples between two equivalence classes $x, x'$ from $\phi, \phi'$, respectively, \reviseTwo{that share the same consequent attribute}, interpreted w.r.t. sense $\lambda$.  To quantify conflict among $\phi, \phi'$, we need to interpret tuple values in the intersection of $x$ and $x'$ under the assigned sense. That is, conflicts may arise when there is a difference in the interpretation of the tuple values.  Let $x_{\lambda}$ represent the set of values in $x$ when interpreted under sense $\lambda$.  Given synonym relations, assume that for each sense $\lambda$, there exists a canonical value representing  all  equivalent values in $\lambda$.  Let $D(x_{\lambda})$ represent the distribution of values in $x_{\lambda}$, where values in $x$ that are in $\lambda$ are represented by the canonical value.   If $x$ and $x'$ are assigned the same sense $\lambda$, then tuples in their intersection, ($\Omega_{x,x'}(\lambda)$), share the same distribution,  $D(\Omega_{x,x'}(\lambda))$.   We use \Omegalambda \ to denote overlapping tuples in $x$ and $x'$ when it's clear from the context.

\begin{example}
Let $x_2 = \Pi_{A=a_1}$ and $x_3  =\Pi_{B=b_2}$, highlighted in blue and red, respectively, in Figure~\ref{fig:ecg}(a).  If we assign sense $\lambda_1$ to both classes, then the tuples in $\Omega_{x_2,x_3}(\lambda_1) = \{t_3, t_4, t_5, t_6\}$  share the same distribution, as shown via light grey bars in Figure~\ref{fig:ecg}(d), assuming that $c_2$ is the canonical value for $\lambda_1$. \eat{Assume $c_2$ is the canonical value for $\lambda_1$, then $\Omega$ of $x_2$ and $x_3$ shares the same distributions.  If we assign $\lambda_2$ to $x_2$ and $\lambda_1$ to $x_3$,  Figure~\ref{fig:ecg}(d) shows the two different distributions of values in $\Omega$.}
\end{example}

The above example demonstrates that when two distributions,  $D$(\Omegalambda) and $D(\Omega(\lambda'))$, are equivalent for $\lambda = \lambda'$, the \emph{outliers}, i.e., values not captured by $\lambda$, translate into necessary repairs to achieve consistency w.r.t. tuples in $\Omega$. In the above example, $c_4$ is the outlier, and we can resolve the inconsistency by updating $c_4$ to $c_2$ or by adding $c_4$ to $\lambda_1$.
\eat{In this case, the two interacted equivalence classes share the same outliers in $\Omega$. For example, as shown in Figure~\ref{fig:ecg}(d), consider assigning $\lambda_1$ to both $x_2$ and $x_3$. Assume $c_2$ is the canonical value for $\lambda_1$, fei: already said previously, redundant with the example.} In contrast, when $\lambda \neq \lambda'$, tuples in \Omegalambda and $\Omega(\lambda')$ will share different outliers.  For example, consider assigning $\lambda_1$ to $x_2$, and $\lambda_2$ to $x_3$.  The distributions for $\Omega_{x_2,x_3}(\lambda_1)$ and $\Omega_{x_2,x_3}(\lambda_2)$ are shown via the light grey bars and dark grey bars, respectively, in Figure~\ref{fig:ecg}(d).  Assuming $c_2$ is the canonical value for each sense, the outliers are $c_1$ and $c_3$ for $\lambda_2$, and $c_4$ for $\lambda_1$.  Let \rhoomega \ represent the set of outliers w.r.t. sense $\lambda$, i.e., the unique values in \Omegalambda \ not covered by  $\lambda$, \rhoomega = $\{v | v \in \Omega_{x,x'}(\lambda), v \not \in \lambda\}$.   We can resolve these outliers by: (1) updating the sense of $x$ from $\lambda$ to a new $\lambda'$; (2) adding outlier values \rhoomega \ to $S$; or (3) performing data repairs to update the values in \rhoomega \  to values $v \in \lambda$. 

For a class $x$, we seek a new sense assignment $\lambda'' \neq \lambda$ that includes one or more  values in \rhoomega.  This will cause an increase in the number of data or ontology repairs, as this will require sacrificing at least one value $t[A]$ for attribute $A$ that was covered under $\lambda$, $t \in \{x \setminus \Omega_{x,x'}(\lambda) \}$.  This penalty occurs since if there existed a better sense $\lambda''$ that covered all values in $\rho_{\lambda,x}$  then $\lambda''$ would have been assigned to $x$ during the initialization step (Section~\ref{sec:greedy}).  Hence, we can have $x$ keep $\lambda$ or be re-assigned to $\lambda''$.  However, we must consider the tradeoff of this re-assignment, by incurring a possible penalty (an increased number of repairs) according to the  non-overlapping tuples in $x$.  Alternatively, if $x$ keeps $\lambda$, then we must update the outlier values via data and/or ontology repairs.

\begin{comment}
\begin{algorithm}[t]
  \begin{small}
    \caption{Initial$\_$Assignment($x$, $S$, $\Lambda$)}\label{pc:greedyassignment}
    \begin{algorithmic}[1]
 %       \STATE $\Lambda$ be the set of all the senses of $S$ 
%        \STATE let $\lambda_x \in \Lambda$ be the initial sense is selected for $x$
        \STATE let $sset(v) \leftarrow \emptyset$ %be the sense set s.t. each sense $\lambda \in sset(v)$ contains $v$
        \FOR{each $\lambda \in \Lambda$} \label{line:indexstart}
            \FOR{each value $v \in \lambda$}
                \STATE $sset(v) \leftarrow sset(v) \bigcup \lambda$
            \ENDFOR
        \ENDFOR \label{line:indexend}
        % \STATE let $f(u)$ be the occurrence frequency of each value $u \in x$
%        \STATE compute MAD for values in $x$ w.r.t. their occurrence frequencies \label{line:mad}
%        \STATE let $MAD(x)$ be the set saves the values in $x$ w.r.t. their MAD in descending order \label{line:candidate}
        \STATE $k \leftarrow |MAD(x)|$ \label{line:m}
        \STATE $potential\_set(x) \leftarrow \emptyset$ %be the set saves all the potential selected senses
        \WHILE{$potential\_set(x)$ is $\emptyset$}
            \FOR{each top-$k$ $\{f(v_1), ..., f(v_n)\} \in MAD(x)$}
                \STATE $\Lambda' \leftarrow sset(v_1) \bigcap ... \bigcap sset(v_n) $
                \IF{$\Lambda' \neq \emptyset$}
                    \STATE $potential\_set(x) \leftarrow potential\_set(x) \bigcup \Lambda'$
                \ENDIF
            \ENDFOR
            \STATE $k \leftarrow k-1$ \label{line:endwhile}
        \ENDWHILE
        \FOR{each sense $\lambda \in potential\_set(x)$} \label{line:ppset}
            \STATE $cover(\lambda) \leftarrow$ the tuple coverage of $\lambda$ over $x$
        \ENDFOR 
        \STATE $\lambda_x \leftarrow \lambda \in cover(\lambda)$ with maximal tuple coverage \label{line:lambdax}
        \RETURN $\lambda_x$ \label{line:initialresult}
    \end{algorithmic}
  \end{small}
\end{algorithm}
\end{comment}
%-------------------------------------------------

\textbf{Achieving Alignment.} 
To evaluate the tradeoff of a sense re-assignment, we define a cost function \cost \ that sums the number of data and ontology repairs to align  $D$(\Omegalambda) and  $D(\Omega(\lambda'))$.  For ontology repairs, to re-align $D$(\Omegalambda) and   $D(\Omega(\lambda'))$, we must add each outlier value in 
\rhoomega $\cup \rho_{\Omega, \lambda'}$ to ontology $S$, via senses $\lambda$ and $\lambda'$, i.e.,  \cost = |\rhoomega| + |$\rho_{\Omega, \lambda'}|$.  Since ontology repairs do not lead to further conflicts, they incur a cost proportional to the number of values added to $\lambda$ and $S$.  For data repairs, we resolve differences by updating tuples $\{t| t \in$ \Omegalambda, $t[A] \in$ \rhoomega $\cup \rho_{\Omega, \lambda'}\}$ to a value from the 
candidate set $(\lambda \cap \lambda')$, which guarantees that the repaired value is covered by both senses.  We set \cost $=  |\mathcal{R}$(\Omegalambda)$| + |\mathcal{R}(\Omega(\lambda'))|$, where $\mathcal{R}$(\Omegalambda) = $\{t |t \in $\Omegalambda, $t[A] \in$ \rhoomega  $\}$, i.e., the number of tuples containing an outlier value.  We select the repair value from the candidate set that minimizes \cost.\   Lastly, we consider sense re-assignment that updates $\lambda$ to $\lambda'$ for $x$.  For $t \in x$, the delta cost, i.e., the number of data repairs to replace $\lambda$ with $\lambda'$ is \cost $=  |\mathcal{R}(x_{\lambda'})| - |\mathcal{R}(x_{\lambda})|$, where $\mathcal{R}(x_{\lambda}) = \{t |t \in x, t[A] \in \rho_{x, \lambda}\}$.  For a pair of classes $x, x'$, we consider all three repair options, and select the option that locally minimizes \cost.

\begin{figure}[!t]
%\noindent \begin{minipage}{0.3\linewidth}
\hspace*{-2cm}
\noindent\begin{minipage}[b]{.3\textwidth}
\centering
\includegraphics[width=3.2cm]{./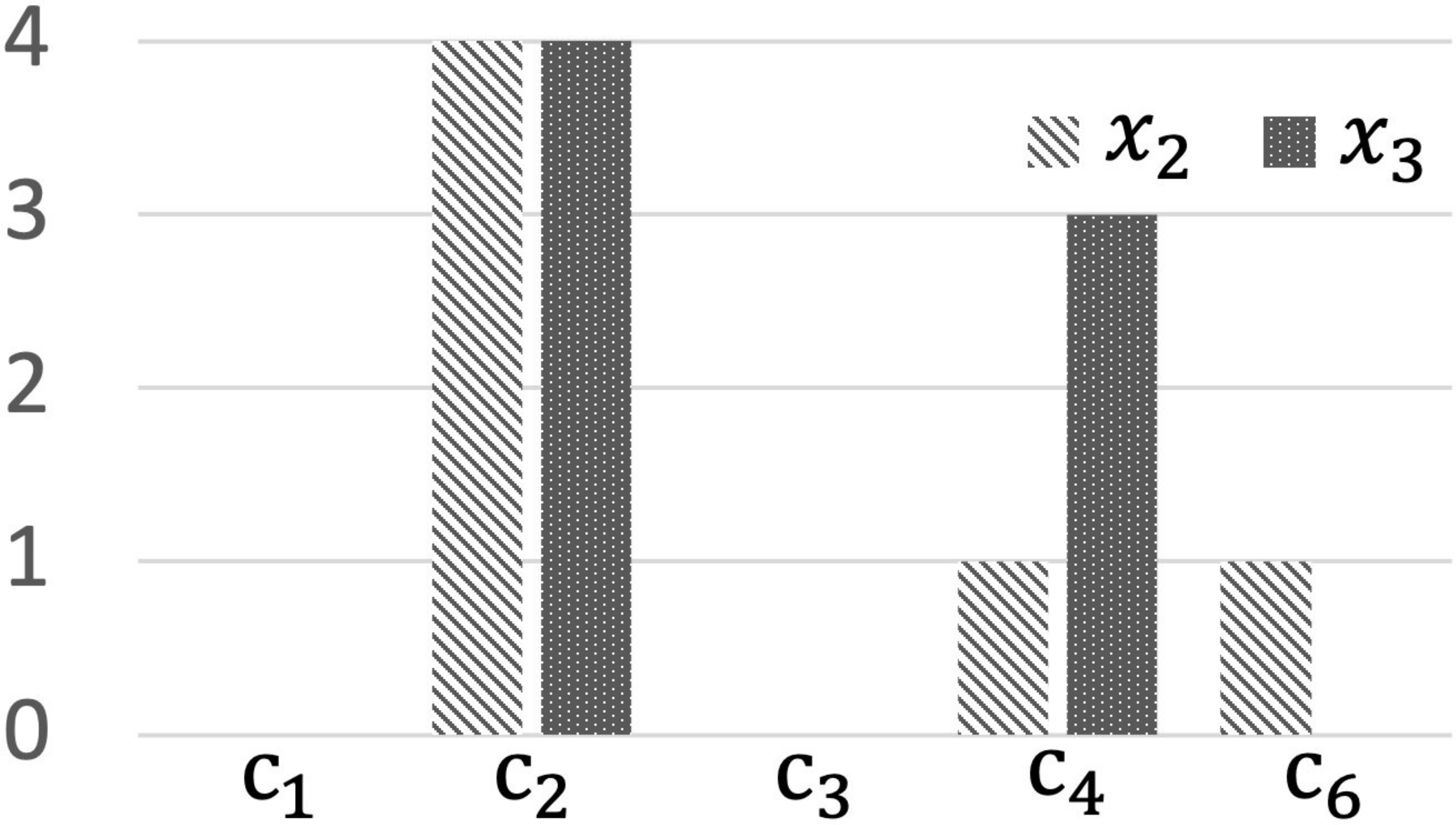}
\caption{$x_{2\lambda_1}$, $x_{3\lambda_1}$ distributions.}
\label{fig:distlambda1}
\end{minipage}
%\begin{minipage}{0.3\linewidth}
\noindent\begin{minipage}[b]{.3\textwidth}
\centering
\includegraphics[width=3.2cm]{./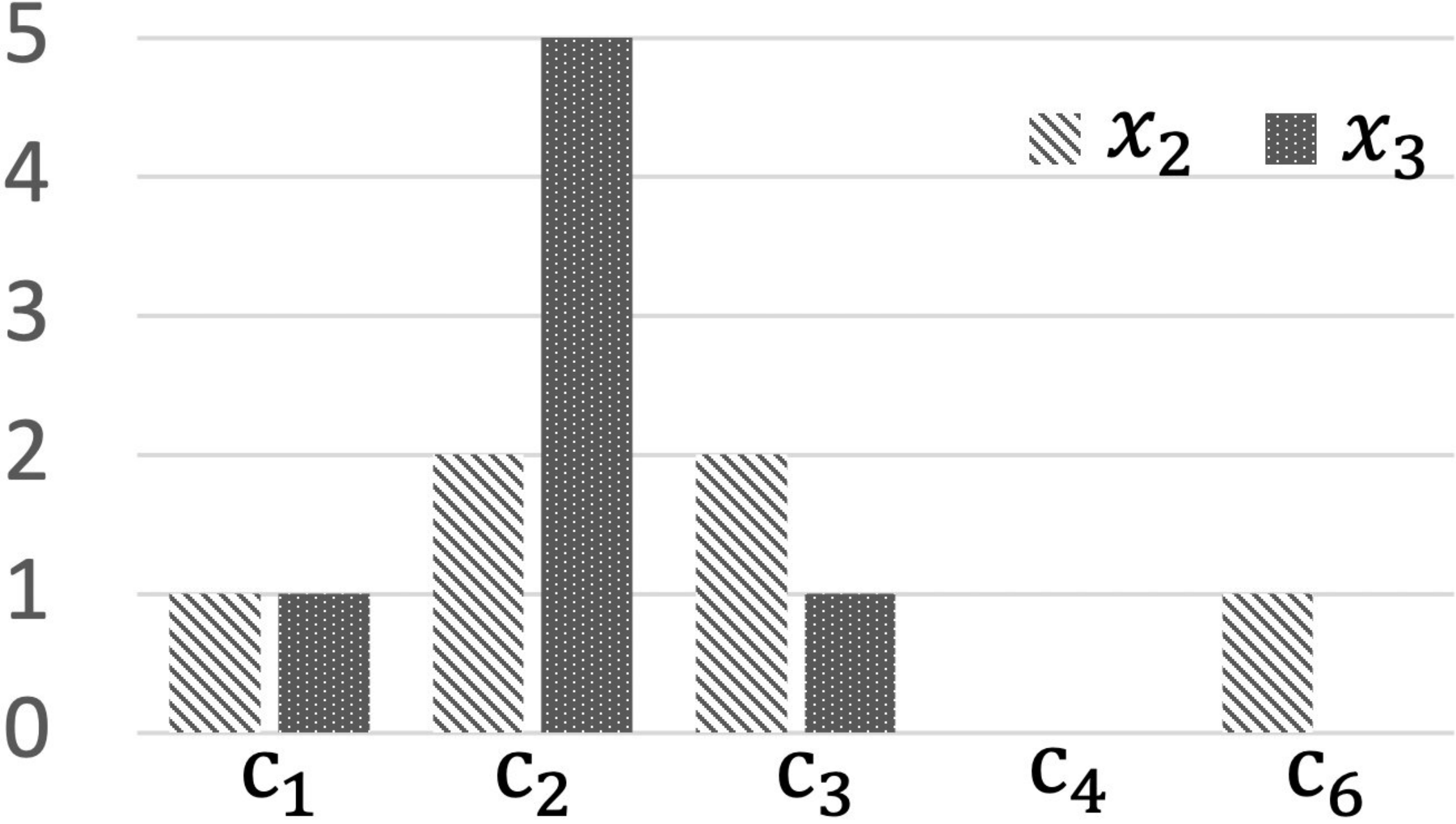}
\caption{$x_{2\lambda_2}$, $x_{3\lambda_2}$ distributions.}
\label{fig:distlambda2}
\end{minipage}
%\begin{minipage}{0.3\linewidth}
\noindent\begin{minipage}[b]{.3\textwidth}
\centering
\includegraphics[width=6.2cm]{./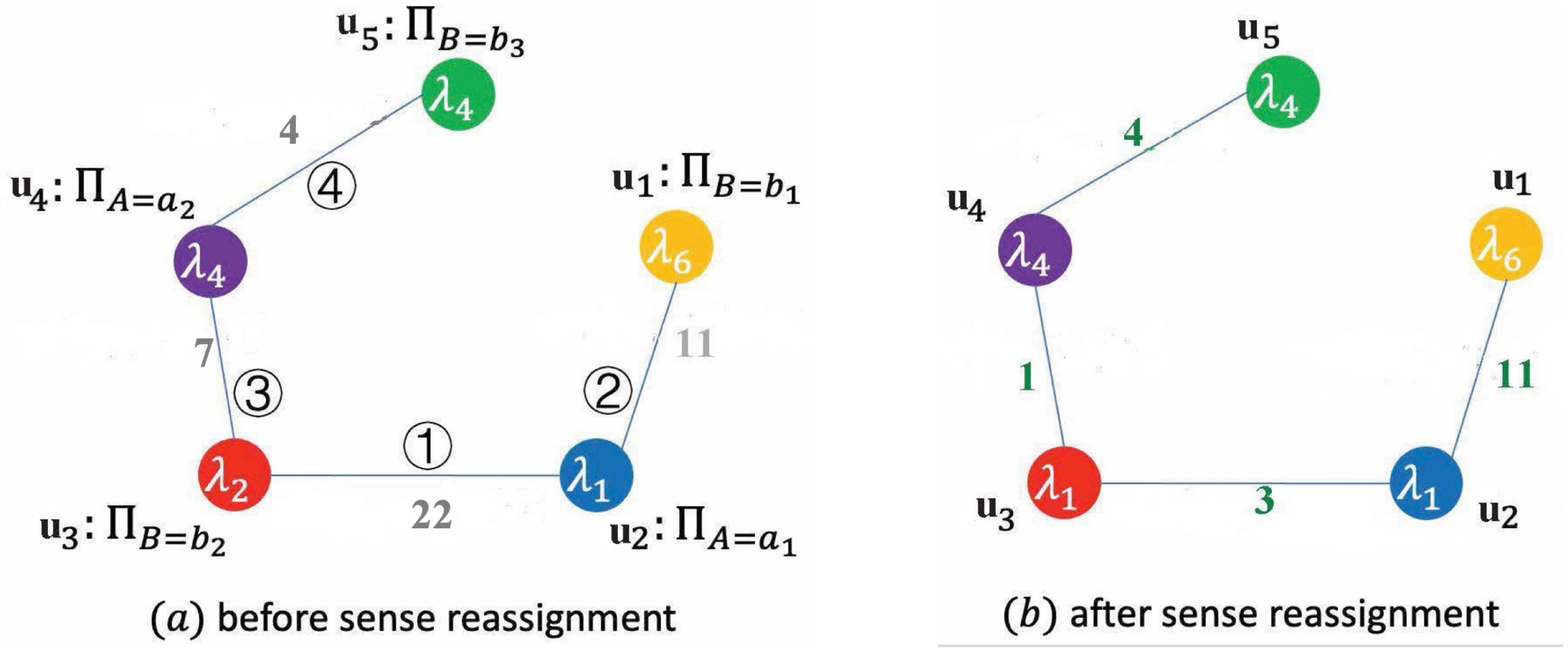}
\caption{Dependency graph.}
\label{fig:dg}
\end{minipage}
\vspace{-0.5cm}
\end{figure}

\begin{example} \label{ex:alignment}
Figures~\ref{fig:distlambda1} and~\ref{fig:distlambda2} show the distribution of values in $x_2$ and $x_3$ with initial senses $\lambda_1$ and $\lambda_2$, respectively. 
%Consider equivalence classes $x_2$ and $x_3$ with initial senses $\lambda_1$ and $\lambda_2$, respectively. The value distribution for both equivalence classes w.r.t. $\lambda_1$ and $\lambda_2$ are shown in Figure~\ref{fig:distlambda1} and Figure~\ref{fig:distlambda2}, respectively. 
Let $c_2$ be the canonical value for $\lambda_1$ and $\lambda_2$. We  have: $\rho_{x_2, \lambda_1} = \{c_4, c_6\}$, $\rho_{x_3, \lambda_1} = \{c_4\}$, and  $\rho_{x_2, \lambda_2} = \{c_1, c_3, c_6\}$, $\rho_{x_3, \lambda_2} = \{c_1, c_3\}$. 
We evaluate ontology and data repair, and two sense assignment options: (i) add $c_4$ to $\lambda_1$, and add $c_1$ and $c_3$ to $\lambda_2$ to eliminate the conflict, leading to \cost = 3; (ii) update three tuples with values \{$c_4$, $c_1$ $c_3$\} to $c_2$, giving \cost = 3; (iii) assign $\lambda_1$ to $x_3$ (from $\lambda_2$), leading to $|\mathcal{R}(x_{3, {\lambda_1}})| - |\mathcal{R}(x_{3, \lambda_2})| = 3- 2 = 1$;
%The new value distribution for $x_3$ w.r.t. $\lambda_1$ is shown as the dark bars in Figure~\ref{fig:distlambda1}. Since $|\mathcal{R}(x_3, \lambda_1)| = 3$, $|\mathcal{R}(x_3, \lambda_2)| = 2$ (i.e., update $c_1$ and $c_3$ to $c_2$), $\mathcal{C}ost() = 3 - 2 = 1$;
(iv) assign  $\lambda_2$ to $x_2$ where we update $\{c_1, c_3, c_6\}$ to $c_2$ giving \cost  = $|\mathcal{R}(x_{2, \lambda_2)}| - |\mathcal{R}(x_{2, \lambda_1)}| = 4-2 = 2$.  Since \cost \ is minimal with option (iii), we re-assign $\lambda_1$ to class $x_3$. 

%The new value distribution for $x_2$ w.r.t. $\lambda_2$ is shown in light grey bars in Figure~\ref{fig:distlambda2}. Since $|\mathcal{R}(x_2, \lambda_2)| = 4$ (i.e., update $c_1, c_3, c_6$ to $c_2$), $|\mathcal{R}(x_2, \lambda_1)| = 2$ (i.e., update $c_4$ and $c_6$ to $c_2$), $\mathcal{C}ost() = 4 - 2 = 2$. Based on the above strategies, we will consider (iii) which re-assign sense $\lambda_1$ to $x_3$ since its cost $\mathcal{C}ost()$ is minimal.  
\end{example}

\subsubsection{Identifying Candidate Classes}
\label{sec:candclass}
In this section, we discuss how to select pairs of classes $x, x'$ for refinement.  \reviseTwo{Visiting all pairs is not feasible, as we must evaluate $m$ choose 2 pairs of classes, where $m$ equals the total number of equivalence classes across all $\phi \in \Sigma$.}   For classes $x_{\lambda}, x'_{\lambda'}$, we quantify the deviation between their respective distributions, by measuring the amount of work needed to transform $D($\Omegalambda$)$ to $D(\Omega(\lambda'))$ (or vice-versa).  We quantify this work using the Earth Mover's Distance measure, denoted $EMD$~\cite{rtg2000}.  

Let $P$ and $Q$ represent distributions $D(\Omega_{x_1, x_2}(\lambda_1)), D(\Omega_{x_1, x_2}(\lambda_2))$, respectively.  We compute $EMD(P,Q)$ and check whether $EMD(P,Q) > \theta$, for a user-given threshold $\theta$.  Intuitively, if $P$ and $Q$ are sufficiently different (according to $\theta$), then classes $x_1, x_2$ are candidates to consider sense re-assignment to re-align $P$ and $Q$.  Towards finding the sense with minimal repair, if the cost \cost \ of sense re-assignment and the EMD values are lower with the new sense, we proceed with sense re-assignment for $x_1, x_2$.  Otherwise, we keep the initial senses for $x_1$ and $x_2$.

\textbf{Dependency Graph.} 
\label{sec:depgraph}
%We construct a dependency graph $G$ that models the interactions between equivalence classes $x, x'$. 
Let $G = (V,E)$ denote a \emph{dependency graph}, where each vertex $u \in V$ represents an equivalence class $x$, and an edge $e = (u_1, u_2) \in E$ exists when the corresponding classes $(x_1 \cap x_2) \neq \emptyset$.  \reviseTwo{Recall that we define overlap between two classes $x_1$ and $x_2$ when their respective OFDs share a common consequent attribute.  Hence, we only include nodes (and edges) in the dependency graph for OFDs with a common consequent attribute, thereby avoiding enumeration of all pairs of classes.}  We use the terms $u_1, u_2$ to represent classes when it is clear from the context.  Let $w(u_1,u_2)$ denote the weight of edge $e$, representing the EMD value between $u_1$ and $u_2$.  We visit nodes  in a breadth-first search manner, and select nodes $u_1$ with the largest EMD value by summing over all edges containing $u_1$.  This strategy prioritizes classes that require the largest amount of work.  For each visited edge, we check whether $w(u_1, u_2) > \theta$, and if so, check whether a sense re-assignment will reduce the weight, i.e., reduce the EMD value under the new sense.  If so, then the new sense re-aligns the two class distributions and incurs a minimal cost \cost.  The algorithm ends when all vertices have been visited.  Algorithm~\ref{pc:refine} provides the details.

\begin{example}
Figure~\ref{fig:dg}(a) shows the dependency graph corresponding to the equivalence classes in Figure~\ref{fig:ecg}(a), with EMD values as edge weights.  \eat{and the EMD values are shown on the edges.} Suppose $\theta = 10$, and we visit nodes in BFS order of $\{u_2, u_1, u_3, u_4, u_5\}$.  Starting at node $u_2$ (the blue node), we evaluate edge $(u_2, u_3)$ with weight 22. From Example~\ref{ex:alignment}, we update the sense for $u_3$ from $\lambda_2$ to $\lambda_1$ since the new weight $w'(u_3,u_2) < w(u_3,u_2) = 3 < 22$. \eat{ and $\sum e'_{w(x_3)} = 3 + 1 = 4 < e_{w(x_3)} = 22 + 7 = 29$.}  We next consider edge $(u_2, u_1)$ with $w(u_2,u_1) = 11$, with costs to update $u_1$ for data repair, ontology repair and sense reassignment as 1, 1 and 0, respectively.  However, after updating $u_1$ sense $\lambda_6$ to $\lambda_1$, $w'(u_2,u_1)$ does not decrease.  Thus, we do not refine the sense for $u_1$, and keep $\lambda_6$ as is.  \eat{You need to explain this second condition more in your text as I mentioned earlier because it creates questions about why you leave the senses as is, and why this is OK to do.  It also raises questions about EMD and the relationship to our operations to align the distributions.}   The algorithm then visits node $u_3$ and evaluates edge ($u_3$, $u_4$), where after sense reassignment to $\lambda_1$ for $u_3$, we have $w(u_3, u_4) = 1 < \theta$, and there is no further evaluation needed.  Lastly, we visit nodes $u_4, u_5$, where $w(u_4,u_5) = 4 < \theta$, and the algorithm terminates.    Figure~\ref{fig:dg}(b) shows the final sense assignments.
\end{example}

\subsection{Algorithm}
\label{sec:sensealg}
Algorithm~\ref{pc:algorithm} presents the details of our sense assignment algorithm.  We first compute an initial assignment for every equivalence class $x$.  We construct the dependency graph $G$, and compute the $EMD$ between overlapping classes ($u_1, u_2$) as edge weights.  We visit nodes in decreasing order of their $EMD$ values by summing over all corresponding edges.  For example, Figure~\ref{fig:ecg}(d) shows that we visit $u_2$ first since the EMD value of $u_2$ = 22 + 11 = 33, and of $u_3$ = 22 + 7 = 29.  Furthermore, we visit edge $(u_2, u_3)$ first since $w(u_2,u_3) = 22$ is the largest edge weight of $u_2$. We refine the sense for each equivalence class based on Algorithm~\ref{pc:refine}. Lastly, the algorithm returns the final sense assignment.

\noindent \textbf{Complexity.} Computing an initial assignment takes time O($|\Pi_{X}| \cdot m$) to evaluate all $|\Pi_{X}|$ equivalence classes, and in the worst case, the total $m$ senses.  Traversing the dependency graph is in the same order as BFS, taking time $O((|V|+|E|) \cdot m)$, and to evaluate senses for each visited node.     

\begin{figure}[!t]
\begin{minipage}[t]{0.46\textwidth}
\begin{algorithm}[H]
    \centering
    \caption{Sense$\_$Assignment($I$, $\Sigma$, $S$, $\theta$)}\label{pc:algorithm}
    \footnotesize
    \begin{algorithmic}[1]
       \STATE  $\Lambda \leftarrow \{\emptyset, \emptyset\}$ 
 %       \STATE let $\{x, \lambda_x\} \in \Lambda(I)$, where $\lambda_x$ be the initial sense is selected for the equivalence class $x$ \fei{I don't think you need these two lines if they have already been defined in the text.  You just need to populate $\Lambda(I)$. What is $I$? }
        \FOR{each $\phi \in \Sigma$} \label{line:initialstart}
            \FOR{each $x \in \Pi_X(I)$}
                \FOR{each  $\lambda \in \Lambda$}
                    \STATE $\lambda_x \leftarrow$ Initial$\_$Assignment($x$, $S$, $\Lambda$)
                    \STATE $\Lambda \leftarrow \Lambda \bigcup \{x, \lambda_x\}$ \label{line:initialend}
                \ENDFOR
            \ENDFOR
        \ENDFOR
        \STATE  $G$ = CreateDepGraph($I, \Sigma, S, \Lambda$)  %($x$, $\Lambda(I)$, EMD), minimize words} Model the dependency graph $G$ \label{line:graph}
        % \STATE Let $EMD$ be the map that saves each pairwise equivalence classes and its EMD values
 %       \STATE Let $\{(x_1, x_2), e_{w(x_1,x_2)}\} \in EMD$ and $e_{w(x_1,x_2)}$ is maximal \label{line:maximal} \fei{Should we not sort the edges in DSC and just pop them off a stack? A while loop that pops as long as EMD > $\theta$?}
        % \STATE let $\{x, \sum e_{w(x)}\} \in \sum e_{w}$ be the total EMD between $x$ and its neighbors
        \STATE \textbf{sort}($G$, DSC)  Sort vertices in DSC order EMD edge weights
        
       \WHILE{($u_1$ = BFS($G$) and EMD($u_1$) > $\theta$)}
   %     \IF{$\sum e_{w(x_1)} > \sum e_{w(x_2)}$}  \fei{I don't understand this condition.}
    %        \STATE $x \leftarrow x_1$
    %    \ELSE
     %       \STATE $x \leftarrow x_2$ \label{line:startnode}
    %    \ENDIF
        \STATE $\Lambda \leftarrow$  Local$\_$Refinement($G$, $x$, $\Lambda$) \label{line:refine}
        \ENDWHILE
        \RETURN $\Lambda$ \label{line:finalresult}
    \end{algorithmic}
\end{algorithm}
\end{minipage}
\hfill
\noindent\begin{minipage}[t]{.5\textwidth}
\begin{algorithm}[H]
\centering
   \caption{Ontology$\_$Repair ($I$, $\Sigma$, $S$, $b$) }\label{pc:repont}
  \footnotesize
%    \caption{Initial$\_$Assignment($x$, $S$, $\Lambda$)}\label{pc:greedyassignment}
    \begin{algorithmic}[1]
        \STATE $Cand(S) \leftarrow \emptyset$ %be the set that contains all the candidate ontology repairs w.r.t. $\Sigma$
        \STATE $\mathcal{P(S)} \leftarrow \{\emptyset, ..., \emptyset \}$ %be the set that saves all Pareto-optimal ontology repair solutions
        \FOR{each $\phi \in \Sigma$}
            \FOR{each $x \in \Pi_X(I)$}
                \IF{$t_{e} \in x$ and $t_{e}[A] \not \in S$}
                    \STATE $Cand(S) \leftarrow Cand(S) \bigcup t_{e}[A]$
                \ENDIF
            \ENDFOR
        \ENDFOR
        % \FOR{each $c \in Cand(S)$}
        %     \STATE $S^{*} \leftarrow$ add $c$ to $S$
        % \ENDFOR
        % \STATE sort all $c \in Cand(S)$ w.r.t increasing $\delta_{opt}(S^{*}, I')$, where $\delta_{opt}(S^{*}, I') \leq \tau$
        % \STATE top-$b_1(S) \leftarrow \{\emptyset, ..., \emptyset \}$
        % \STATE top-$b_1(S) \leftarrow$ select top-$b$ of sorted $c$
        % \STATE insert top one element of top-$b_1(S)$ into $\mathcal{S}$
        \STATE $k \leftarrow 1$
        \STATE Generate $\mathcal{V}_k$ as clusters of size $k$, with values $v \in Cand(S)$
        \STATE $L_k \leftarrow \{v_k | v_k \in {\mathcal{V}_k} \}$
        \WHILE{$L_k \neq \emptyset$}
%            \FOR{each $v_k \in L_k$}  \fei{The below has to be reviewed.}
%                \STATE $S^{*} \leftarrow$ add $v_k$ to $S$ 
 %           \ENDFOR
 %           \STATE sort all $v_k \in Cand(S)$ w.r.t increasing $\delta_{opt}(S^{*}, I')$, where $\delta_{opt}(S^{*}, I') \leq \tau$
            \STATE $\delta(v_k)$ = compute data repairs for $v_k$ 
            \STATE $L_k \leftarrow$ top-$b$ nodes with smallest $\delta(v_k)$ to expand
            \STATE $v_k* = \{v_k |$ min $\delta(v_k)\}$
            \STATE $\mathcal{P(S)} \leftarrow (S', I')$ using $v_k*$ %insert top one element of top-$b_l(S)$ into $\mathcal{S}$
%            \FOR{each $v \in$ top-$b_l(s)$}
%                \FOR{each $c \in L_1 \setminus v$}
%                    \STATE $v \leftarrow v \bigcup c$
%                \ENDFOR
 %               \STATE $L_{l+1} \leftarrow L_{l+1} \bigcup v$
%            \ENDFOR
            \STATE $k \leftarrow k+1$
        \ENDWHILE
        \RETURN $\mathcal{P(S)}$
    \end{algorithmic}
%  \end{small}
\end{algorithm}
\end{minipage}%
\vspace{-0.35cm}
\end{figure}

% COMMENTED and side-by-side above
\eat{
\begin{algorithm}[t]
  \begin{small}
    \caption{Local$\_$Refinement($G$, $x$, $\Lambda$)}\label{pc:refine}
    \begin{algorithmic}[1]
%        \STATE let $G$ be the dependency graph
%        \STATE let $\{(x_1, x_2), w(x_1,x_2)\} \in EMD$, where $e_{w(x_1,x_2)}$ be EMD value between interacted equivalence classes $x_1$ and $x_2$
%        \STATE let $\{x, \sum e_{w(x)}\} \in \sum e_{w}$, where $\sum e_{w(x)}$ be the sum of EMD value between $x$ and its neighbors \fei{I don't understand the point of summing EMD values per node, and I think much of these two lines 3, 4 can be simplified using our notation in text, it is currently very hard to read and understand.  This is supposed to be pseudocode.}
%        \STATE let $\{x, \lambda_x\} \in \Lambda(I)$
        \STATE $u_1 \leftarrow $ BFS($G$) %be the start equivalence class to use BFS manner search graph $G$
        \FOR{each vertex $u_2$ connected to $u_1$}
            \IF{$w(u_1,u_2) > \theta$} \label{line:theta}
                \STATE compute $min\{$\cost $\}$ repair \label{line:mincost}
                \IF{sense reassignment for $u_2$ to $\lambda'$} 
                    \IF{$w'(u_1,u_2) < w(u_1,u_2)$}  %and $\sum e'_{w(v)} \leq \sum e_{w(v)}$}
                        \STATE $\Lambda \leftarrow \Lambda(I) \setminus \{u_2, \lambda\}$
                        \STATE $\Lambda \leftarrow \Lambda(I) \bigcup \{u_2, \lambda'\}$ \label{line:reassign}
                    \ENDIF
                \ENDIF
            \ENDIF
        \ENDFOR
        \STATE continue until all all $u_1$ visited \label{line:stop}  %\fei{Can this not be defined with a forall nodes in G?}
        \RETURN $\Lambda$ \label{line:refineresult}
    \end{algorithmic}
  \end{small}
\end{algorithm}
}
\eat{ %original code
\begin{algorithm}[t]
  \begin{small}
    \caption{Local$\_$Refinement($G$, $x$, $EMD$, $\sum e_w$ $\Lambda(I)$)}\label{pc:refine}
    \begin{algorithmic}[1]
        \STATE let $G$ be the dependency graph
        \STATE let $\{(x_1, x_2), w(x_1,x_2)\} \in EMD$, where $e_{w(x_1,x_2)}$ be EMD value between interacted equivalence classes $x_1$ and $x_2$
        \STATE let $\{x, \sum e_{w(x)}\} \in \sum e_{w}$, where $\sum e_{w(x)}$ be the sum of EMD value between $x$ and its neighbors \fei{I don't understand the point of summing EMD values per node, and I think much of these two lines 3, 4 can be simplified using our notation in text, it is currently very hard to read and understand.  This is supposed to be pseudocode.}
        \STATE let $\{x, \lambda_x\} \in \Lambda(I)$
        \STATE let $u$ be the start equivalence class to use BFS manner search graph $G$
        \FOR{each vertex $v$ connects to $u$}
            \IF{$e_{w(u,v)} > \theta$} \label{line:theta}
                \STATE compute $min\{\mathcal{C}ost()\}$ for three options \label{line:mincost}
                \IF{achieving $min\{\mathcal{C}ost()\}$ requires sense reassignment $\lambda_v'$ for $v$} 
                    \IF{$w'(u,v) < w(u,v)$ and $\sum e'_{w(v)} \leq \sum e_{w(v)}$}
                        \STATE $\Lambda(I) \leftarrow \Lambda(I) \setminus \{v, \lambda_v\}$
                        \STATE $\Lambda(I) \leftarrow \Lambda(I) \bigcup \{v, \lambda_v'\}$ \label{line:reassign}
                    \ENDIF
                \ENDIF
            \ENDIF
        \ENDFOR
        \STATE continue until all the nodes are visited \label{line:stop}  \fei{Can this not be defined with a forall nodes in G?}
        \RETURN $\Lambda(I)$ \label{line:refineresult}
    \end{algorithmic}
  \end{small}
\end{algorithm}

\begin{figure}
\centering
\includegraphics[width=9cm]{./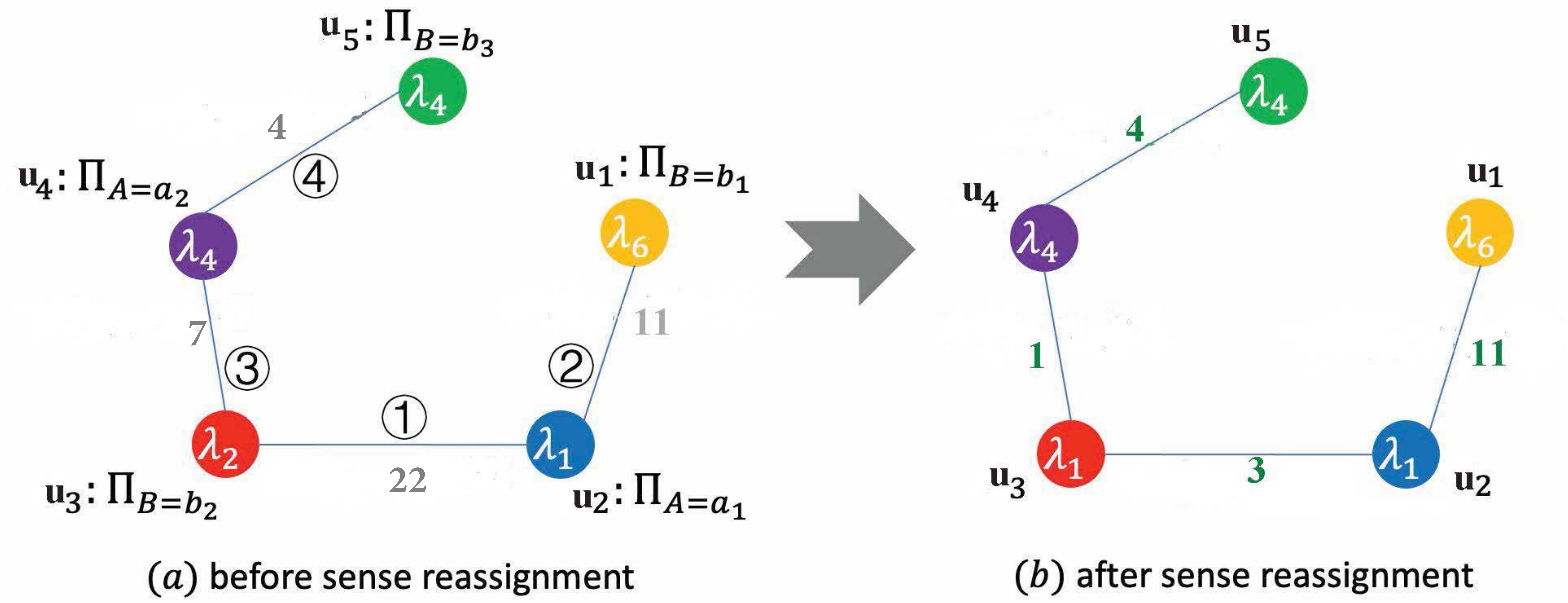}
\vspace{-5mm}
\caption{Dependency graph. \fei{Update the weights and nodes to $w(u_1, u_2)$ and $u_i$. That is, do not use $x$ notation.}}
\label{fig:dg}
\end{figure}
}

\eat{
\begin{algorithm}[t]
  \begin{small}
    \caption{Sense$\_$Assignment($I$, $\Sigma$, $S$, $\theta$)}\label{pc:algorithm}
    \begin{algorithmic}[1]
       \STATE  $\Lambda \leftarrow \{\emptyset, \emptyset\}$ 
 %       \STATE let $\{x, \lambda_x\} \in \Lambda(I)$, where $\lambda_x$ be the initial sense is selected for the equivalence class $x$ \fei{I don't think you need these two lines if they have already been defined in the text.  You just need to populate $\Lambda(I)$. What is $I$? }
        \FOR{each $\phi \in \Sigma$} \label{line:initialstart}
            \FOR{each $x \in \Pi_X(I)$}
                \FOR{each  $\lambda \in \Lambda$}
                    \STATE $\lambda_x \leftarrow$ Initial$\_$Assignment($x$, $S$, $\Lambda$)
                    \STATE $\Lambda \leftarrow \Lambda \bigcup \{x, \lambda_x\}$ \label{line:initialend}
                \ENDFOR
            \ENDFOR
        \ENDFOR
        \STATE  $G$ = CreateDepGraph($I, \Sigma, S, \Lambda$)  %($x$, $\Lambda(I)$, EMD), minimize words} Model the dependency graph $G$ \label{line:graph}
        % \STATE Let $EMD$ be the map that saves each pairwise equivalence classes and its EMD values
 %       \STATE Let $\{(x_1, x_2), e_{w(x_1,x_2)}\} \in EMD$ and $e_{w(x_1,x_2)}$ is maximal \label{line:maximal} \fei{Should we not sort the edges in DSC and just pop them off a stack? A while loop that pops as long as EMD > $\theta$?}
        % \STATE let $\{x, \sum e_{w(x)}\} \in \sum e_{w}$ be the total EMD between $x$ and its neighbors
        \STATE \textbf{sort}($G$, DSC)  Sort vertices in DSC order EMD edge weights
        
       \WHILE{($u_1$ = BFS($G$) and EMD($u_1$) > $\theta$)}
   %     \IF{$\sum e_{w(x_1)} > \sum e_{w(x_2)}$}  \fei{I don't understand this condition.}
    %        \STATE $x \leftarrow x_1$
    %    \ELSE
     %       \STATE $x \leftarrow x_2$ \label{line:startnode}
    %    \ENDIF
        \STATE $\Lambda \leftarrow$  Local$\_$Refinement($G$, $x$, $\Lambda$) \label{line:refine}
        \ENDWHILE
        \RETURN $\Lambda$ \label{line:finalresult}
    \end{algorithmic}
  \end{small}
\end{algorithm}
}
\eat{
\begin{algorithm}[t]
  \begin{small}
    \caption{Sense$\_$Assignment($I$, $\Sigma$, $S$, $\theta$)}\label{pc:algorithm}
    \begin{algorithmic}[1]
        \STATE let $\Lambda(I) \leftarrow \{\emptyset, \emptyset\}$ be the map that saves the sense selection for each equivalence class $x$. \fei{I think you need to reduce the text in the alg. block, simplify reading.}
        \STATE let $\{x, \lambda_x\} \in \Lambda(I)$, where $\lambda_x$ be the initial sense is selected for the equivalence class $x$ \fei{I don't think you need these two lines if they have already been defined in the text.  You just need to populate $\Lambda(I)$. What is $I$? }
        \FOR{each OFD: $X \rightarrow A \in \Sigma$} \label{line:initialstart}
            \FOR{each equivalence class $x \in \Pi_X(I)$}
                \FOR{each sense $\lambda \in \Lambda$}
                    \STATE $\lambda_x \leftarrow$ Greedy$\_$Assignment($x$, $S$, $\Lambda$)
                    \STATE $\Lambda(I) \leftarrow \Lambda(I) \bigcup \{x, \lambda_x\}$ \label{line:initialend}
                \ENDFOR
            \ENDFOR
        \ENDFOR
        \STATE \fei{$G$ = CreateDepGraph($x$, $\Lambda(I)$, EMD), minimize words} Model the dependency graph $G$ \label{line:graph}
        % \STATE Let $EMD$ be the map that saves each pairwise equivalence classes and its EMD values
        \STATE Let $\{(x_1, x_2), e_{w(x_1,x_2)}\} \in EMD$ and $e_{w(x_1,x_2)}$ is maximal \label{line:maximal} \fei{Should we not sort the edges in DSC and just pop them off a stack? A while loop that pops as long as EMD > $\theta$?}
        % \STATE let $\{x, \sum e_{w(x)}\} \in \sum e_{w}$ be the total EMD between $x$ and its neighbors
        \IF{$\sum e_{w(x_1)} > \sum e_{w(x_2)}$}  \fei{I don't understand this condition.}
            \STATE $x \leftarrow x_1$
        \ELSE
            \STATE $x \leftarrow x_2$ \label{line:startnode}
        \ENDIF
        \STATE $\Lambda(I) \leftarrow$  Local$\_$Refinement($G$, $x$, $EMD$, $\sum e_{w}$, $\Lambda(I)$) \label{line:refine}
        \RETURN $\Lambda(I)$ \label{line:finalresult}
    \end{algorithmic}
  \end{small}
\end{algorithm}
}

\section{Computing Repairs}\label{sec:ontrep}
After each equivalence class $x$ is assigned a sense, we have an interpretation from which to identify errors and to compute repairs.  We describe our repair algorithm that first evaluates ontology repair candidates, i.e., values that are in $I$ but not in $S$ under the chosen sense.  Second, assuming a set of ontology repairs, we discuss how the remaining data violations are modeled and repaired.
%After selecting the sense for each equivalence class, ontology repair module computes the ontology repairs. For each equivalence class, we consider the values that are not captured by the selected sense via synonyms as the ontology repair candidates.  In this section, we first present our approach to traverse the space of ontology repair candidates using a greedy-based beam search.  We then discuss how violations are modeled, and how we approximate the number of data repairs given $S'$ and sense. 

\subsection{Ontology Repairs}
\label{sec:beamsearch}
%\textbf{Traversing the Space of Candidates.}
We consider ontology repairs that add new values $v \in I$, but $v \not \in S$.  Let $Cand(S)$ represent all candidate ontology repair values.  We iterate through $Cand(S)$ by considering repairs of size $k$, $k = \{1, 2, ..., |Cand(S)|\}$.  For ease of presentation, let $\mathcal{V}_k$ represent the set of candidate ontology repairs of size $k$.  For each $\mathcal{V}_k$, we compute the number of data repairs $\delta_k$ needed to achieve consistency w.r.t. $\Sigma$, and select data repairs with a minimum number of updates for each $k$.
%, i.e.,  $\{\delta_k|$ min ($\delta_k)\}$, for each $k$.  
To generate a Pareto-optimal set of repairs, we select $(S', I')$ that are minimal for each $k$ value. 

%Firstly, We traversal all the equivalence classes to save ontology repair candidates into a set, denoted by $Cand(S)$. To obtain the Pareto-optimal set of solutions for $S'$, one approach is to exhaustively consider all possible $k$ combinations from $Cand(S)$, $k = \{1, 2, ..., |Cand(S)|\}$, and then select the ontology repair $S'$ leading to the minimum number of data repairs for each $k$ value. This approach, however, has exponential complexity taking $O(2^{|Cand(S)|})$. Hence, we propose an approximate solution (outlined in Algorithm~\ref{pc:repont}) based on the beam search algorithm~\cite{lowerre}.  

Generating all $k$-combinations of solutions has exponential complexity, taking $O(2^{|Cand(S)|})$.  We propose a greedy strategy to selectively consider the most promising solutions.  We use the \emph{beam search} strategy, a heuristic optimization of breadth-first search, that expands  the top-$b$ most promising nodes, for a beam size $b$.  The parameter $b$ can be tuned according to application requirements; in our case, we select $b$ by maximizing the probability of selecting the value from a random sequence.  The \emph{Secretary Problem} addresses this issue, where the objective is to select one secretary from $w$ candidates, and applicants are considered in some random order (all orders are equally likely)~\cite{ferguson1989}.  Applicants are chosen immediately after being interviewed, and decisions are non-reversible.   Prior work has shown that it is optimal to interview $b = \left \lfloor \frac{w}{e} \right \rfloor$ candidates, where $e$ is the exponential constant.  This strategy selects the best candidate with probability $\frac{1}{e}$, with a competitive ratio $e$.  In our case, we set $w = |Cand(S)|$.  
%\blue{In mathematics, the Secretary Problem is an example of Optimal Stopping Problem, which has been used to choose a time to take a particular action, in order to maximize an expected reward or minimize an expected cost.}   We compute $b = \left \lfloor \frac{|Cand(S)|}{e} \right \rfloor$, where $e$ is the exponential constant.

We organize the candidate repairs as a set-containment lattice,  where nodes at level $k$ represent ontology repairs of size $k$, e.g., at level 1, we consider single value repairs. Let $v_k$ represent a node at level $k$, i.e., the repairs in $v_k \in \mathcal{V}_k$.  Each node $v_k$ is created by augmenting a node at level $k-1$ with a single repair value.  We begin the search at a node $v_1$ at level $k=1$.  As we visit a node $v_k$, we compute the minimum number of data repairs for the current candidate ontology repair $v_k$, i.e., assuming $S' = \{S \cup v_k\}$.  We select the top-$b$ nodes at each level $k$ with the minimum number of data repairs, to further explore at the next level $k+1$.  We continue this traversal at each subsequent level until we have $I' \models \Sigma$ w.r.t. $S'$, or until we reach the leaf level.
Algorithm~\ref{pc:repont} presents the details.  \eat{Figure~\ref{fig:ontsearch} shows an example lattice with \tbf ontology repair candidates, and beam size $b = \left \lfloor \frac{\tbf}{e} \right \rfloor = \tbf$.}  
%We add repairs $(S', I')$ that are minimal to our Pareto-optimal set of repairs.  \eat{To determine dominance requires computing the minimum number of updates to transform $I$ to $I'$.}  We discuss how the minimum number of data repairs can be efficiently approximated next.
%In the next section, we discuss how the minimum number of data repairs can be efficiently approximated. 

%*********************************
\begin{table}[t]
\begin{minipage}[tb]{0.45\linewidth}
\scriptsize
		\caption{Sample clinical trials data.}\label{tab:subset}
			\vspace{-0.35cm}
   \begin{tabular}{ | l | l | l | l | l | l |}
    \hline
    \hspace{-2 mm} \textbf{id} \hspace{-2 mm} & \textbf{CC}    & \textbf{CTRY} & \hspace{-2 mm} \textbf{SYMP} & \hspace{-2 mm} \textbf{DIAG}  & \hspace{-2 mm} \textbf{MED}  \\
    \hline \hline
    \hspace{-2 mm} $t_8$ \hspace{-2 mm} &   US  & USA & headache & hypertension  & cartia \\
	\hspace{-2 mm} $t_9$ \hspace{-2 mm} &   US  & USA & headache & hypertension & ASA \\
	\hspace{-2 mm} $t_{10}$ \hspace{-2 mm} & US & America & headache & hypertension & tiazac \\
	\hspace{-2 mm} $t_{11}$ \hspace{-2 mm} & US & Uni. States & headache & hypertension & adizem \\
    \hline
    \end{tabular}
    
    % conflict graph figure
    \centering
     \rule{0.3\textwidth}{0pt}
    \includegraphics[width=4.5cm]{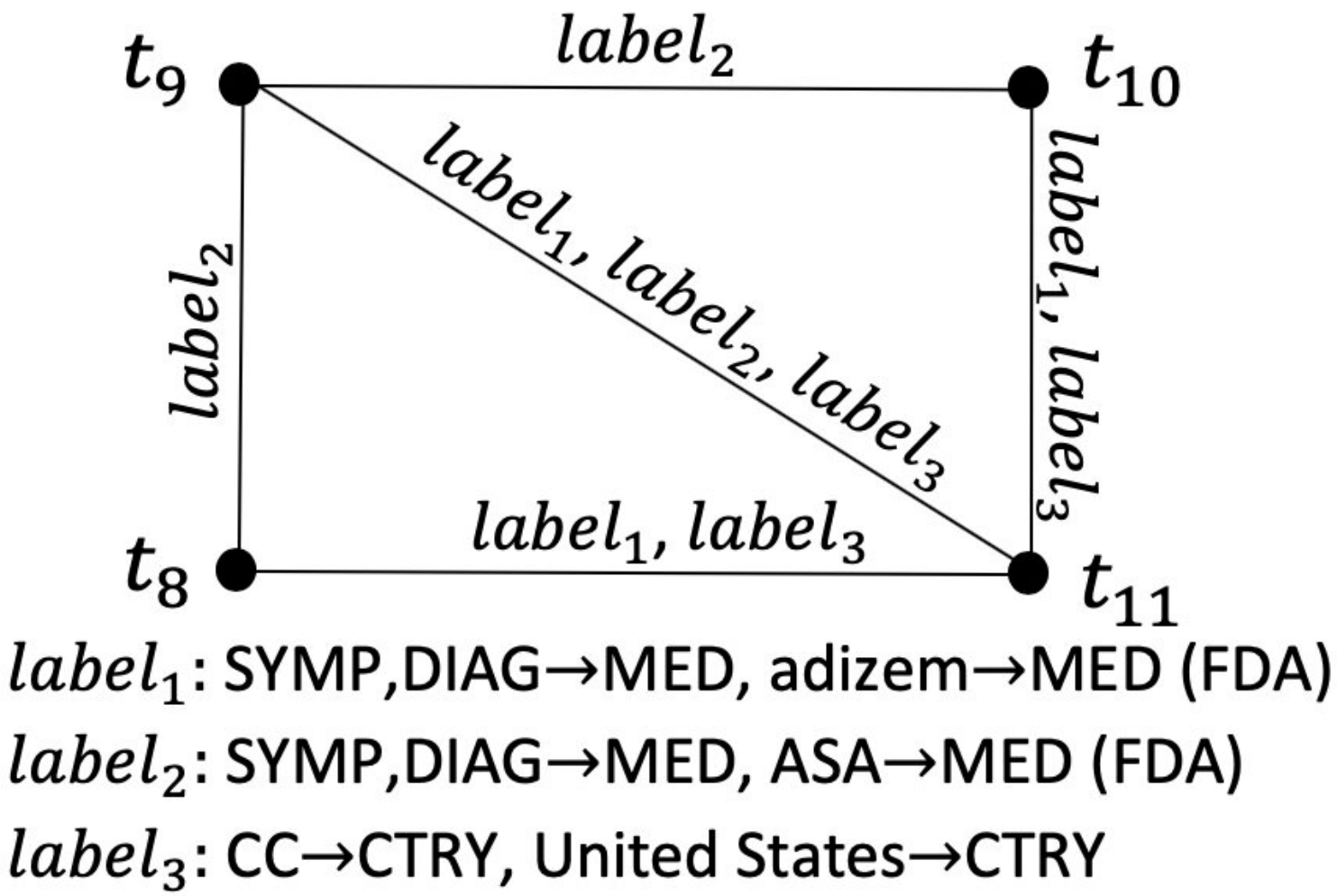}
    	\vspace{-0.25cm}
	\captionof{figure}{Conflict graph for Table~\ref{tab:subset}.}
	\label{fig:cg}
\end{minipage}
\hspace{0.3cm}
\begin{minipage}[tb]{0.45\linewidth}
\centering
\scriptsize
	\caption{Sample repairs.}\label{table:ontrep}
		\vspace{-0.2cm}
	    \begin{tabular}{|c|c|c|c|c|}
    		\hline
    		Ont. Repair & $|S,S'|$ &  Conflict Edges & $C_{2opt}$ & $\delta_P$ \\
    		value (sense) & & & & \\
    		\hline
    		$\emptyset$ & 0 & $(t_8, t_{9})$, $(t_{8}, t_{11})$, $(t_{9}, t_{10})$ & $t_9, t_{11}$ & 4 \\
    		 & & $(t_{9}, t_{11})$, $(t_{10}, t_{11})$ & & \\
    	    \hline
    		ASA (FDA) & 1 &  $(t_8, t_{11})$, $(t_9, t_{11})$, $(t_{10}, t_{11})$ & $t_{11}$ & 2 \\ 
    		\hline
    		adizem (FDA) & 1 & $(t_8, t_{9})$, $(t_{8}, t_{11})$, $(t_{9}, t_{10})$ & $t_9, t_{11}$ & 4 \\
    		& & $(t_{9}, t_{11})$, $(t_{10}, t_{11})$ & & \\ 
    		\hline
    		United States  & 1 & $(t_8, t_{9})$, $(t_{8}, t_{11})$, $(t_{9}, t_{10})$ & $t_9, t_{11}$ & 4 \\
    		 & & $(t_{9}, t_{11})$, $(t_{10}, t_{11})$ & & \\
    		\hline
    		adizem (FDA) & 2 & $(t_8, t_{9})$, $(t_9, t_{10})$, $(t_{9}, t_{11})$ & $t_{9}$ & 2 \\
    		United States & & & &\\ 
    		\hline
    		... & ... & ... & ... & ... \\ 
    		\hline
    % 		adizem: Med (FDA) & 3 & $\emptyset$ & $\emptyset$ & 0 \\
    % 		ASA: Med (FDA) & & & &\\
    % 		hyperpiesia: Diag & & & &\\ 
    % 		\hline
	    \end{tabular}
\end{minipage}
	\vspace{-0.6cm}
\end{table}

\eat{
\begin{minipage}[ht]{.75\linewidth}
  \centering
  \rule{0.3\textwidth}{0pt}
    \includegraphics[width=6.5cm]{figures/CG.pdf}
	\captionof{figure}{Conflict graph for Table~\ref{tab:subset}.}
		\label{fig:cg}
\end{minipage}
}
%**********************************
\subsection{Approximating Minimum Data Repairs}
\label{sec:minimallyrepair}
Given $v_k$ to derive $S'$, we compute the necessary data repairs $I'$ such that $dist(I,I') \leq \tau$.
%$\Delta(I, I')$ such that $|\Delta(I, I')| \leq \tau$.  
Computing a minimal number of data repairs to $I$ such that $I' \models \Gamma$, for a set of FDs $\Gamma$, is known to be NP-hard~\cite{MinCost1}.  Since OFDs subsume FDs, i,e., $\Gamma \subseteq \Sigma$, this intractability result, unfortunately, carries over to OFDs.  Beskales et al., show that the minimum number of data repairs can be approximated by upper bounding the minimum number of necessary cell changes~\cite{BIGG13}.  In our implementation, we adapt their {\sf RepairData} algorithm that is shown to compute an instance $I'$, by cleaning tuples one at a time, such that the number of changed cells is at most $P = 2 \cdot$min$\{|Z|, |\Sigma|\}$, where $|Z|$ is the number of (unique) consequent attributes in $\Sigma$, and $|\Sigma|$ denotes the number of OFDs.  We seek $P$-approximate,  $\tau$-constrained repairs, where a $P$-approximate $\tau$-constrained repair $(S', I')$ is a repair in $U$ such that $dist(I,I') \leq \tau $, and there is no other repair $(S'', I'') \in U$ such that $(dist(S, S''), P \cdot dist(I, I'')) \prec (dist(S, S'), \tau)$.  

\eat{ set of tuples that satisfy $\Sigma$ w.r.t. $S'$ such that the number of satisfying tuples is approximately maximal.}  We transform the data repair problem to the problem of finding a minimum vertex cover, where nodes represent a tuple $t_i$, and an edge $(t_i, t_j)$ represent that $t_i, t_j$ conflict w.r.t. an OFD.  We generate a conflict graph for $I$, w.r.t. $\Sigma, S'$, and compute a 2-approximate minimum vertex cover of the conflict graph~\cite{Garey79}.  We annotate the edges in the conflict graph with: (i) the violated OFD $\phi$; and (ii) the candidate repair and sense. 

\begin{example}\label{example:minirep}
Table~\ref{tab:subset} shows a subset of Table~\ref{tab:cleanexample} where $t_{11}$[CTRY] is updated to \lit{United States}. Consider $\phi_1:$ [CC] $\rightarrow$ [CTRY] with an ontological equivalence between \lit{USA, America}, and $\phi_2:$ [SYMP, DIAG] $\rightarrow$ [MED] w.r.t. the drug (MED) ontology shown in Figure~\ref{fig:ontology}.  If the \lit{FDA} sense is selected for $\phi_2$ (according to Algorithm~\ref{pc:algorithm}), then Figure~\ref{fig:cg} shows the corresponding conflict graph. \eat{\fei{Again, pls. consolidate all examples to continue from your Table 1, Figure 1 example.  Why is the user switching from looking at Figure 4 search space of ontology candidates that are generic ABC, to now specific ones from Table 1.  Provide uniformity and continuity.}}
\end{example}

For each error tuple $t_e$, we modify attribute $A$ by considering candidate repairs from the domain of $A$ when there is sufficient evidence to do so, or to a value $v' \in S'$ such that the values $\{v', a\}$ are synonyms for all $a \in \Pi_{X}$.  As each $t_e$ is repaired, we remove its corresponding nodes and edges, and re-generate the conflict graph in linear time.  The algorithm continues until all tuples $t_e$ have been removed from the conflict graph, and $I'$ is returned.  The algorithm runs in $O(|E| + |I|)$, where $|E|$ is the number of edges in the conflict graph~\cite{BIGG13}. 

%and a sense, we compute a minimal, $\tau$-constrained data repair.  To do this, we need to select an ontology repair, and then consider the necessary data repairs, where the number of data changes is less than $\tau$.  Exhaustively considering all possible data repairs for a given $S'$ would require searching an exponential number of value combinations, making it infeasible in practice.  Hence, we draw upon results from Beskales et. al, that approximate the minimal number of data repairs to within a factor of $P$~\cite{BIGG13}.  
%To find the $P$-approximate upper bound on the minimal number of data repairs $\delta_{opt}(S', I')$, we use a conflict graph to model the tuple violations w.r.t. $\Sigma$~\cite{Arenas01}.  A conflict graph for an instance $I$, an ontology $S$, and a set of OFDs $\Sigma$ is an undirected graph where the vertices represent the set of tuples in $I$, an edge $(t_i, t_j)$ indicates that $t_i$ and $t_j$ violate at least one OFD in $\Sigma$.  Compare to the conflict graph that models violations w.r.t. a set of FD~\cite{BIGG13}, we add two types of labels on each edge: (i) the violated OFDs; and (ii) the repair value(s) that can be added to the ontology to resolve the violation under the selected sense.  As repairs are selected, and edges and labels are removed, we can re-generate the conflict graph in linear time. \fei{I think the text about modeling a conflict graph is hard to follow without more details.  What is the extension you did in the conflict graph to make it work for OFDs?}

\eat{
To approximate the minimal number of data repairs, since we only repair the RHS attributes values of OFDs, we draw upon results from Beskales et. al.~\cite{BIGG13} to show that the maximum number of data changes needed to satisfy $\Sigma$ is at most $\alpha \times |C_{2opt}(S', I)|$, where $\alpha = min\{|Z|, |\Sigma|\}$, |$Z$| is the number of (unique) RHS attributes in $\Sigma$, and $C_{2opt}(S', I)$ is a 2-approximate minimum vertex cover of the conflict graph, which can be computed in polynomial time~\cite{gj79}.  Hence, we define $\delta_P(S', I)$ as $\alpha \times |C_{2opt}(S', I)|$, which represents a $2\alpha$-approximate upper bound of $\delta_P(S', I)$.
}

\eat{
In the first step, for each ontology repair candidate $c_i$ in $Cand(S)$, we compute the minimum number of data repairs that it requires, denoted $N(c_i)$. We select the candidate with the lowest $N(c_i)$ as the ontology repair if only one ontology repair is allowed. Then, we order all the ontology repair candidates with increasing $N(c_i)$, and save the top-k candidates, denoted $TopK_1(S)$. In the following steps, we iteratively expand each element in $TopK_j(S)$ ($1 \leq j \leq |Cand(S)|$, each element in $TopK_j(S)$ is a combination with j ontology repair candidates) with an additional ontology repair candidate in $Cand(S)$, to generate (j+1)-combinations ontology repairs. We order all the (j+1)-combinations with the increasing order of the required minimum number of data repair. Finally, We consider the topmost (j+1)-combination as the solution if (j+1) ontology repairs are allowed, and save top-$k$ (j+1)-combinations to the set $TopK_{j+1}(S)$. The complexity of this greedy approach is $O(k \times |Cand(S)|^2)$.  
}

\eat{\begin{figure}[t]
\centering
       \includegraphics[width=8.5cm]{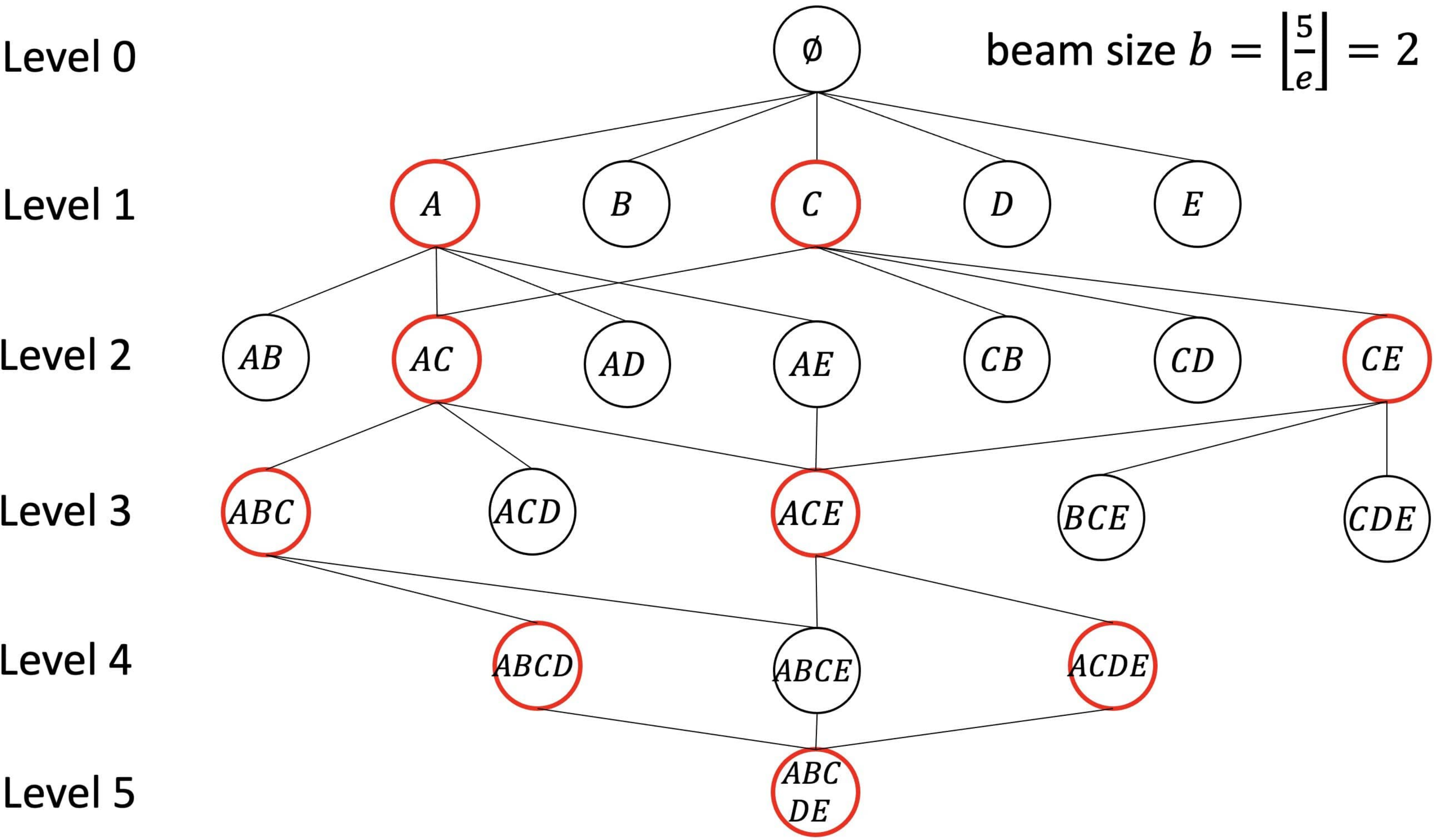}
     \vspace{-0.2cm}
    \caption{Ontology repair search space. \fei{All the examples should be updated to show how to resolve the errors in Table 1, Figure 1.  Need to have motivating, convincing cases here.} \label{fig:ontsearch}}
%    	\vspace{-0.3cm}
\end{figure}}

\begin{example}
Table~\ref{table:ontrep} shows an execution of \ofdclean listing: (i) candidate ontology repairs (value, sense); (ii) number of ontology repairs $|\Delta(S, S')|$; (iii) conflicts in the conflict graph (Figure~\ref{fig:cg}); (iv) $C_{2opt}$: the 2-approximation vertex cover of the conflict graph, indicating the tuples to be repaired; and (v) $\delta_{P}(S',I)$: the upper bound on the minimum number of data repairs to $I$.  Consider an ontology repair to add $\lit{ASA}$ to $S$, which would eliminate edges $(t_8,t_9), (t_9,t_{10})$, and leave only tuple $t_{11}$ to be updated with at most two minimum data repairs.  Comparing all single candidate ontology repairs, this repair is the minimum, involving one and two, ontology and data repairs, respectively. 
%candidate ontology several possible modifications of $S$, along with $dist(S, S')$, the corresponding conflict graph, $C_{2opt}(S', I)$, and $\delta_P(S', I)$. For $\tau = 2$ (i.e., $\delta_P(S', I) \leq 2$), the modification of $S$ is adding \lit{ASA} to the medicine ontology in the sense of FDA since it requires the minimal number of ontology repair.  
\end{example}

\eat{For all tuples $t' \in I' \setminus C_{2opt}$, Algorithm~\ref{pc:findasg} searches for an assignment to attributes of a tuple $t$ that are not in Fixed$\_$Attrs such that for every pair $(t, t')$ and each OFD $\phi \in \Sigma$, if $t$ and $t'$ are in the same equivalence class $x$ based on $\phi$, they need to satisfy $\phi$ w.r.t $S'$ under the sense of $x$.  An initial assignment $t_c$ is created by setting attributes that are in Fixed$\_$Attrs to be equal to $t$, and setting attributes that are not in Fixed$\_$Attrs to new variables. The algorithm repeatedly selects a tuple $t' \in I' \setminus C_{2opt}$ such that $(t, t')$ violates an OFD $X \rightarrow A \in \Sigma$ w.r.t. $S'$ under the specific $sense$ for the equivalence class that $t$ and $t'$ are belong to. If attribute A belongs to Fixed$\_$Attrs, the algorithm returns an empty set. Otherwise, the algorithm sets $t_c[A]$ equal to $t'[A]$, and adds $A$ to Fixed$\_$Attrs. When no other violations could be found, the algorithm returns the assignment $t_c$. }

%**********************************************

\section{Experiments}
\label{sec:eval}
Our evaluation focuses on four objectives: (1) We test \fastofd scalability and performance compared to seven FD discovery algorithms as we scale the number of tuples and the number of attributes. (2) We show the benefits of our optimization techniques to prune redundant OFD candidates. (3) We conduct a qualitative evaluation of the utility of the discovered OFDs. (4) We study the accuracy and performance of \ofdclean.  We show the effectiveness of our sense selection algorithm, and compare \ofdclean against the HoloClean\cite{holoclean} repair algorithm and demonstrate the benefits of ontology repair.

\begin{comment}
\begin{enumerate}
\item We test \fastofd scalability and performance compared to seven FD discovery algorithms as we scale the number of tuples and the number of attributes. 
%We evaluate our algorithm performance against seven existing techniques as we scale the number of tuples and the number of attributes. 
\item We show the benefits of our optimization techniques to prune redundant OFD candidates.
\item We conduct a qualitative evaluation of the utility of the discovered OFDs.
%\item We study the accuracy and performance of our sense selection algorithm.
\item We study the accuracy and performance of \ofdclean.  We show the effectiveness of our sense selection algorithm, and compare \ofdclean against the HoloClean\cite{holoclean} repair algorithm and demonstrate the benefits of ontology repair.
%    \item We compare the effectiveness of \ofdclean, which includes data and ontology repair, to existing baselines.
\end{enumerate}
\end{comment}

% (1) We evaluate the accuracy of the \umodel and the presence of our update patterns in real data; (2) We study the quality to performance trade-off of our optimizations; (3) We evaluate the comparative accuracy of the \rmodel to recommend the correct repairs; and (4) We evaluate CurrentClean's scalability and comparative performance. 

\subsection{Experimental Setup}
\label{sec:expsetup}
Experiments were performed using four Intel Xeon processors at 3.0GHz each with 32GB of memory. Algorithms were implemented in Java.  The reported runtimes are averaged over six executions.  

\noindent \textbf{Datasets.} We use two real datasets, and the U.S. National Library of Medicine Research and WordNet ontologies~\cite{medOntology}.
% Table~\ref{tb:dataset} lists the data characteristics, showing a range of data sizes w.r.t. the number of tuples ($N$), attributes ($n$), OFDs ($|\Sigma|$), and senses ($|\lambda|$).

\noindent \emph{\uline{Kiva}} \cite{Kiva} describes loans issued over two years via the Kiva.org online crowdfunding platform to financially excluded citizens around the world. There are 670K records and 15 attributes, \eat{$|\Sigma| = 8$ and the number of senses $|\lambda| = 2$,}  including loan principal amount, loan activity, country code, country, region, funded time and usage.

\noindent  \emph{\uline{Clinical}} \cite{HKLMW09}: The Linked Clinical Trials (LinkedCT.org) describes clinical patients such as the clinical study, country, medical diagnosis, drugs, illnesses, symptoms, treatment, and outcomes. 
We use a portion of the dataset with 250K records and 15 attributes. \eat{, $|\Sigma| = \tbf, |\lambda| = \tbf$.  }

\reviseTwo{
We use real ontologies to ensure coverage of the (RHS) attribute domain, e.g., the Medical Research ontology covers values in the DRUG attribute of the clinical trials data, while considering some medications have different names across different countries (senses).  We  maximize coverage upwards of 90\%+ coverage for some attributes.  Values not covered by an ontology are candidates for ontology repair.}

\eat{
% Data Characteristics
\begin{table}
\begin{center}\begin{threeparttable}
\centering
% \vspace{-4mm}
\caption{Data characteristics.}
\label{tb:dataset}
\begin{tabular}{l|ccc}
\toprule
{\bf }  & {\bf Kiva}		    & {\bf Clinical}     	\\
\midrule
$N$		&	671,205  	 \tbf	\\
$n$		&	15	     	15	
$|\Sigma|$	&  3 & \tbf	   8\\
$|\lambda|$	&	2   &  \tbf    
\bottomrule
\end{tabular}
\end{threeparttable}\end{center}

\vspace{-5mm}
\end{table}
}

%*********************************
\begin{table}[t]
\begin{minipage}[tb]{0.45\linewidth}
\scriptsize
\caption{Parameter values (defaults in bold) \label{tbl:defaults}}
\vspace{-0.3cm}
\begin{tabular}{ | l | l | l |}
  \hline
  \textbf{Symbol} & \textbf{Description} & \textbf{Values} \\
  \hline
  $|\lambda|$  & \# senses  & 2, \textbf{4}, 6, 8, 10 \\
  \hline
  $err\%$  & error rate & \textbf{3}, 6, 9, 12, 15 \\
  \hline
   $N$ & \# tuples (Million) & 0.2, \textbf{0.4}, 0.6 0.8, 1 (clinical) \\ 
  % & & \tbf (Kiva) \\
  \hline
  $b$  & beam size & 1, 2, \textbf{3}, 4, 5 \\   
  \hline
  $inc\%$  & incompleteness rate & 2, \textbf{4}, 6, 8, 10 \\
  \hline
   $|\Sigma|$ & \# OFDs &   \textbf{10}, 20, 30, 40, 50 \\
  \hline
\end{tabular}
\end{minipage}
\hspace{0.3cm}
\begin{minipage}[tb]{0.45\linewidth}
\centering
\scriptsize
\begin{threeparttable}
% \fontsize{5}{5}
\scriptsize
\caption{Varying $N$ runtimes.}\label{tab:performance}
\begin{tabular}{l|ccccc}
\toprule
\textbf{N (M)}	& \textbf{0.2} 	&  \textbf{0.4}	&  \textbf{0.6} & \textbf{0.8} & \textbf{1} \cr
\midrule
(sec) & 9.3 & 11.8	&  17.1	&  23.8 &	27.2  \cr
\bottomrule
\end{tabular}
\end{threeparttable}
\begin{threeparttable}
% \fontsize{5}{5}
\scriptsize
\caption{\ofdclean \ runtimes (m)}\label{tab:runtime}
\begin{tabular}{l|ccccc}
\toprule
\textbf{N (K)}	& \textbf{50} 	&  \textbf{100}	&  \textbf{150} & \textbf{200} & \textbf{250} \cr
\midrule
 & 166 & 175	&  182	&  198 &	217  \cr
\bottomrule
\end{tabular}
\end{threeparttable}
\end{minipage}
\vspace{-0.55cm}
\end{table}

\noindent \textbf{Parameters and Ground Truth.} Table~\ref{tbl:defaults} shows the parameter values and their defaults. We inject errors randomly into the consequent attributes.  Errors are inserted by either changing an existing value to a new value (not in the attribute domain), or to an existing domain value.  We consider the original data values as the ground truth, as we vary the error rate $err\%$.  We specify $\tau$ as a percentage (100\% allows the algorithm to freely change the data).  We set $\tau = 65$\% to balance similarity between $I$ and $I'$, and flexibility to consider new values via ontology repairs. 
%The set of defined OFDs $\Sigma$ defined for each dataset can be found at \cite{}\tbf.  
%\fei{Longtao: please create a GitHub project page with all the source code, and listing the OFDs used for each dataset off this project webpage.}

%\noindent \textbf{Parameters.} 

\noindent \textbf{Comparative Techniques.}  

\noindent \emph{\uline{FD Mining}}: For the \fastofd comparative experiments, we use the Metanome implementations of existing FD discovery algorithms \cite{PBF+15}.

\noindent  \emph{\uline{HoloClean}} \cite{holoclean} provides holistic data repair by combining multiple input signals (integrity constraints, external dictionaries, and statistical profiling), and uses probabilistic inference to infer dirty values.
We input to HoloClean: (i) a set of denial constraints translated from the given dependencies; (ii) external reference sources such the National Drug Code Directory \cite{NDC}; and (iii) we profile the data to obtain statistical frequency distributions of each attribute's domain.   We compare \ofdclean against HoloClean  since it also considers external information during data repair.
%since as far as we know, there are no existing techniques that perform ontology repair.

%\noindent  \emph{\uline{\fastofd}}~\cite{baskaran2017efficient}: We take a dual approach to show the utility of OFDs as a data quality rule. First, we consider the case when OFDs may not be readily available nor defined, and require the use of an OFD mining algorithm. We run an existing OFD discovery algorithm, \fastofd, that mines synonym and inheritance OFDs, and evaluate the effectiveness of OFDs as data quality rules (Section~\ref{sec:ofdapprox}).   \fei{Adjust in light of mining alg appearing again.}

%*********************************************
\subsection{\fastofd \reviseTwo{Efficiency}} 
\label{sec:expefficiency}
\eat{Text should have been copied from the CIKM CR version.  It has been condensed in some experiments for space savings.  If there is an important difference, please do highlight.}
\noindent \textbf{Exp-1: Vary $|N|$.}
We vary the number of tuples, and compare \ofdclean \reviseTwo{(with optimizations)} against seven existing FD discovery algorithms: TANE \cite{HKPT98}, FUN \cite{NC01}, FDMine \cite{YH08}, DFD \cite{ASN14}, DepMiner \cite{LPL00}, FastFDs \cite{WGR01}, and FDep \cite{FS99}.  Figure \ref{fig:clinical} shows the running times using $\theta = 5$.  We report partial results for FDMine and FDep as both techniques exceeded the main memory limits.  FDMine returns a much larger number of non-minimal dependencies, about 24x leading to increased memory requirements.   We ran DepMiner and FastFDs using 100K records and report running times of 4hrs and 2.3 hrs, respectively. However, for larger data sizes (200K+ records), we terminated runs for these two techniques after 12 hours. \eat{hence, they do not appear in Figure \ref{fig:clinical}.}  The running times for \fastofd scale linearly with the number of tuples, similar to other lattice traversal based approaches (TANE, FUN, and DFD). The runtime is dominated by data verification of OFDs. 
%For our chosen datasets, we found there was a large number of smaller equivalence classes, which lead to the decreased verification time that dominated the overall running time. 
We found that discovering synonym OFDs incurs an average overhead of 1.8x over existing lattice-traversal FD discovery algorithms due to the inherent complexity of OFDs (which subsume FDs), and the increased number of discovered OFDs, e.g., medications are referenced by multiple names; its generic name and its brand name.

% OFDClean Figures
%*********************************************
\begin{figure}[tb!]
	%\addtolength{\subfigcapskip}{-1ex}
	\captionsetup[subfloat]{justification=centering}
	%\begin{center}
	\centering
	%%%%%%%%%%%%%%%%%%%%%%%%%%%%%%%%%%
		\subfloat[\small{  Scalability in N.
		}]{\label{fig:clinical}
			{\includegraphics[width=4.2cm,height=3.8cm]{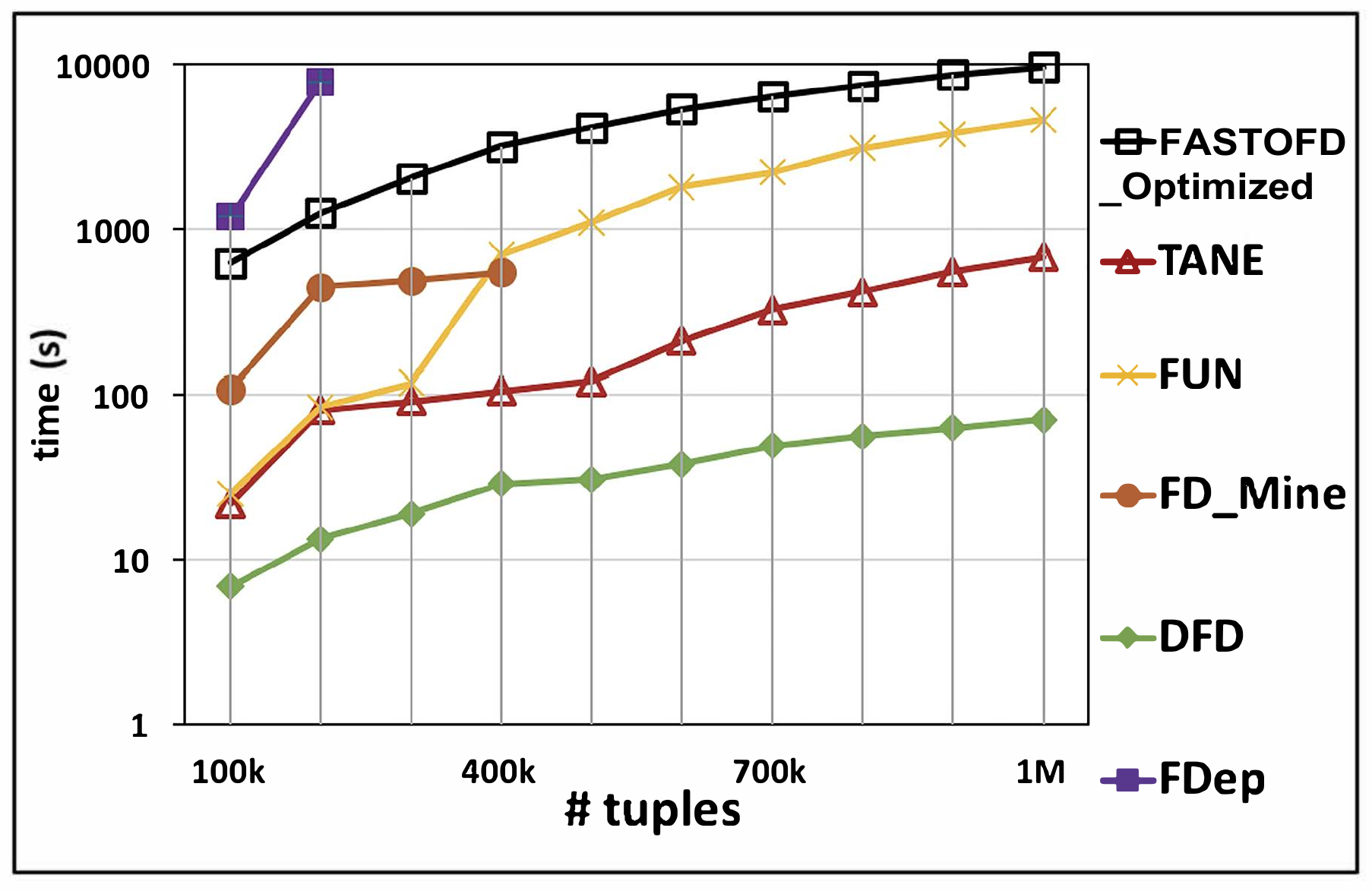}}}
			%%%%%%%%%%%%%%%%%%%%%%%%%%%%%%%%%
		\hfill\subfloat[
		\small{Scalability in $n$. } 
		]{\label{fig:numattrs}
	{\includegraphics[width=4.2cm,height=3.8cm]{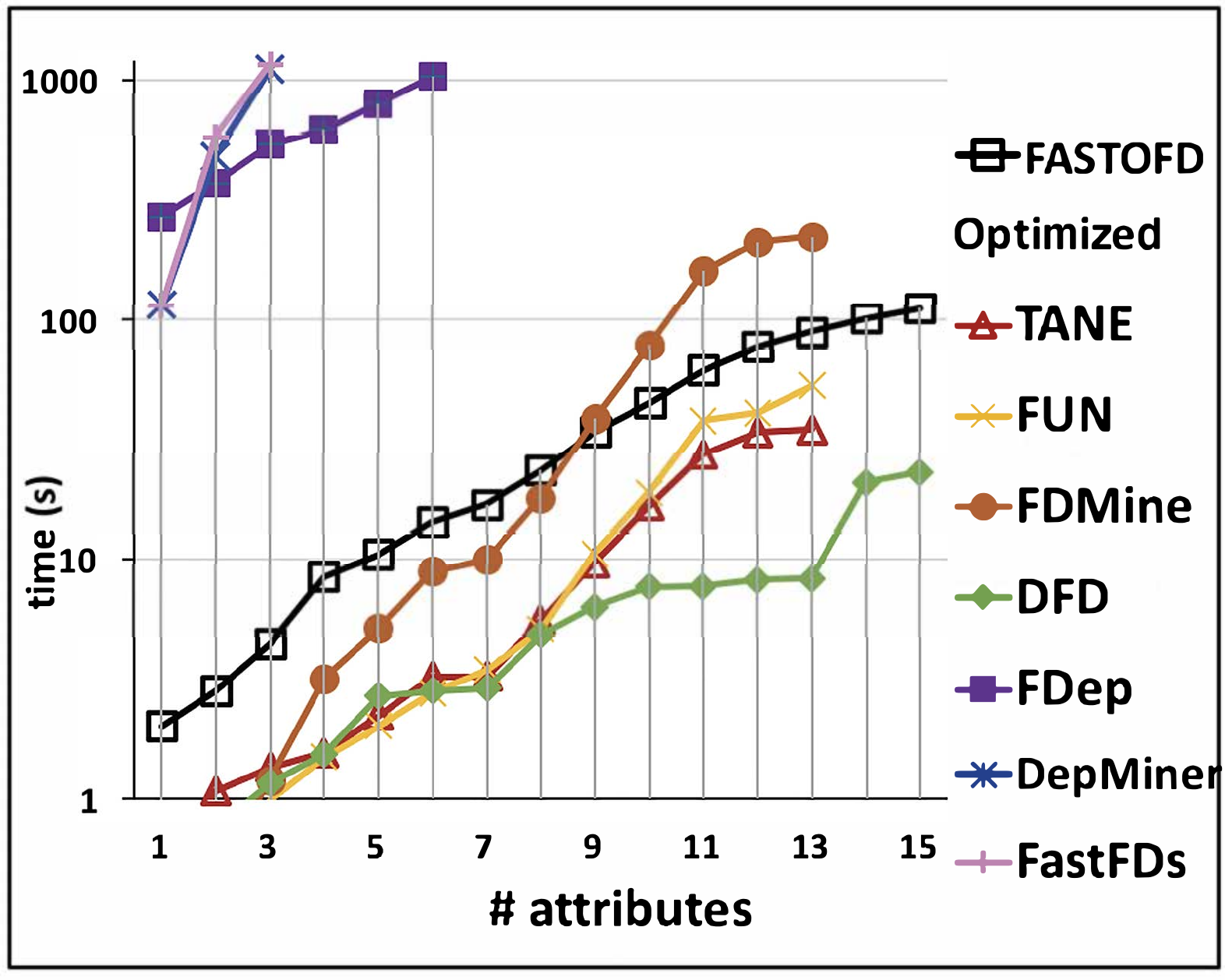}}}
			%%%%%%%%%%%%%%%%%%%%%%%%%%%%%%%%%
		\hfill\subfloat[
		\reviseTwo{\small{Optimizations. } }
		]{\label{fig:optimizeall}
	{\includegraphics[width=4.2cm,height=3.8cm]{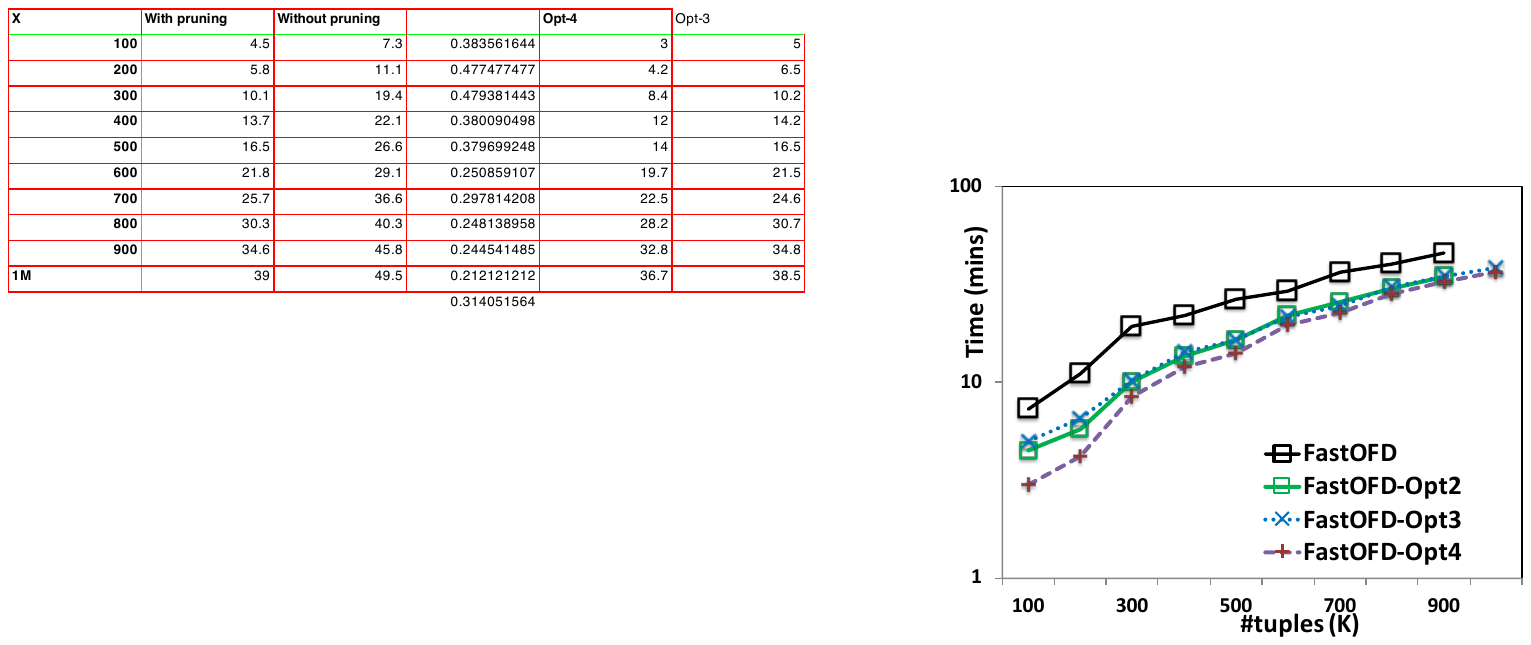}}}
			%%%%%%%%%%%%%%%%%%%%%%%%%%%%%%%%%			
%		\hfill\subfloat[
%		\small{Impact of Opt-3 }]{\label{fig:opt3}
%	{\includegraphics[width=3.2cm,height=2.8cm]{}}}
			%%%%%%%%%%%%%%%%%%%%%%%%%%%%%%%%%
%		\hfill \subfloat[\small{Impact of Opt-4.}]
%		{\label{fig:opt4}
%		{\includegraphics[width=3.2cm,height=2.8cm]{}}}
			%%%%%%%%%%%%%%%%%%%%%%%%%%%%%%%%%
%		\hfill\subfloat[\small{Runtime per level.}]{\label{fig:level}
%	{\includegraphics[width=3.6cm,height=2.8cm]{}}}
		%%%%%%%%%%%%%%%%%%%%%%%%%%%%%%%%%%			
%	\hfill	\subfloat[\small{Values per level.}]{\label{fig:ofdclean1}
%			{\includegraphics[width=3.2cm,height=2.8cm]{}}}
%c	\hfill\subfloat[\small{\#OFDs vs $\kappa$. }]{\label{fig:ofdclean2}
%			{\includegraphics[width=3.2cm,height=2.8cm]{}}}

			%%%%%%%%%%%%%%%%%%%%%%%%%%%%%%%%%
	%\end{center}
%	\vspace{-1ex}
	\caption{\fastofd effectiveness and efficiency (Clinical dataset).}
	%\vspace{-4ex}
\end{figure}

\noindent \textbf{Exp-2: Vary $|n|$.}
Figure \ref{fig:numattrs} shows that as we vary the number of attributes ($n$), using $N$ = 100k tuples, and $\theta = 5$, all algorithms scale exponentially with the number of attributes since the space of candidates grows with the number of attribute set combinations.  \fastofd scales comparatively with other lattice based approaches.  We discover 3.1x more dependencies on average, compared to existing approaches, validating the overhead we incur.  In Figure \ref{fig:numattrs}, we report partial results for DepMiner, FastFDs, and FDep (before memory limits were exceeded), where we achieve almost two orders of magnitude improvement due to our optimizations.  \eat{Similar to existing lattice-based approaches, } Our techniques performs well on a smaller number of attributes due to effective pruning strategies that reduce the search space.  

%*********************************************
%\subsection{\fastofd Optimization Benefits}
%\label{sec:opt}
\noindent \reviseTwo{\textbf{Exp-3: Optimizations Benefits.}}  
\reviseTwo{We evaluate optimizations 2, 3, and 4 (Opt-2, Opt-3, Opt-4) from Section~\ref{sec:3_opt}.   We use 1M tuples from the clinical trials dataset.  Figure \ref{fig:optimizeall} shows \fastofd runtime with each of the optimizations, shown in log-scale.}  Opt-2 achieves 31\% runtime improvement due to  aggressive pruning of redundant candidates at lower lattice levels.  For Opt-3, we see a smaller performance improvement (compared to Opt-2) since we reduce the verification time due to the existence of keys, rather than pruning candidates.  We found two key attributes in the clinical data: (1) \texttt{NCTID}, representing the clinical trials.gov unique identifier for a clinical study; and (2) \texttt{OrgStudyID}, an identifier for a group of related studies.  Opt-3 achieves an average 14\% runtime improvement. With more keys, we expect to see a greater improvement in running time.  For Opt-4, we use a set of five defined FDs~\cite{baskaran2017efficient}.  At 100K tuples, we reduce running times by 59\%.  The running times decrease by an average of 27\%.  We expect that the performance gains are proportional to the number of satisfying FDs in the data; an increased number of satisfying FDs lead to less time spent verifying candidate OFDs.  Our optimizations together achieve an average of 24\% improvement in runtime, and our canonical representations are effective in avoiding redundancy.

%We focus our evaluation on \emph{Optimizations 2, 3} and \emph{4 (Opt-2, Opt-3, Opt-4)} from Section~\ref{sec:3_opt}.  We use 500K tuples from the clinical trials dataset.  Our optimizations achieve significant performance improvements.  For instance, at 100K tuples for Opt-4, we are able to reduce running times by 59\%.

%\noindent \textbf{Exp-3: Pruning Non-Minimal OFDs (Opt-2).}  Figure \ref{fig:opt2} shows \fastofd runtimes with and without Opt-2, labeled as `\fastofd Optimized' and `\fastofd', respectively.  Opt-2 achieves 31\% runtime improvement due to the aggressive pruning of redundant candidates at lower lattice levels. 

%\noindent \textbf{Exp-4: Exploiting Keys (Opt-3).} Figure \ref{fig:opt3} shows the \fastofd runtimes to evaluate Opt-3.  

%\noindent \textbf{Exp-5: Exploiting FDs (Opt-4).} Figure \ref{fig:opt4} shows runtimes with Opt-4, and a set of five defined FDs~\cite{baskaran2017efficient}. The running times decrease by an average of 27\%.  We expect that the performance gains are proportional to the number of satisfying FDs in the data; an increased number of satisfying FDs leads to less time spent verifying candidate OFDs.  Our optimizations achieve an average of 24\% improvement in runtime, and our canonical representations are effective in avoiding redundancy. 

\eat{Since the original dataset did not have any FDs that were satisfied over the entire relation,  we modified the data to include five FDs: 
\begin{itemize}
\item	F1: overall\_status $\rightarrow$ number\_of\_groups, 
\item	F2: study\_design $\rightarrow$ study\_type,
\item    F3: [condition, time\_frame] $\rightarrow$ measure,
\item   F4: [safety\_issue, study\_type] $\rightarrow$ eligibility\_minimum\_age, 		
\item	F5: [condition, country] $\rightarrow$ drug\_name.

\end{itemize}
}
%\begin{enumerate}[noitemsep]
%\item[F1:] overall\_status $\rightarrow$ number\_of\_groups
%\item[F2:] study\_design $\rightarrow$ study\_type
%\item[F3:] [condition, time\_frame] $\rightarrow$ measure
%\item[F4:] [safety\_issue, study\_type] $\rightarrow$ eligibility\_minimum\_age
%\item[F5:] [condition, country] $\rightarrow$ drug\_name
%\end{enumerate}
%

\noindent \reviseTwo{\textbf{Exp-4: Efficiency over lattice levels.} }
We argue that compact OFDs (those involving a small number of attributes) are the most interesting.  OFDs with more attributes contain more unique equivalence classes.  Thus, a less compact dependency may hold, but may not be very meaningful due to overfitting.  We evaluate the efficiency of our techniques to discover these compact dependencies.  To do this, we measure the number of OFDs and the time spent at each level of the lattice using the clinical trials data.  OFDs discovered at the upper levels of the lattice (involving fewer attributes) are more desirable than those discovered at the lower levels.  Approximately 61\% are found in the first 6 levels (out of 15 levels) taking about 25\% of the total time.  The remaining dependencies (found in the lower levels) are not as compact, and the time to discover these OFDs would take well over 70\% of the total time.   Since most of the interesting OFDs are found at the top levels, we can prune the lower levels (beyond a threshold) to improve overall running times.  
%Figure \ref{fig:level} shows the runtime at each level of the lattice. For synonym OFDs,   For inheritance OFDs, the results are a bit better, where 63\% of the dependencies are found in the first 6 levels taking 16\% of the total time.  

%*********************************************
\subsection{\fastofd Effectiveness}
\noindent \textbf{Exp-5: Eliminating False-Positive Data Quality Errors.}
\eat{To quantify the benefits of OFDs versus traditional FDs, we compute the number of syntactically "non-equal" values referring to the same entity.  For example, under traditional FDs, a dependency \texttt{CTRY} $\rightarrow$ \texttt{CC}, with CTRY value \lit{Canada} mapping to CC values \lit{CA}, \lit{CAN}, and \lit{CAD}, would all be considered as errors.  However, under a synonym OFD, these tuples are considered clean, since \lit{CA}, \lit{CAN}, and \lit{CAD} are acceptable representations of \lit{Canada}.    } For each discovered OFD, we sample and compute the percentage of tuples containing syntactically non-equal values in the consequent to quantify the benefits of OFDs versus traditional FDs.  These values represent synonyms.  Under FD based data cleaning, these tuples would be considered errors.  However, by using OFDs, these false positive errors are saved since they are not true errors.  For brevity, we report results here, and refer the reader to \cite{baskaran2017efficient} for graphs.  Overall, we observe that a significantly large percentage of tuples are falsely considered as errors under a traditional FD based data cleaning model.  For example, at level 1, 75\% of the values in synonym OFDs contain non-equal values.  \eat{Only after level 6, we see that the number of tuples containing equal values in the consequent comprise over half of the satisfying tuples in the OFD.  This is somewhat expected, since the dependency becomes more constrained with an increasing number of antecedent attributes.}  By correctly recognizing these `erroneous' tuples as clean, we reduce computation and the manual burden for users to decipher through these falsely categorized errors.  

\subsection{Sense Selection Performance}
%Henceforth, we consider the case that a declarative set of OFDs $\Sigma$ exists.  
We now use the clinical dataset to evaluate the performance of our sense selection algorithm.

\noindent \textbf{Exp-6: Vary $|\lambda|$.}
Figure~\ref{fig:accsense)} shows the accuracy as we increase the number of senses $|\lambda|$.  Since every equivalence class is assigned a sense, we achieve 100\% recall, independent of the number of senses.  Precision decreases with more senses due to more available choices, but still remains above 80\%.  As we evaluate more senses, the runtimes increase linearly, as shown in Figure~\ref{fig:timesense}.  
%\fei{Use clinical dataset and vary the total number of senses available to measure the accuracy and runtime of selecting senses.  

\noindent \textbf{Exp-7: Vary $err\%$.} Figure~\ref{fig:accerr)} shows that precision declines linearly as the number of errors increases. It becomes more challenging to select the correct sense when multiple senses contain overlapping (erroneous) values. Figure~\ref{fig:timeerr} shows that runtimes increase as more candidate senses and refinements have to be evaluated to align  distributional deviations caused by an increasing number of errors.
%, and overlapping, errors.   
%shows that the precision and recall both decline as sense selection becomes more difficult.  However, our algorithm is able to maintain a \tbf and \tbf precision and recall accuracy even at \tbf error rate.  In 
%\fei{Using clinical dataset, vary the error rate and measure accuracy and runtime (2 figures, 1 accuracy + 1 runtime graph). }

\noindent \textbf{Exp-8: Vary $N$.}  We increase the number of tuples in the Clinical dataset up to 1M records.  This increases the number of equivalence classes, but our sense assignment strategy achieves over 90\% precision.  We found that the increase in $N$ did not impact the precision and recall accuracy (figure omitted for brevity).  With more equivalence classes, there is an increased likelihood of overlap, leading to longer runtimes to resolve shared errors, as shown in Table~\ref{tab:performance}.

%*********************************************
% OFDClean Figures
%*********************************************
\begin{figure*}[tb!]
	%\addtolength{\subfigcapskip}{-1ex}
	\captionsetup[subfloat]{justification=centering}
	%\begin{center}
	\centering
	%%%%%%%%%%%%%%%%%%%%%%%%%%%%%%%%%%
		\subfloat[
		\small{Accuracy vs. $|\lambda|$  }]{\label{fig:accsense)}
			{\includegraphics[width=3.2cm,height=2.8cm]{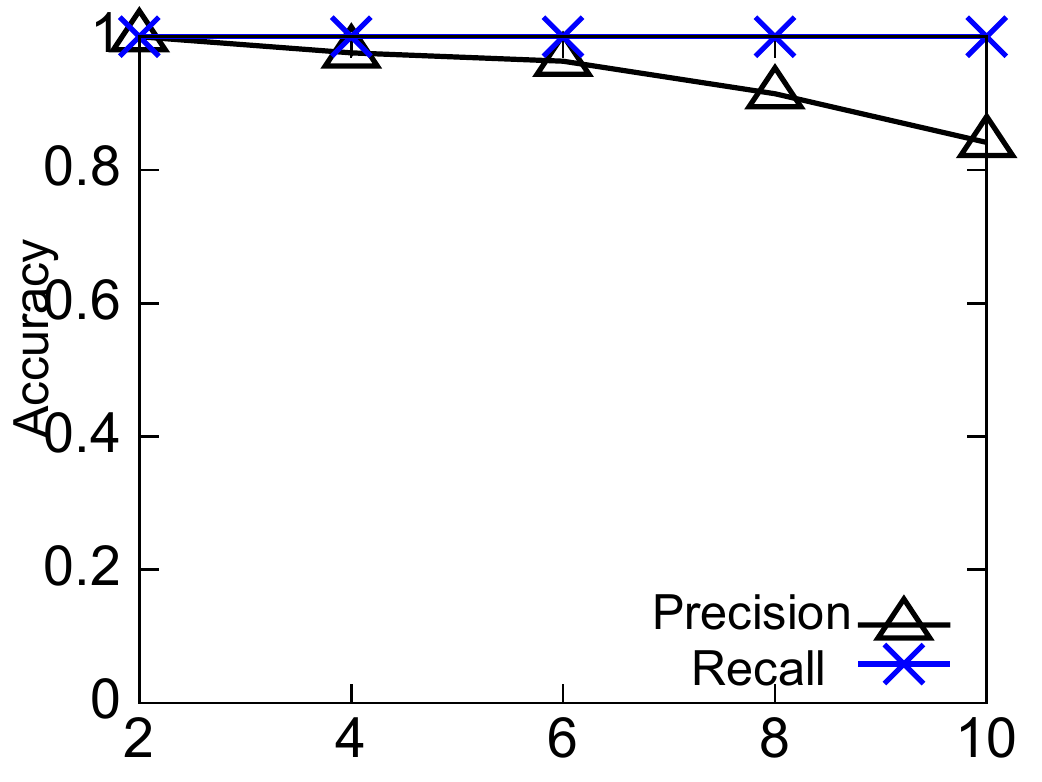}}}
			%%%%%%%%%%%%%%%%%%%%%%%%%%%%%%%%%
		\hfill\subfloat[
	\small{Time vs. $|\lambda|$ }]
		{\label{fig:timesense}
			{\includegraphics[width=3.2cm,height=2.8cm]{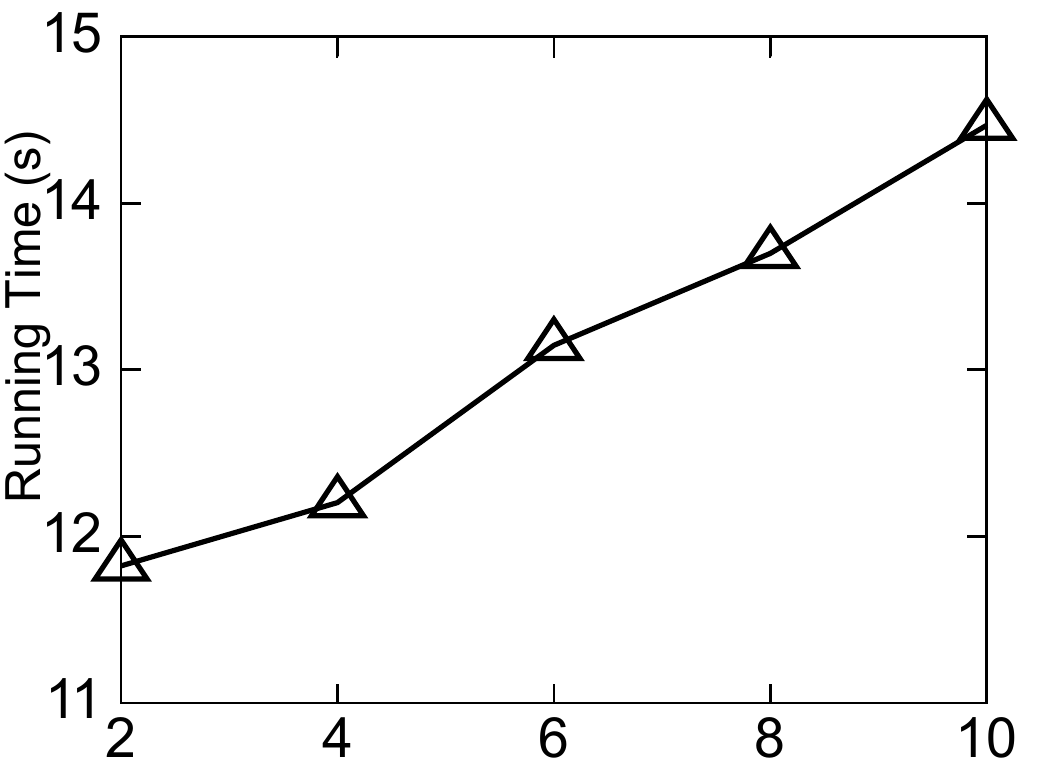}}}	
			%%%%%%%%%%%%%%%%%%%%%%%%%%%%%%%%%
		\hfill\subfloat[
	\small{Accuracy vs. $err\%$ }]{\label{fig:accerr)}
			{\includegraphics[width=3.2cm,height=2.8cm]{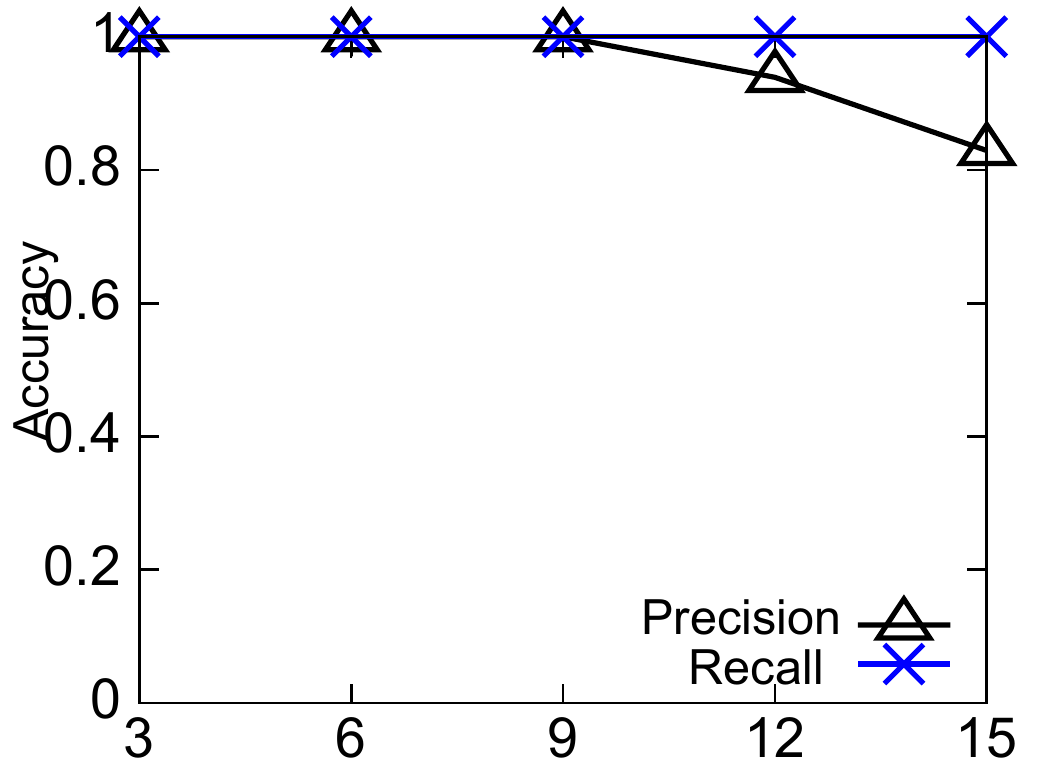}}}
			%%%%%%%%%%%%%%%%%%%%%%%%%%%%%%%%%			
		\hfill\subfloat[
	\small{Time vs. $err\%$ }]
	    {\label{fig:timeerr}
			{\includegraphics[width=3.2cm,height=2.8cm]{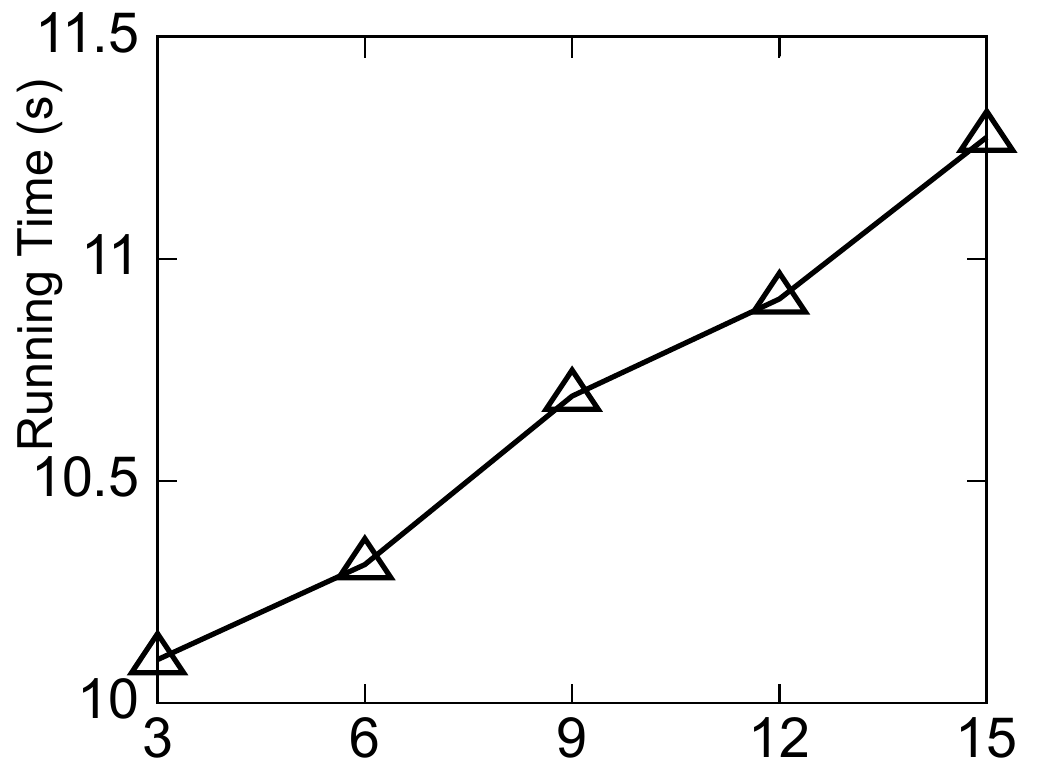}}}
	\vspace{-2ex}
	\caption{Sense assignment effectiveness
	(Clinical dataset). }\label{fig-efficiency}
	%\vspace{-2ex}
	%\vspace{-0.25cm}
\end{figure*}

 %*********************************************
\subsection{OFDClean Performance}
%We evaluate the accuracy and performance of \ofdclean in the presence of (i) increasing data errors, and ontology incompleteness; (ii) scalability for increasing data size; and (iii) comparative performance against existing baselines. 

\noindent\textbf{Exp-9: Vary beam size $b$.}
Figure~\ref{fig:bsacckiva} shows that as we evaluate more candidate ontology repairs, \eat{we increase the likelihood of finding better repairs,} we achieve higher precision and recall.  The incremental benefit becomes marginal once we have found the best repair, as reflected between $b = 4$ and $b = 5$.  Figure~\ref{fig:bsrtkiva} shows that the runtime increases exponentially due to the number of repair combinations that must be evaluated for increasing $b$.  To manage runtime costs in practice, the beam size can be tuned according to accuracy requirements, with initial tuning guidelines given in Section~\ref{sec:beamsearch}. 

\noindent\textbf{Exp-10: Vary $err\%$.}
For increasing error rates, Figure~\ref{fig:acckivaer} shows that accuracy declines due to overlapping values between antecedent and consequent values among multiple OFDs.  An update to attribute $A \in X_2$ for $\phi_1: X_1 \rightarrow A$,  $\phi_2: X_2 \rightarrow B$, changes the distribution of equivalence classes w.r.t. $X_2$, leading to lower recall and precision.
Figure~\ref{fig:rtkivaer} shows that runtimes increase as more errors are evaluated, and a larger space of repairs must be considered. 

\noindent\textbf{Exp-11: Vary $inc\%$.}
Figure~\ref{fig:inclincal} shows the repair accuracy as we vary the incompleteness rate, $inc\%$, 
%Figure~\ref{fig:inckiva} shows the accuracy as we vary the incompleteness rate on the kiva dataset.  
%NOTE: we can discuss here the results of the KIVA dataset for varying inc% as Zheng ran them already, although for space we don't show.  Results can be shown via website or longer paper.  Do mention it over Kiva.
\begin{wrapfigure}{r}{6.7cm}
\vspace{-4mm}
\begin{center}
\begin{minipage}[b]{0.45\linewidth}
\centering
\includegraphics[width=3.1cm]{./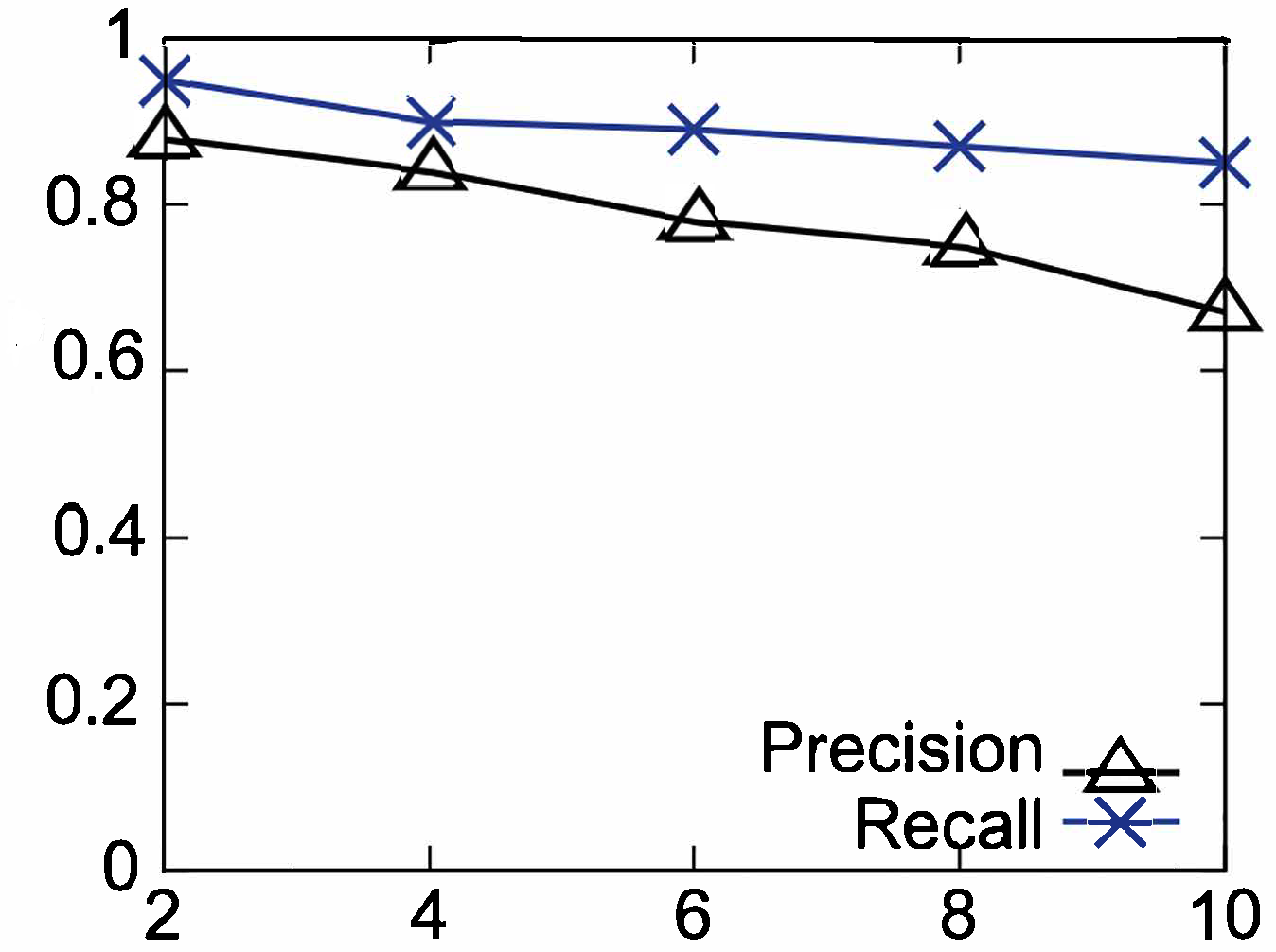}
\caption{Acc. vs. $inc\%$}
\label{fig:inclincal}
\end{minipage}
\hspace{-0.1cm}
\begin{minipage}[b]{0.45\linewidth}
\centering
\includegraphics[width=3.1cm]{./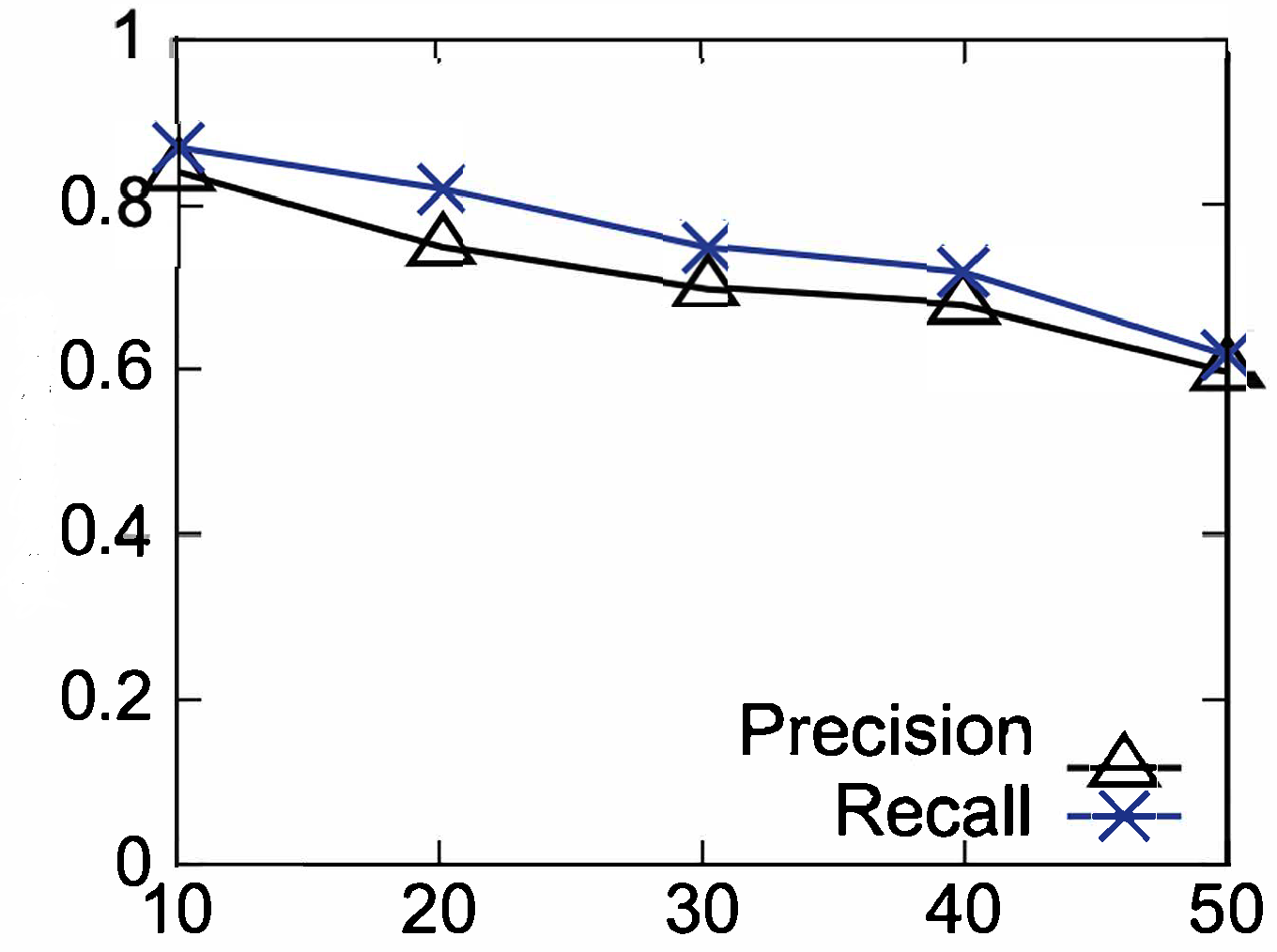}
\caption{Acc. vs. $|\Sigma|$.}
\label{fig:incsigma}
\end{minipage}
  \end{center}
\vspace{-6mm}
\end{wrapfigure}
which measures the percentage of values in $I$, 
but not in $S$, where such errors are resolved via ontology repairs.  As $inc\%$ increases, the precision declines, as some repair values are added to the wrong sense. Recall scores are more consistent, achieving above 85\%, with slight linear decline as error values are corrected by data updates. 

%Measure the runtime as you vary the percentage of values not covered by the ontology, i.e., the incompleteness rate of $S$, $inc\%$.  For example, if the ontology contains 100 values, and 5 data values are in the data but not in ontology, then this is 5\% incompleteness. (1 graph)
%***************************
\noindent \textbf{Exp-12: Vary $|\Sigma|$.}  Figure~\ref{fig:incsigma} shows that increasing the number of OFDs causes both precision and recall to decline as an increasing number of attributes overlap among the OFDs.  
\blue{\ofdclean resolves errors when attribute overlap occurs in the consequent attribute between two OFDs, i.e., $\phi_1: X_1 \rightarrow A_1$,  $\phi_2: X_2 \rightarrow A_2$, $A_1 = A_2$.  However, when errors arise w.r.t. $\phi_1$ due to an update in attribute $A_2 \in X_1$, we do not re-evaluate $\phi_1$ again.  We intend to explore this increased space of repairs in future work.}

% attribute(s) overlap makes selecting the right repair more difficult as several repair choices are possible.  Recall that \ofdclean \ currently handles interactions among OFDs that overlap on the consequent attribute only.  Hence, interactions between two OFDs $\phi_1, \phi_2$, which share an attribute in the antecedent and consequent, respectively, a change in equivalence classes w.r.t. $\phi_1$ due to an update in $\phi_2$, is not currently considered.  We intend to explore this in future work. 

\begin{figure*}[tb!]
	%\addtolength{\subfigcapskip}{-1ex}
	\captionsetup[subfloat]{justification=centering}
	%\begin{center}
	\centering
	%%%%%%%%%%%%%%%%%%%%%%%%%%%%%%%%%%
	\subfloat[\small{Accuracy vs. $b$ (Kiva).
		}]{\label{fig:bsacckiva}
			{\includegraphics[width=3.2cm,height=2.8cm]{./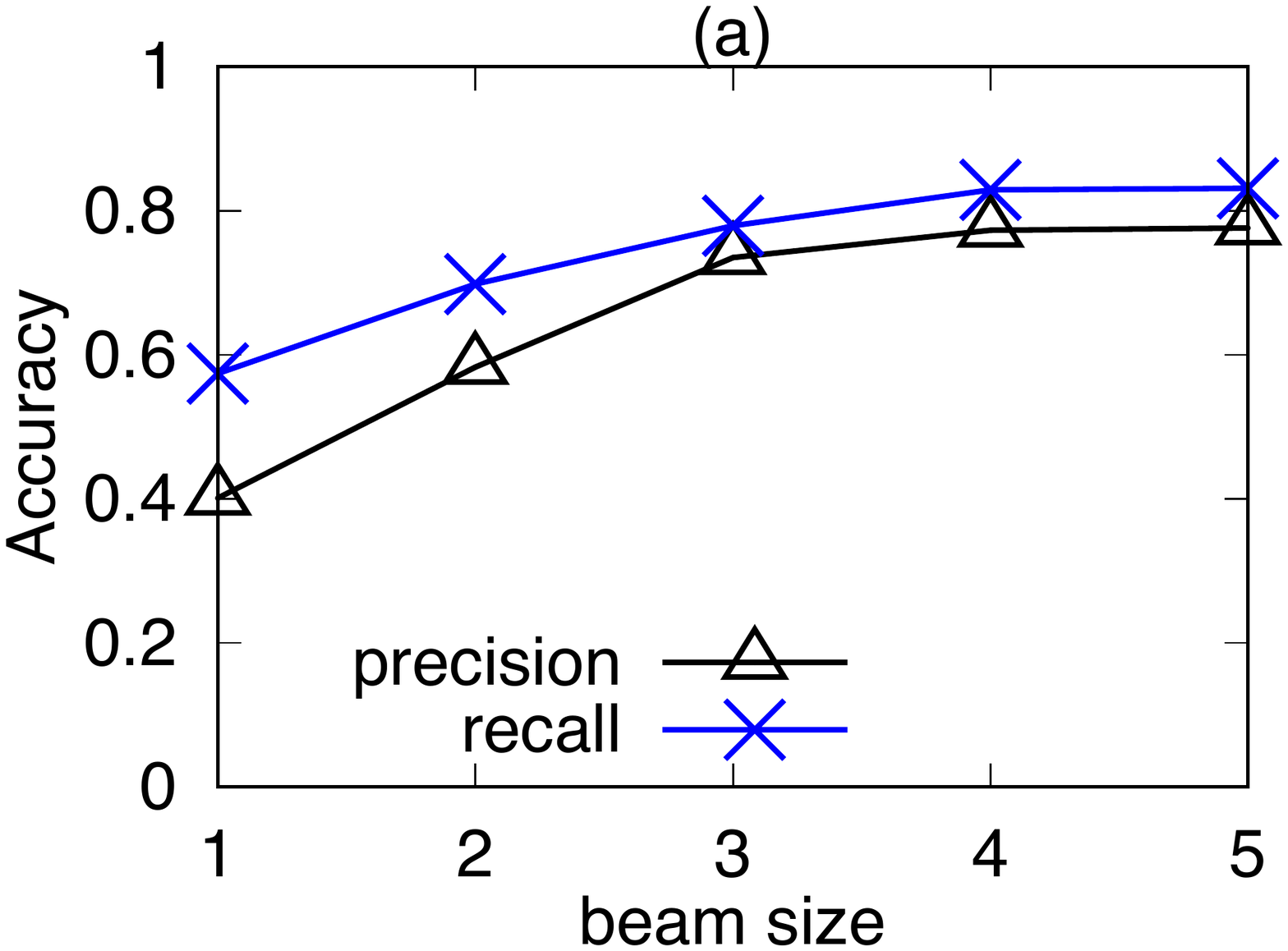}}}
			%%%%%%%%%%%%%%%%%%%%%%%%%%%%%%%%%
		\hfill\subfloat[
		\small{Runtime vs. $b$ (Kiva).} 
		]{\label{fig:bsrtkiva}
			{\includegraphics[width=3.2cm,height=2.8cm]{./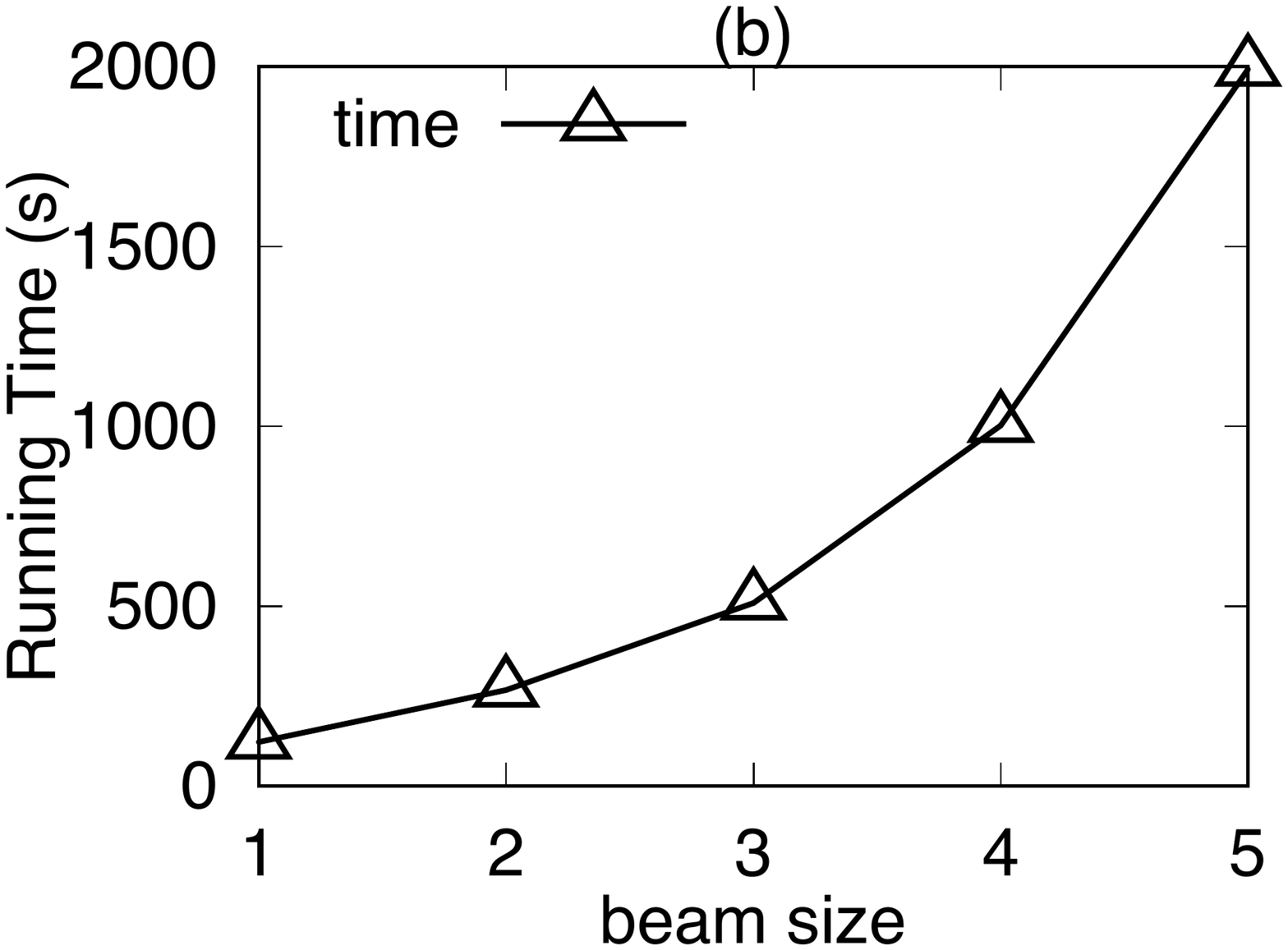}}}
			%%%%%%%%%%%%%%%%%%%%%%%%%%%%%%%%%
        \hfill \subfloat[
		\small{Acc. vs. err\% (Kiva).}]{\label{fig:acckivaer}
			{\includegraphics[width=3.2cm,height=2.8cm]{./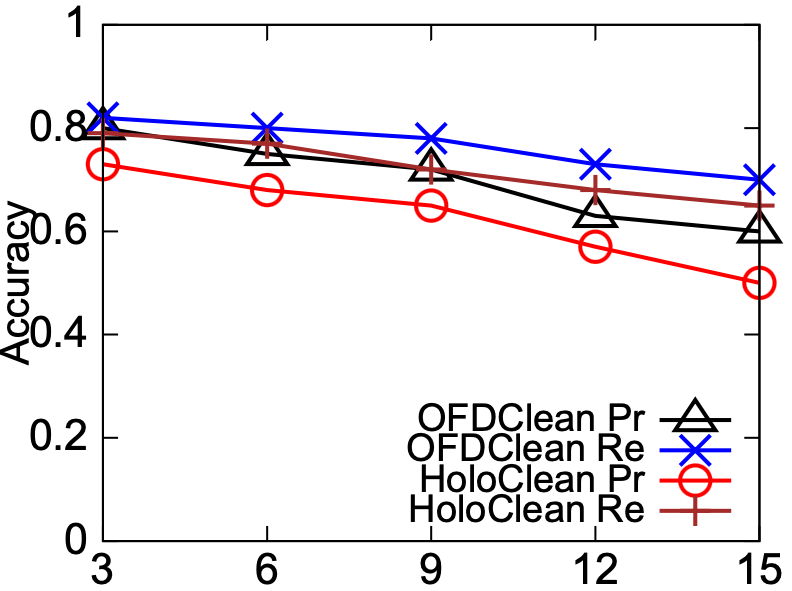}}}
			%%%%%%%%%%%%%%%%%%%%%%%%%%%%%%%%%
		\hfill\subfloat[\small{Time vs. err\% (Kiva).}]
		{\label{fig:rtkivaer}
			{\includegraphics[width=3.2cm,height=2.8cm]{./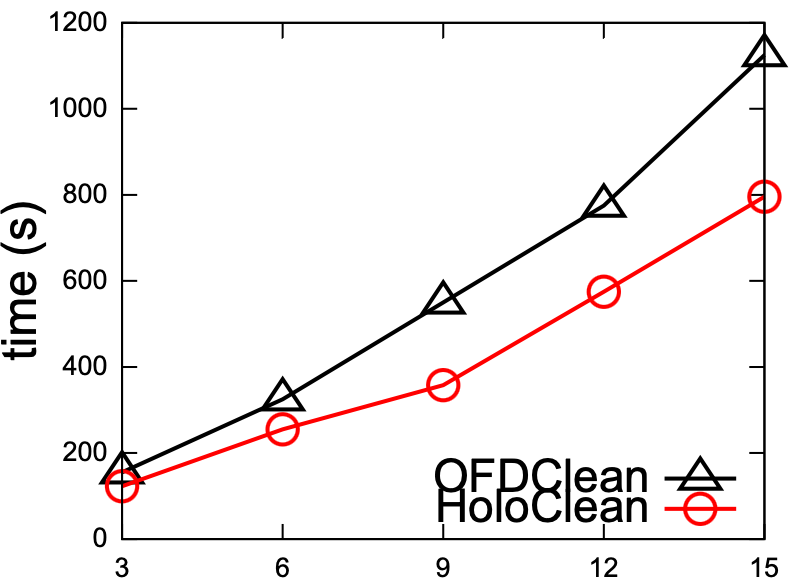}}}	
			%%%%%%%%%%%%%%%%%%%%%%%%%%%%%%%%%
	\vspace{-2ex}
	\caption{OFDClean Performance. }\label{fig-efficiency}
	\vspace{-0.35cm}
\end{figure*}

\noindent \textbf{Exp-13: Vary $N$.}
Increasing the number of tuples, $N$, does not significantly impact the repair accuracy, with an average variation of 1.4\% in precision (figure omitted for brevity).  Similarly, \ofdclean \ achieves linear runtime performance, as shown in Table~\ref{tab:runtime}.  
This runtime increase occurs due to the larger number of equivalance classes created by newly inserted domain values, and the overhead of evaluating ontology repair combinations. 

\noindent \textbf{Exp-14: Comparative Discussion.}
To the best of our knowledge, there are no ontology repair algorithms coupled with data repair, especially for sense selection.  We compare \ofdclean \ against HoloClean~\cite{holoclean}, which considers external information during repair selection.  Figure~\ref{fig:acckivaer} shows that \ofdclean \ achieves increased precision and recall over HoloClean by 7.4\% and 4.4\%, respectively.  By recognizing the same interpretation of different values via a sense,  \ofdclean \  selectively identifies errors and improves precision.  This comes at a cost in runtime (Figure~\ref{fig:rtkivaer}) since \ofdclean must evaluate a space of ontology repairs to find those that  minimize the number of data repairs.   
%\ofdclean seeks ontology repairs that minimize the number of data repairs.  This comes at a cost in runtime, as \ofdclean \ incurs longer runtimes due to the increased search space of (ontology) repairs (Figure~\ref{fig:rtkivaer}). 

%\textbf{\blue{Exp-8: Vary $|\lambda|$.}}
%\blue{Use two real datasets and vary the total number of senses available to measure the accuracy and runtime of selecting senses.  
%Generate two graphs (one for accuracy and one for runtime) as you vary $|\lambda|$.}

\section{Related Work}
\label{sec:rw}
%Our work finds similarities to dependency discovery in databases, and to defining integrity constraints over ontologies, and graphs.
%We first discuss the relationship between OFDs and two types of dependencies used in data cleaning (FDs, and metric FDs), and then discuss related work.

% Dependency discovery over relations, temporal dbs, and ontologies
\textbf{Data Dependencies.}  Our work is similar to ontological dependencies and relaxed notions of FDs.  
\eat{OFDs combine ontologies with data dependencies to enforce data integrity.  Earlier techniques have proposed FDs over RDF triples based on the co-occurrence of values.  However, these FDs do not consider structural requirements to specify which entities should carry the values \cite{ACP10,HGPW16}.}   Motik et al.\ define OWL-based integrity constraints where the OWL ontologies are incomplete~\cite{MHS07}.  They study how inclusion dependencies and domain constraints can be used to check for missing and valid domain values within an ontology.  \eat{The proposed constraints do not model functional dependencies (as considered in this work) since the focus is on data completeness.}   
%For example, without a sense, the term  `\texttt{jaguar}' is synonymous with `\texttt{Mercedes}' and `\texttt{tiger}'.  %However, an application would have an intended interpretation (sense) for `\texttt{jaguar}' either as a vehicle or as an animal.
\emph{Ontological Graph Keys (OGKs)} incorporate an event pattern defined on an entity to identify instances w.r.t. concept similarity in an ontology~\cite{MAWC19}.  OGKs characterize entity equivalence using ontological similarity.   Our work shares a similar spirit to leverage ontological relationships and concepts. However, existing techniques do not consider the notion of senses to enable multiple interpretations of values in an ontology.  

%This states how a dependency should be interpreted, since multiple interpretations are possible for a given ontology; these interpretations are not considered in existing techniques.

Past work has proposed relaxed notions of FDs including strong and weak FDs~\cite{LL98}, and null FDs (NFDs) defined over incomplete relations (containing null values)~\cite{Lien82}.  A weak FD holds over a relation if there exists a possible world (instance) when an unknown value is updated to a non-null value.  \eat{Strong FDs hold when the dependency holds for all possible worlds when an unknown value is updated to a non-null.} OFDs are similar to weak FDs when consequent values exist in the data but not in the ontology.   We have shown that despite having equivalent axiom systems, the semantics of OFDs differ from NFDs, and verification of OFDs must be done over equivalence classes w.r.t. left-hand-side attributes of the OFD versus pairs of tuples for NFDs.

While OFD and NFD discovery can be modeled via a set-containment lattice, OFD discovery validates, for each equivalence class, whether there is a non-empty intersection of the senses.  This incurs increased complexity during verification based on the number of synonyms and senses in the ontology.  Computing data repairs to achieve consistency w.r.t. OFDs requires finding a sense that minimizes a cost function, e.g., the number of updates to achieve consistency, where interactions between OFDs, and finding the best sense, both incur additional challenges beyond NFDs.  \reviseOne{Recent work has shown that the parameterized complexity of detecting errors w.r.t. FDs with left-hand size at most $k$ is W[2]-complete~\cite{BFS22}.  An interesting avenue of future work is to explore whether this complexity extends to OFD detection.}
%By coupling these two aspects, along with synonym and inheritance ontological relationships that occur in practice, the authors proposed the \fastofd \ OFD mining algorithm 
%In this paper, we build upon our earlier work, which focused on OFD reasoning and mining problems, towards how to identify and correct inconsistencies in the data and ontology w.r.t. a set of OFDs.   We study the problem of finding a sense (an interpretation of the data) that will lead to minimal repair candidates.  

% removed for journal version, after "in advance"
%\eatNTR{and then enforce the requirement that for any property $A$ of $x$ must also be true for $y$, i.e., $x.A = y.A$}

%For example, in Table \ref{tab:example}, tuples $t_{5}$ and $t_{6}$ indicate that `\texttt{nausea}' and `\texttt{migraine}' symptoms can be treated with medication `\texttt{tylenol}', which is-a `\texttt{acetaminophen}'.  
%We identify attribute relationships that go beyond equality (i.e., synonyms and inheritance).  In contrast to keys, our discovered dependencies are value based (no variables are present).  In our work, we consider the notion of senses that states how a dependency should be interpreted, since multiple interpretations are possible for a given ontology; these interpretations are not considered in existing techniques.  Lastly,  we study the axiomatization and inference of synonym and inheritance relationships in OFDs, which were not studied in previous work.

\noindent \textbf{Constraint-based Data Cleaning.}  
%Constraint based data cleaning solutions have used 
FDs have served as a benchmark to propose data repairs such that the data and the FDs are consistent \cite{BFFR05,PSC15,CIP13a}, with recent extensions to limit disclosure due to data privacy requirements~\cite{CG18,HMC18}.   Relaxed notions of equality in FDs have been proposed by using similarity and matching functions to identify errors, and propose repairs \cite{BKL11,HTL+17}.  While our work is in similar spirit,  we differ in the following ways: (1) the similarity functions match values based only on syntactic string similarity; \eat{using functions such as edit-distance, Jaccard, and Euclidean distance.  Hence, values such as  'India' and 'Bharat' would be dissimilar;} and (2) our cleaning algorithms directly use the notion of senses to enable  similarity under multiple interpretations.  
\reviseOne{Recent work has studied the complexity of computing optimal subset repairs with tuple deletions and updates w.r.t. FDs~\cite{LKR20,MCLG20}.  This  includes a polynomial-time algorithm over a subclass of FDs with tuple deletion~\cite{LKR20}, and approximating optimal repairs within a constant factor less than 2~\cite{MCLG20}.  Studying how these results can be applied to optimal repairs w.r.t. synonym OFDs and tuples updates is an interesting next step.}

\noindent \textbf{Holistic Data Cleaning.}
Probabilistic approaches have shown promise to learn from given attribute templates and training samples to infer clean values~\cite{ holoclean}. Existing systems either use a broad set of constraints~\cite{GMPS13}, or bound repairs according to maximum likelihood~\cite{YBE13}.  \eat{The LLUNATIC framework supports more holistic data cleaning by using a general form of equality generating dependencies (EGDs) covering a broad set of dependencies~\cite{GMPS13}.  The SCARED approach learns from clean portions of the data to find bounded repairs according to maximum likelihood~\cite{YBE13}.}  These techniques capture context either via external sources or additional statistics.  We advocate for a deeper integration of context into the data cleaning framework so that ``inconsistencies'' are not flagged in the first place.  KATARA~\cite{CMI+15} includes simple patterns from ontologies such as ``France'' hasCapital ``Paris'', but does not integrate ontologies into the definition of integrity constraints.  Context takes a central role in BARAN, which defines a set of error corrector models with a context-aware data representation to improve precision~\cite{MahdaviA20}.  Incorporating these models into \ofdclean and resolving conflicts between contextual models and ontologies are interesting avenues of future work.

\noindent \textbf{Knowledge Base Incompleteness.}
Computational fact checking against Knowledge Bases (KBs) has emerged in recent years to automatically verify facts from different domains.  The main problem in fact checking with KBs is that the reference information may be incomplete.  That is, under an Open World Assumption (OWA), we cannot assume that all facts in a KB are complete; facts not in a KB may be false or just missing~\cite{HP18}.  In our work, we do not make such open assumptions.  To improve KB quality, the RuDiK system discovers positive rules to mitigate incompleteness (e.g., if two persons have the same parent, then they are siblings), and negative rules to detect errors (e.g., if two persons are married, one cannot be a child of the other)~\cite{OMP18}.  Incorporating such inference into \ofdclean as a new ontology repair operation will enrich existing ontological relationships.

\section{Conclusions}
\label{sec:conclusion}

Ontology Functional Dependencies (OFDs) capture domain relationships found in ontologies.  We studied fundamental problems for OFDs, and proposed the \fastofd algorithm to discover a minimal set of OFDs.  We also studied the repair problem with respect to a set of OFDs, and proposed an algorithm to find a sense for each equivalence class to provide an interpretation for each OFD against the data.  Our framework, \ofdclean, combines sense selection with ontology and data repair such that the total number of updates to the data and to the ontology is minimized.  Our experiments showed that our algorithms are effective, as OFDs are useful data quality rules to capture domain relationships, and significantly reduce the number of false positive errors in data cleaning solutions that rely on traditional FDs.   \reviseOne{As next steps, we intend to study synonym relationships in antecedent attributes}, and explore cleaning with respect to inheritance OFDs, which represent hierarchical relationships among data values.

%consider extensions to other relationships such as \emph{component-of}, and the use of ontologies to discover other types of data quality rules such as conditional FDs and denial constraints.  
%While this will increase the search space of the OFDs discovery, it will extend the applicability over real-world applications.
%\vspace{-0.1cm}
%  Our dependency discovery algorithms identify attribute values satisfying synonym and inheritance relationships in a relation.  Our experiments show that our algorithms can be applied to data quality tools that enforce a broad set of attribute relationships w.r.t. an ontology. 

%We also intend to investigate ranking techniques that will enable users to identify the most relevant OFDs for data cleaning. 

%%
%% The acknowledgments section is defined using the "acks" environment
%% (and NOT an unnumbered section). This ensures the proper
%% identification of the section in the article metadata, and the
%% consistent spelling of the heading.
%\begin{acks}
%The authors would like to thank Morteza Alipour Langouri for insightful discussions.
%\end{acks}

%%
%% The next two lines define the bibliography style to be used, and
%% the bibliography file.
\bibliographystyle{ACM-Reference-Format}
\bibliography{ref}

% response letter
%\input{response}

%%
%% If your work has an appendix, this is the place to put it.

\end{document}